\documentclass[10pt,twocolumn,twoside]{IEEEtran}

\usepackage{amsmath,amssymb,amsthm}
\usepackage{algorithmic}
\usepackage{amsmath,graphicx}
\usepackage{pgf,tikz,pgfplots}
\usepackage{bm}
\usepackage{color}
\usepackage{float}
\usepackage{relsize}
\usepackage{exscale}

\usepackage{array}
\usepackage{eqparbox}
\renewcommand\algorithmiccomment[1]{%
	~\hfill$\triangleright$~\eqparbox{COMMENT}{#1}%
}

\usepackage{subfigure}
\usepackage{cases}
\usepackage{tabularx}

\usepackage{multicol}

\usepackage{hyperref}
\hypersetup{
	colorlinks=true,
	linkcolor=blue,
	filecolor=magenta,
	urlcolor=cyan,
	citecolor=blue
}
\usetikzlibrary{arrows,decorations.markings,automata}
\usepackage{caption}
\captionsetup[table]{name=Table}
\usepackage{xspace}
\usepackage{cite}

\usepackage{makecell}
\usepackage{tabularx}
\usepackage{array}

\newtheorem{definition}{Definition}

\newcommand{\CL}{\mathcal{L}}
\newcommand{\V}{\mathcal{V}}
\newcommand{\I}{\mathcal{I}}
\newcommand{\F}{\mathcal{F}}
\newcommand{\K}{\mathcal{K}}
\newcommand{\C}{\mathcal{C}}

\newcommand{\CP}{\mathcal{P}}

\DeclareMathOperator{\conv}{conv}

\def\longpaper{1}

\newcommand{\bx}{\boldsymbol{x}}

\newcommand{\br}{\boldsymbol{r}}

\newcommand{\by}{\boldsymbol{y}}
\newcommand{\bz}{\boldsymbol{z}}
\newcommand{\btheta}{\boldsymbol{\theta}}

\newcommand{\network}{fish-7-10}
\newcommand{\figDIR}{figs}
\newcommand{\figuresize}{1.6in}

\def\x{\boldsymbol{x}}
\def\y{\boldsymbol{y}}

\def\z{\boldsymbol{z}}

\def\r{\boldsymbol{r}}
\def\btheta{\boldsymbol{\theta}}

\newcommand{\rev}[1]{{\color{black}{#1}}}
\newcommand{\LP}{\text{LP}\xspace}
\newcommand{\NLP}{\text{NLP}\xspace}
\newcommand{\CG}{\text{CG}\xspace}
\newcommand{\cCG}{\text{cCG}\xspace}
\newcommand{\vcCG}{\text{cCG'}\xspace}
\newcommand{\EXACT}{\text{EXACT}\xspace}

\newcommand{\LPoR}{\text{LPoR}\xspace}
\newcommand{\LPdR}{\text{LPdR}\xspace}

\newcommand{\NS}{\text{NS}\xspace}
\newcommand{\MILP}{\text{MILP}\xspace}
\newcommand{\MINLP}{\text{MINLP}\xspace}

\renewcommand{\arraystretch}{1.1}

\definecolor{ududff}{rgb}{0.30196078431372547,0.30196078431372547,1.}
\usepackage{algorithm,algorithmic}
%
%
\usepackage{cleveref}
\newtheorem{theorem}{Theorem}

\newtheorem{lemma}[theorem]{Lemma}

\usepackage{todonotes}

\usepackage{tabularx}

\crefname{theorem}{Theorem}{Theorems}
\crefname{example}{Example}{Examples}
\crefname{observation}{Observation}{Observations}
\crefname{remark}{Remark}{Remarks}
\crefname{proposition}{Proposition}{Propositions}
\crefname{lemma}{Lemma}{Lemmas}
\crefname{corollary}{Corollary}{Corollaries}
\crefname{fact}{Fact}{Facts}
\crefname{algocf}{Algorithm}{Algorithms}	
\crefname{table}{Table}{Tables}	
\crefname{figure}{Fig.}{Figs.}
\crefname{algorithm}{Algorithm}{Algorithms}
\crefname{section}{Section}{Sections}
\newcolumntype{C}[1]{>{\centering\arraybackslash}m{#1}}

\makeatletter
\newcommand\Label[1]{&\refstepcounter{equation}(\theequation)\ltx@label{#1}&}
\makeatother


\hyphenation{op-tical net-works semi-conduc-tor CQP}

\begin{document}
\title{QoS-Aware and Routing-Flexible Network Slicing for Service-Oriented Networks\thanks{Part of this work \cite{Chen2022} has been presented at the 2022 IEEE International Conference on Acoustics, Speech and Signal Processing  (ICASSP), Singapore, Singapore, 22–27 May, 2022.}}
\author{\IEEEauthorblockN{Wei-Kun Chen, Ya-Feng Liu, Yu-Hong Dai, and Zhi-Quan Luo}
	\thanks{W.-K. Chen is with the School of Mathematics and Statistics/Beijing Key Laboratory on MCAACI, Beijing Institute of Technology, Beijing 100081, China (e-mail: chenweikun@bit.edu.cn).
		\rev{Y.-F. Liu is with the Ministry of Education Key Laboratory of Mathematics and Information Networks, School of Mathematical Sciences, Beijing
		University of Posts and Telecommunications, Beijing 102206, China (email:
		yafengliu@bupt.edu.cn).}
		{Y.-H. Dai} is with the State Key Laboratory
		of Scientific and Engineering Computing, Institute of Computational Mathematics and Scientific/Engineering Computing, Academy of Mathematics and Systems Science, Chinese Academy of Sciences, Beijing 100190, China (e-mail:  dyh@lsec.cc.ac.cn).
		Z.-Q. Luo is with the Shenzhen Research Institute of Big Data and The Chinese University of Hong Kong, Shenzhen 518172, China (e-mail: luozq@cuhk.edu.cn).
	}
}

\maketitle

\begin{abstract}
	In this paper, we consider the network slicing (\NS) problem which {aims to} map multiple customized virtual network requests (also called services) to a common shared network infrastructure and manage network resources to meet diverse quality of service (QoS) requirements.
	We propose a mixed-integer nonlinear programming (MINLP) formulation for the considered NS problem that can flexibly route the traffic flow of the services on multiple paths and provide end-to-end delay and reliability guarantees for all services. 
	To overcome the computational difficulty due to the intrinsic nonlinearity in the MINLP formulation, 
    we transform the \MINLP formulation into an equivalent mixed-integer linear programming
	(\MILP) formulation and further show that their continuous relaxations are equivalent.
	In sharp contrast to the continuous relaxation of the \MINLP formulation which is a nonconvex nonlinear programming problem, the continuous relaxation of the \MILP formulation is a polynomial-time solvable linear programming problem, which significantly facilitates the algorithmic design.
	Based on the newly proposed \MILP formulation, we develop a customized column generation (\cCG) algorithm for solving the \NS problem.
	The proposed \cCG algorithm is a decomposition-based  algorithm and is particularly suitable for solving large-scale \NS problems.
	Numerical results demonstrate the efficacy of the proposed formulations and the proposed \cCG algorithm.
\end{abstract}
\begin{IEEEkeywords}
	Column generation, flexible routing, mixed-integer linear programming, network slicing, QoS constraints.
\end{IEEEkeywords}

\section{Introduction}
\label{sec:intro}

Network function virtualization (NFV) is one of the key technologies for the fifth generation (5G) and next-generation 6G communication networks \cite{Mijumbi2016,Vassilaras2017}.
Different from traditional networks in which service requests (e.g., high-dimensional video, remote robotic surgery, autonomous driving, and machine control) are implemented by dedicated hardware in fixed locations, NFV-enabled networks efficiently leverage virtualization technologies to configure some specific cloud nodes in the network to process network service functions on-demand, and then establish a customized virtual network for all service requests. 
However, since virtual network functions (VNFs) of all services are implemented over a single shared cloud
network infrastructure, it is crucial to efficiently allocate network (e.g., computing and communication) resources subject to diverse quality of service (QoS) requirements of all services and capacity constraints of all cloud nodes and links in the network.
This resource allocation problem in the service-oriented network is called \emph{network slicing} (\NS) in the literature, which jointly considers the VNF placement (that maps VNFs into cloud nodes in the network) and the traffic routing (that finds paths connecting two cloud nodes which perform two adjacent VNFs in the network).
This paper focuses on the mathematical optimization formulation and the solution approach of the \NS problem.

\rev{When establishing problem formulations and developing solution approaches for the \NS problem, two key factors,  that well reflect the practical application of the NS problem,  must be considered.
First, many services are mission-critical and have very stringent QoS requirements, including the E2E delay and reliability requirements. 
For example, to guarantee robust real-time interactions between the surgeon and the remote surgical robot, tele-surgery services require an E2E delay of up to 1 ms and an E2E reliability of up to $99.9999999\%$ \cite{Kim2019}.
Therefore, to ensure guaranteed E2E delay and reliability of the services, it is important to explicitly enforce the related constraints in the problem formulations.
The works 	\cite{Chen2021a,Chen2023,Promwongsa2020,Addis2015,Oljira2017,Luizelli2015,Woldeyohannes2018,Mohammadkhan2015,Vizarreta2017,Qu2017,Li2024,Dobreff2025} developed mathematical formulations that explicitly enforce the E2E delay constraints of the services
while the works \cite{Guerzoni2014,Yeow2010,Karimzadeh-Farshbafan2020,Vizarreta2017,Qu2017,Li2024,Dobreff2025} developed mathematical formulations that explicitly enforce the E2E reliability constraints of the services.}
\rev{Second, as the communication resource (i.e., link capacity) is limited, a flexible traffic routing strategy is usually needed  to ensure a high-quality solution of the NS problem.
In the literature, to simplify the solution approaches,  various simplified traffic routing strategies have been proposed, including predetermined path routing  (i.e., selecting the paths from a predetermined path set) \cite{Luizelli2015,Jiang2012,Guo2011,Jarray2012},
limited number of hops routing \cite{Jarray2015}, 
and single-path routing \cite{Mijumbi2015,Liu2017,Addis2015,Woldeyohannes2018,Vizarreta2017,Oljira2017,Mohammadkhan2015,Qu2017,Li2024,Dobreff2025}. 
The works \cite{Zhang2017,Chowdhury2012,Chen2023b,Chen2021a,Chen2023,Promwongsa2020} developed formulations that allow the traffic flow to be distributed across arbitrary paths within the underlying network, thereby fully utilizing the flexibility of traffic routing.}

\begin{table}
	\renewcommand{\arraystretch}{1.3}
	\setlength{\tabcolsep}{3pt}
	\centering
	\caption{A summary of state-of-the-art mathematical optimization formulations of the \NS problem.}
	\label{works}
	\begin{tabular}{|c|c|c|c|c|}
		\hline
		{Works}& \makecell{Flexible\\routing} &  \makecell{Guaranteed\\E2E Delay} & \makecell{Guaranteed\\E2E Reliability}
		\\\hline
		\cite{Zhang2017,Chowdhury2012,Chen2023b} & \checkmark &-- & -- 
		\\\hline
		\cite{Chen2021a,Chen2023,Promwongsa2020} & \checkmark &\checkmark & --
		\\\hline
		\cite{Addis2015,Oljira2017,Luizelli2015,Woldeyohannes2018,Mohammadkhan2015} & -- &\checkmark & --
		\\\hline
		\makecell{\cite{Guerzoni2014,Yeow2010,Karimzadeh-Farshbafan2020}} & -- &-- & \checkmark
		\\\hline
		\cite{Vizarreta2017,Qu2017,Li2024,Dobreff2025} & -- &\checkmark & \checkmark
		\\\hline
		\makecell{\cite{Mijumbi2015,Liu2017,Jiang2012,Guo2011,Jarray2012,Jarray2015}}& -- &-- & --
		\\\hline
		\makecell{Our work}&   \checkmark & \checkmark &  \checkmark \\\hline
	\end{tabular}
	\vspace*{-0.4cm}
\end{table}

In \cref{works}, we summarize various mathematical optimization formulations of the \NS problem. 
These formulations are different in terms of their routing strategies and E2E QoS guarantees.
As can be clearly seen from \cref{works}, for the \NS problem, none of the existing formulations/works  simultaneously takes all of the above practical factors (\rev{i.e.}, flexible routing and E2E delay and reliability requirements) into consideration.
The motivation of this paper is to provide a mathematical optimization formulation of the \NS problem that simultaneously allows the traffic flows to be flexibly transmitted on (possibly) multiple paths and satisfies the E2E delay and reliability requirements of all services, and to develop an efficient algorithm for solving it.


\subsection{Our Contributions}

In this paper, we first propose a QoS-aware and routing-flexible formulation for the \NS problem and then develop an efficient algorithm for solving the formulated problem. 
The main contributions of this paper are summarized as follows.

\begin{itemize}
	\item \emph{A new MINLP formulation}: 
	We first propose a mixed-integer nonlinear programming  (\MINLP) formulation for the \NS problem, 
	which minimizes network resource consumption subject to the capacity constraints of all cloud nodes and links and the E2E delay and reliability constraints of all services.
	Compared with the existing state-of-the-art  formulations  in \cite{Chen2023} and \cite{Vizarreta2017}, the proposed formulation allows to flexibly route the traffic flow of the services on multiple paths and provides E2E delay and reliability guarantees for all services.
	\item 
	\emph{A computationally efficient MILP reformulation}: Due to the intrinsic nonlinearity, the MINLP formulation appears to be computationally difficult. 
	To overcome this difficulty, we transform the MINLP formulation into an equivalent  mixed-integer linear programming (\MILP) formulation and show that their continuous relaxations are equivalent (in terms of sharing the same optimal values).
	However, different from the continuous relaxation of the \MINLP formulation which is a nonconvex nonlinear programming (\NLP) problem, the continuous relaxation of the \MILP formulation is a polynomial-time solvable linear programming (\LP) problem, which is amenable to the algorithmic design.
	The key to the successful transformation is a novel way of rewriting \emph{nonlinear} flow conservation constraints for traffic routing as \emph{linear} constraints.
	\item \emph{An efficient  \cCG algorithm}: We develop an {efficient} customized \CG (\cCG) algorithm based on the \MILP formulation. 
	The proposed \cCG algorithm is a two-stage algorithm: the first stage decomposes the large-scale \NS problem into \LP and \MILP subproblems of much smaller sizes, and iteratively solves them to collect effective embedding patterns for each service;
	the second stage finds a high-quality solution for the \NS problem by solving a small \MILP problem based on embedding patterns obtained in the first stage.
	Theoretically, the  proposed \cCG algorithm can be interpreted as the \LP relaxation rounding algorithm equipped with a strong \LP relaxation and an optimal rounding strategy \cite{Berthold2013}. 
	This theoretical interpretation, together with its decomposition nature, makes the proposed \cCG algorithm well-suited for tackling large-scale \NS problems and for finding high-quality solutions. 
\end{itemize}

Extensive computational results demonstrate the effectiveness of the proposed formulations and the efficiency of the proposed \cCG algorithm.
More specifically, our computational results show: (i) the proposed \MINLP/\MILP formulations are more effective than the existing formulations in \cite{Chen2023} and \cite{Vizarreta2017} in terms of the solution quality; 
(ii) the proposed \MILP formulation is much more computationally efficient than the \MINLP formulation; 
(iii) compared with calling general-purpose \MILP solvers (e.g., CPLEX) for solving the \MILP formulation \rev{and state-of-the-art algorithms in \cite{Chen2023} and \cite{Chowdhury2012}}, the proposed \cCG algorithm 
can find high-quality solutions for large-scale \NS problems within significantly less computational time.

{In our previous work \cite{Chen2022}, we presented the \MILP and \MINLP formulations for solving the \NS problem,
which corresponds to \cref{sect:modelformulation} and part of \cref{subsect:MILP} in this paper. 
Compared to its conference version \cite{Chen2022}, 
this paper provides a rigorous theoretical analysis on the relation between the \MILP and \MINLP formulations, which provides important insights into why the performance of solving the \MILP formulation is better than that of solving the  \MINLP formulation.
More importantly, this paper develops an efficient \cCG algorithm for solving large-scale \NS problems, which is completely new.}

This paper is organized as follows. 
\rev{\cref{sect:literature} reviews the related works.}
\cref{sect:modelformulation} introduces the system model and provides an \MINLP formulation for the \NS problem.
\cref{sect:MILP} transforms the \MINLP formulation into an equivalent \MILP formulation. 
\cref{sect:CG} develops a \cCG algorithm based on the newly proposed \MILP formulation.
\cref{sect:numres} reports the computational results. 
Finally, \cref{sect:conclusion} draws the conclusion.

\section{Literature Review}
\label{sect:literature}
In recent years, there are considerable works on the \NS problem and its variants; see \cite{Zhang2017,Chowdhury2012,Mijumbi2015,Liu2017,Chen2023b,Chen2021a,Addis2015,Chen2023,Promwongsa2020,Luizelli2015,Woldeyohannes2018,Guerzoni2014,Yeow2010,Jiang2012,Guo2011,Jarray2015,Jarray2012,Vizarreta2017,Oljira2017,Mohammadkhan2015,Qu2017,Karimzadeh-Farshbafan2020,Li2024,Dobreff2025} 
and the references therein.
In particular, for the \NS problem with a limited network resource constraint, the works \cite{Zhang2017,Chowdhury2012}  proposed \LP relaxation rounding algorithms to find a feasible solution of the formulated \MILP formulation; 
the works \cite{Mijumbi2015,Liu2017} developed column generation (\CG) algorithms to find a feasible solution;
the work \cite{Chen2023b} proposed a Benders decomposition  algorithm to find a global optimal solution.
However, the formulations in  \cite{Zhang2017,Chowdhury2012,Mijumbi2015,Liu2017,Chen2023b}  neither considered E2E delay nor E2E reliability constraints of the services, which are two key design considerations in the shared cloud network infrastructure \cite{Jin2024,Agyapong2014}.
The works \cite{Chen2021a,Addis2015,Oljira2017,Chen2023,Promwongsa2020,Luizelli2015,Woldeyohannes2018,Mohammadkhan2015}  
incorporated the E2E delay constraints of the services into their formulations and solved the problem by off-the-shelf \MILP solvers \cite{Chen2021a,Addis2015,Oljira2017} and heuristic algorithms such as \LP relaxation \rev{dynamic} rounding algorithm \cite{Chen2023}, Tabu search \cite{Promwongsa2020}, binary search \cite{Luizelli2015}, greedy heuristic \cite{Mohammadkhan2015}, and the so-called ClusPR algorithm \cite{Woldeyohannes2018}.
The works \cite{Guerzoni2014,Yeow2010,Karimzadeh-Farshbafan2020} took into account of the E2E reliability constraints of the services in their formulations and solved the problem using off-the-shelf \MILP solvers. 
However, the aforementioned works still either did not consider the E2E reliability constraints of the services (e.g., \cite{Chen2021a,Addis2015,Oljira2017,Chen2023,Promwongsa2020,Luizelli2015,Mohammadkhan2015,Woldeyohannes2018}) 
or did not consider the E2E delay constraints of the services (e.g., \cite{Guerzoni2014,Yeow2010,Karimzadeh-Farshbafan2020}).
Obviously, formulations that do not consider E2E delay or E2E reliability constraints may return a solution that violates these important QoS requirements.

For the traffic routing (that finds paths connecting two cloud nodes) of the \NS problem, 
the works \cite{Luizelli2015,Jiang2012,Guo2011,Jarray2012} simplified {the routing strategy by selecting paths from a predetermined set of paths}, 
the works \cite{Jarray2015} limited the maximum number of hops in the flows to be a prespecified number, and the works \cite{Mijumbi2015,Liu2017,Addis2015,Woldeyohannes2018,Vizarreta2017,Oljira2017,Mohammadkhan2015,Qu2017,Li2024} enforced that the data flow should be transmitted over only a single path.
{Apparently, formulations that are based on such restrictive assumptions do not fully utilize the flexibility of traffic routing (i.e., allowing the traffic flow to be distributed across arbitrary paths within the underlying network), thereby leading to a suboptimal performance of the whole network \cite{Yu2008,Chen2021a}.}

The works \rev{\cite{Guerzoni2014,Yeow2010,Qu2017,Karimzadeh-Farshbafan2020,Li2024,Dobreff2025}} proposed different \rev{protection (or redundancy)  techniques} (which reserve additional cloud node or link capacities to provide resiliency against links' or  cloud nodes' failure) to \rev{improve} the E2E reliability of the services.
\rev{However, as redundant cloud and communication resources are implemented for the services,}
the protection schemes generally lead to inefficiency of resource allocation \cite{Jin2024}.

\section{System Model and Problem Formulation}
\label{sect:modelformulation}
We use \rev{a directed graph $\mathcal{G} = (\mathcal{I}, \mathcal{L})$} to represent the substrate network, where $\mathcal{I}=\{i\}$ and $\mathcal{L}=\{(i,j)\}$ denote the sets of nodes and \rev{directed} links, respectively.
Let $ \mathcal{V} \subseteq \mathcal{I}$ be the set of cloud nodes.
Each cloud node $ v $ has a computational capacity $ \mu_v $ and a reliability $\gamma_{v}$ \cite{Vizarreta2017,Guerzoni2014}.
{We follow \cite{Zhang2017} to assume that processing one unit of data rate consumes one unit of (normalized) computational capacity.}
Each link $ (i,j) $ has an expected (communication) delay $ d_{ij} $ \cite{Woldeyohannes2018}, a reliability $\gamma_{ij}$ \cite{Vizarreta2017,Guerzoni2014}, and  a total data rate upper bounded by the capacity $C_{ij}$.
A set of services $\mathcal{K}$ is needed to be supported by the network.
Let $S^k,D^k\notin \mathcal{V}$ be the source and destination nodes of service $k\in \K$.
Each service $ k $ relates to a customized service function chain (SFC) consisting of $ \ell_k $ service functions that have to be processed in sequence by the network: $f_{1}^k\rightarrow f_{2}^k\rightarrow \cdots \rightarrow f_{\ell_k}^k$ \cite{Zhang2013,Halpern2015,Mirjalily2018}.
As required in \cite{Zhang2017,Woldeyohannes2018}, in order to minimize the coordination overhead, each function must be processed at exactly one cloud node.
If function $ f^k_s $, $ s \in \mathcal{F}^k := \{1,2,\ldots, \ell_k\} $, is processed by cloud node $ v $ in $ \mathcal{V} $, the expected NFV delay is assumed to be known as $d_{v}^{k,s} $, which includes both processing and queuing delays \cite{Luizelli2015,Woldeyohannes2018}.
For service $k$, \rev{the data rate at which no function has been processed by any cloud node is denoted as $\lambda_0^k$; similarly, the data rate at which function $f_s^k$  has just been processed (and function $f_{s+1}^{k}$ has not been processed) by some cloud node is denoted as $\lambda_{s}^k$.}
Each service $ k $ has an E2E delay requirement and an E2E reliability requirement, denoted as $ \Theta^k $ and $\Gamma^k$, respectively.
\cref{notation} summarizes the notations in the \NS problem.

	\begin{table}[t]
	\caption{Summary of notations in the \NS problem.}
	\label{notation}
	\centering	
	\setlength{\tabcolsep}{3pt} 
	\renewcommand{\arraystretch}{1.2} 
	\begin{tabularx}{0.48\textwidth}{|c|X|}
			\hline
			\multicolumn{2}{|c|}{Parameters}   \\
			\hline 
			$ \mu_v $                        & computational capacity of  cloud node $ v $
			\\
			\hline
			$\gamma_v $                 & reliability of cloud node $v$\\
			\hline 
			$ C_{ij} $                      & communication capacity of link $ (i,j) $                                                                                                                                                                                     \\
			\hline
			$ d_{ij} $                      & communication delay of link $ (i,j) $                                                                                                               
			\\
			\hline 
			$ \gamma_{ij} $                      &  reliability of link $ (i,j)$                                                                                         \\
			\hline 
			$ \mathcal{F}^k $               & the index set that corresponds to service $ k $'s SFC                                                                                                                                                                                   \\
			\hline
			$ f_s^k $                        & the $ s $-th function in the SFC of service $ k $                                                                                                                                                                                       \\
			\hline 
			$ d_{v}^{k,s} $                   & NFV delay when function $f_s^k$ is hosted at cloud node $ v $                                                                                                                                                                        \\
			\hline 
			$ \lambda_s^k $                 &  \rev{data rate at which function $f_s^k$  has just been processed and function $f_{s+1}^{k}$ has not been processed}                                                                                                                                                                  \\
			\hline
			$ \Theta^k $                     & E2E latency threshold of service $ k $
			\\
			\hline
			$ \Gamma^k $                     & E2E reliability threshold of service $ k $                                                                                                                                                                                                  \\
			\hline
			\multicolumn{2}{|c|}{Variables}                                                                                                                                                                                                                                       \\
			\hline 
			$ y_v $                          & binary variable indicating whether or not cloud node $ v $ is activated                                                                                                                                                   \\
			\hline
			$ \rev{x_{v}^{k}} $                   & \rev{binary variable indicating whether or not  cloud node $ v $ processes some function(s) in the SFC of service $k$}                                                                                                                                                                                                     \\
			\hline 
			$ x_{v}^{k,s} $                   & binary variable indicating whether or not  cloud node $ v $ processes function $ f_s^k $                                                                                                                                                                                                                                                              \\
			\hline  
			$ r^{k,s, p} $      & data rate on the $p$-th path of flow $(k,s)$  that is used to route the traffic flow between the two cloud nodes hosting functions $ f_s^k $ and $ f_{s+1}^k $, respectively \\
			\hline  
			$ z_{ij}^{k,s,p} $ & binary variable indicating whether or not link $ (i,j) $ is on the $p$-th path of flow $(k,s)$                                                                                         \\
			\hline  
			$ r_{ij}^{k,s,p} $ & data rate on link $ (i,j) $ which is used by the $ p $-th path of flow $(k,s)$                                                                                                         \\
			\hline 
			$ \theta^{k,s} $                  & communication delay due to the traffic flow from the cloud node hosting function $ f_s^k $ to the cloud node hosting function $ f_{s+1}^k $                                                                                          \\
			\hline 
		\end{tabularx}
		\vspace{-0.5cm}
	\end{table}

The \NS problem {involves determining} the VNF placement, the routes, and the associated data rates on the corresponding routes of all services while satisfying the capacity constraints of all cloud nodes and links and the E2E delay and reliability constraints of all services.
Next, we present the constraints and objective function of the problem formulation in detail.
\\[5pt]
{\bf\noindent$\bullet$ VNF placement\\[5pt]}
\indent Let $x_{v}^{k,s}=1$ indicate that function $f^k_s$ is processed by cloud node $v$; otherwise, $x_{v}^{k,s}=0$.
Each function $f_s^k$ must be processed by exactly one cloud node, i.e.,
\begin{eqnarray}
\label{onlyonenode}
\sum_{v\in \mathcal{V}}x_{v}^{k,s}=1,~\forall ~k \in \mathcal{K}, ~s\in  \mathcal{F}^k.
\end{eqnarray}
In addition, we introduce binary variable $x_v^k$ to denote whether there exists at least one function in $\mathcal{F}^k$ processed by cloud node $v$ and binary variable $y_v$ to denote whether cloud node $v$ is activated and powered on.
By definition, we have 
\begin{align}
\label{xyxelation}
& x_{v}^{k,s} \leq  x_{v}^k, ~\forall~v \in \mathcal{V},~k \in \mathcal{K},~s \in \mathcal{F}^k, \\
& x_{v}^k\leq  y_v, ~\forall~v \in \mathcal{V},~k \in \mathcal{K}.\label{xyrelation}
\end{align}
The node capacity constraints can be written as follows:
\begin{equation}
\label{nodecapcons}
\sum_{k\in \mathcal{K}}\sum_{s \in \mathcal{F}^k}\lambda_s^k x_{v}^{k,s}\leq \mu_v y_v,~\forall~ v \in \mathcal{V}.
\end{equation}
{\bf\noindent$\bullet$ Traffic routing\\[5pt]}
\indent Let $ (k,s) $ denote the traffic flow which is routed between the two cloud nodes hosting the two adjacent functions $ f_s^k $ and $ f_{s+1}^k $.
We follow \cite{Chen2021a} to assume that there are at most $P$ paths that can be used to route flow $(k,s)$ and denote $\mathcal{P}=\{1,2, \ldots, P\}$.
%
%
Let binary variable $ z_{ij}^{k,s,p}$ denote whether (or not) link $ (i,j) $ is on the $ p $-th path of flow $ (k,s) $.
Then, to ensure that the functions of each service $k$ are processed in the prespecified order $f_{1}^k\rightarrow f_{2}^k\rightarrow \cdots \rightarrow f_{\ell_k}^k$, we need
\begin{multline}
\sum_{j: (j,i) \in \mathcal{{L}}} z_{ji}^{k, s, p} - \sum_{j: (i,j) \in \mathcal{{L}}} z_{ij}^{k, s,  p}=b_{i}^{k,s}(\boldsymbol{x}),\qquad\\
\forall~ i \in \mathcal{I},~k \in \mathcal{K},~s \in \mathcal{F}^k\cup \{0\},~ p \in \mathcal{P}, \label{SFC1}
\end{multline}
where
\begin{equation*}
b_{i}^{k,s}(\boldsymbol{x})=\left\{\begin{array}{ll}
-1,&\text{if~}s=0~\text{and}~i= S^k;\\[5pt]
x_{i}^{k,s+1},&\text{if~}s=0~\text{and}~i\in \mathcal{V};\\[5pt]
x_{i}^{k,s+1}-x_{i}^{k,s},&\text{if~} 1\leq s < \ell_k 
~\text{and}~i\in \mathcal{V};\\[5pt]
-x_{i}^{k,s},&\text{if~}s=\ell_k
~\text{and}~i\in \mathcal{V};\\[5pt]
1,&\text{if~}s=\ell_k
~\text{and}~i=D^k;\\[5pt]
0,& \text{otherwise}.
\end{array}
\right.
\end{equation*}
Observe that the right-hand side of the above flow conservation constraint \eqref{SFC1} {depends on} whether node $i$ is a source node, a destination node, an activated/inactivated cloud node, or an intermediate node.
Let us elaborate on the second case where $b_{i}^{k,s}(\x)=x_{i}^{k,1}$: if $x_i^{k,1}=0$, then \eqref{SFC1} reduces to the classical flow conservation constraint; if $x_i^{k,1}=1,$ then \eqref{SFC1} reduces to 
$$ \sum_{j: (j,i) \in \mathcal{{L}}} z_{ji}^{k, s, p} - \sum_{j: (i,j) \in \mathcal{{L}}} z_{ij}^{k, s,  p} = 1,$$
which enforces that for flow $(k,0)$, the difference of the numbers of links coming into node $i$ and going out of node $i$ should be equal to $1$.



Next, we follow \cite{Promwongsa2020,Chen2023} to present the flow conservation constraints for the data rates.
To proceed, we need to introduce variable  $ r^{k,s,p}\in [0,1] $ denoting the ratio or the fraction of data rate $\lambda_s^k$ on the $ p $-th path of flow $ (k,s) $ and variable   $ r_{ij}^{k,s,p} \in [0,1]$ denoting the ratio or the fraction of data rate $\lambda_{s}^k$ on link $(i,j)$ (when $z_{ij}^{k,s,p}=1$).
By definition, we have
\begin{align}
	& \sum_{p \in \mathcal{P}}  r^{k,s,p} =  1,   ~ \forall~ k \in \mathcal{K},~s\in \mathcal{F}^k\cup \{0\}, \label{relalambdaandx11}\\
	&r_{ij}^{k, s, p} = r^{k,s,p} z_{ij}^{k, s,p }, \nonumber                                                                                                  \\
	& \quad~~~ \forall~(i,j) \in {\mathcal{L}}, ~k \in \mathcal{K}, ~s \in \mathcal{F}^k\cup \{0\},~p \in \mathcal{P}. \label{nonlinearcons}
\end{align}
Note that constraint \eqref{nonlinearcons} is \emph{nonlinear}.
Finally, the total data rates on link $ (i,j) $ is upper bounded by capacity $ C_{ij} $:
\begin{equation}
\label{linkcapcons1}
\sum_{k \in \mathcal{K}} \sum_{s\in \mathcal{F}^k \cup \{0\}}\sum_{p \in \mathcal{P}} \lambda_{s}^k r_{ij}^{k, s,p} \leq C_{ij}, ~  \forall~(i,j) \in \mathcal{L} .
\end{equation}
{\bf\noindent$\bullet$ E2E reliability\\[5pt]}
\indent To model the reliability of each service, we introduce binary variable $z_{ij}^k$ to denote whether link $(i,j)$ is used to route the traffic flow of service $k$.
By definition, we have 
\begin{multline}
 z_{ij}^{k, s,  p} \leq z_{ij}^{k}, ~                                                                                          
  \forall~(i,j) \in {\mathcal{L}}, ~k \in \mathcal{K}, ~s \in \mathcal{F}^k\cup \{0\},~p \in \mathcal{P}. \label{zzrelcons}
\end{multline}
The E2E reliability of service $k$ is defined as the product of the reliabilities of all cloud nodes hosting all functions $f_s^k$ in its SFC and the reliabilities of all links used by service $k$ \cite{Vizarreta2017}.
The following constraint ensures that the E2E reliability of service $k$ is larger than or equal to its given threshold $\Gamma^k\in[0,1]$:
\begin{equation*}
	\prod_{v\in \mathcal{V}} \rho_{v}^k\cdot 	\prod_{(i,j)\in \mathcal{L}} \rho_{ij}^k \geq \Gamma^k,~\forall~k \in \mathcal{K},
\end{equation*}
where 
\begin{equation*}
\begin{aligned}
\rho_{v}^k=\left\{\begin{array}{ll}
\gamma_v,&\text{if~}x_{v}^k=1;\\[5pt]
1,&\text{if~}x_{v}^k=0;
\end{array}
\right. & ~~
\rho_{ij}^k=\left\{\begin{array}{ll}
\gamma_{ij},&\text{if~}z_{ij}^k=1;\\[5pt]
1,&\text{if~}z_{ij}^k=0.
\end{array}
\right.
\end{aligned}
\end{equation*}
The above nonlinear constraint can be equivalently linearized as follows.
By the definitions of $\rho_v^k$ and $\rho_{ij}^k$, we have
\begin{equation*}
\log (\rho_{v}^k)= \log (\gamma_v) \cdot x_{v}^k  ~\text{and}~ \log (\rho_{ij}^k)=\log (\gamma_{ij}) \cdot z_{ij}^k.
\end{equation*}
Then we can apply the logarithmic transformation on both sides of the above constraint and obtain an equivalent \emph{linear} E2E reliability constraint: 
\begin{multline}\label{E2Ereliability2}
\sum_{v\in \mathcal{V}}  \log (\gamma_v) \cdot x_{v}^k +  	\sum_{(i,j)\in \mathcal{L}} \log (\gamma_{ij}) \cdot z_{ij}^k\\
\geq \log(\Gamma^k),~\forall~k \in \mathcal{K}.
\end{multline}
%
{\bf\noindent$\bullet$ E2E delay\\[5pt]}
\indent We use variable $ \theta^{k,s} $ to denote the communication delay due to the traffic flow from the cloud node hosting function $ f^k_s $ to the cloud node hosting function $ f^k_{s+1} $. 
By definition, \rev{$\theta^{k,s}$ must be the largest delay among the $P$ paths:
	\begin{equation}
		\label{maxdelay1}
		\theta^{k,s} = \max_{p \in  \mathcal{P}}\left\{\sum_{(i,j) \in \mathcal{{L}}}  d_{ij}  z_{ij}^{k, s, p}\right\},~\forall~k \in \mathcal{K}, ~s\in \mathcal{F}^k\cup \{0\}. 
\end{equation}}%
%
To guarantee that service $k$'s {E2E} delay is not larger than its threshold $\Theta^k$, we need the following constraint:
\begin{equation}
\label{delayconstraint}
\theta_\text{N}^k +\theta_\text{L}^k  \leq \Theta^k,~\forall~k \in  \mathcal{K},
\end{equation}
where
\begin{equation*}
	\begin{aligned}
		\theta_\text{N}^k =  \sum_{v \in \mathcal{{V}}}\sum_{s \in \mathcal{F}^k} d_{v}^{k,s} x_{v}^{k,s}~\text{and}~		\theta_\text{L}^k = \sum_{s \in \mathcal{F}^k\cup \{0\}} \theta^{k,s}
	\end{aligned}
\end{equation*}
denote the total NFV delay on the cloud nodes and the total communication delay on the links of service $ k $, respectively.\\[5pt]
{\bf\noindent$\bullet$ Problem formulation\\[5pt]}
\indent The \NS problem is to minimize a weighted sum of the total number of activated nodes (equivalent to the total power consumption in the cloud network \cite{Chen2021a}) and the total link capacity consumption in the whole network:
\begin{align}
 \min_{\substack{\boldsymbol{x},\,\boldsymbol{y},\\\boldsymbol{r},\,\boldsymbol{z},\,\boldsymbol{\theta}}}~   & \sum_{v \in \mathcal{V}}y_v + \sigma \sum_{(i,j)\in \mathcal{L}} \sum_{k\in \mathcal{K}} \sum_{s\in \mathcal{F}^k\cup \{0\}} \sum_{p\in \mathcal{P}} \lambda_s^k r_{ij}^{k,s,p} \nonumber \\
 ~{\text{s.\,t.~~}}                                                                    &   \eqref{onlyonenode}\text{--}\eqref{rbounds}, \label{minlp}
 \tag{\text{MINLP}}
\end{align}
where
\begin{align}
	& x_{v}^{k,s},~x_v^k,~y_v\in\{0,1\},~\forall ~v\in\mathcal{{V}},~k\in\mathcal{K}, ~s\in \mathcal{F}^k,\label{xybounds}\\
	&  r_{ij}^{k, s, p}\geq 0,~z_{ij}^{k, s, p},~z_{ij}^k\in \{0,1\},~\theta^{k,s} \geq 0, \nonumber \\
	& \qquad \qquad ~\forall~(i,j)\in \mathcal{L},~k\in \mathcal{K},~s\in \mathcal{F}^k\cup \{0\},~p \in \mathcal{P}, \label{rzbounds} \\
	& r^{k,s,p} \geq 0,~\forall~k\in \mathcal{K},~s\in \mathcal{F}^k\cup \{0\},~p \in \mathcal{P}.\label{rbounds}
\end{align}
\rev{In the above, $\sigma$ is a small positive constant, implying that the term of the total number of activated cloud nodes is much more important than the link capacity consumption term. Indeed, using a similar argument as in \cite{Chen2021a}, we can show that solving formulation \eqref{minlp} with a sufficiently small positive $\sigma$ is equivalent to solving a two-stage formulation, where the first stage minimizes the total number of activated cloud nodes and with the minimum   total number of activated cloud nodes, the second stage minimizes the total  link capacity consumption.}
The advantage of incorporating the (second) link capacity consumption term into the objective function of  \eqref{minlp} is as follows.
First, it is helpful in avoiding cycles in the traffic flow between the two nodes hosting the two adjacent service functions.
\rev{Second, minimizing the total link capacity consumption
	enables an efficient reservation of more link capacities for future use \cite{Woldeyohannes2018}.
\rev{Third,} minimizing the total data rate in the whole network \rev{may decrease} the \rev{overall} E2E delay and increase the \rev{overall} E2E reliability of all services \rev{as (i) 
	less nodes and links tend to be used by a service, and 
	(ii) minimizing the total link capacity consumption enables an efficient reservation of more link capacities \cite{Woldeyohannes2018}, which allows the traffic flow between the two nodes hosting the two adjacent functions to move more flexibly, leading to a smaller E2E delay and a higher E2E reliability.}} 

\rev{Three remarks on formulation  \eqref{minlp} are in order. First,} 
it is worthwhile remarking that \rev{by rewriting the nonlinear constraints \eqref{nonlinearcons} and \eqref{maxdelay1} as linear constraints,} formulation \eqref{minlp} can be \rev{equivalently} reformulated as an \MILP~\rev{formulation}\rev{. Indeed, constraints \eqref{maxdelay1} can be equivalently linearized as 
	\begin{equation}
		\theta^{k,s} \geq \sum_{(i,j) \in \mathcal{{L}}}  d_{ij}  z_{ij}^{k, s, p},                                                   ~
		\forall~k \in \mathcal{K}, ~s \in \mathcal{F}^k \cup \{0\}, ~p \in \mathcal{P} \label{consdelay2funs1}.
	\end{equation}
	Obviously, the nonlinear constraint \eqref{maxdelay1} implies \eqref{consdelay2funs1}.
	On the other hand, {as variables $\{\theta^{k,s}\}$ do not appear in the objective function}, there always exists an optimal solution $(\boldsymbol{x}, \boldsymbol{y}, \boldsymbol{r}, \boldsymbol{z}, \boldsymbol{\theta})$ of problem \eqref{minlp} (with \eqref{maxdelay1} replaced by \eqref{consdelay2funs1}) 
	such that  \eqref{maxdelay1} holds.
	Each of the nonlinear constraints in \eqref{nonlinearcons}} can be equivalently linearized as

\begin{equation}
\label{bilinearcons}
\tag{7'}
\begin{aligned}
& r_{ij}^{k,s,p} \geq  z_{ij}^{k, s, p} + r^{k,s,p} -1,   
\\
& r_{ij}^{k,s,p} \leq   z_{ij}^{k, s,p} ,~ r_{ij}^{k,s,p} \leq  r^{k,s,p}.                                   
\end{aligned}
\end{equation}
However, the above linearization \eqref{bilinearcons} generally leads to a weak \LP relaxation (i.e., relaxing all binary variables to continuous variables in $[0,1]$) \cite{Chen2023}.
In  other words, the \LP relaxation of the linearized version of \eqref{minlp} could be much weaker than the (natural) continuous relaxation of \eqref{minlp}.
It is inefficient to  employ a standard solver to solve the linearized version of  \eqref{minlp}, as demonstrated in \rev{\cref{timeF} of \cref{sect:numres}}.
In the next section, we shall transform the \MINLP formulation into a \emph{novel} \MILP formulation that enjoys a much stronger polynomial-time solvable \LP relaxation than the LP relaxation of the linearized version of \eqref{minlp}.

\rev{Second, in formulation \eqref{minlp}, it is assumed that no service has been deployed in the underlying network. However, the proposed formulation \eqref{minlp} can be easily adapted to \NS problems in the online context where some services have been already deployed and new coming services need to be mapped into the network.
There are two possible approaches to achieving this.
In the first approach, we can compute the link and node capacity consumption by existing services and solve problem \eqref{minlp} with remaining link and node capacities.
In the second approach, we can solve problem \eqref{minlp}  with existing services and new coming ones, which may re-map the existing services, as to minimize the resource consumption.}

\rev{Third, we} would like to highlight the difference between our proposed formulation \eqref{minlp} and the two most closely related works \cite{Vizarreta2017} and \cite{Chen2023}.
More specifically, \rev{if we set $P=1$, then formulation \eqref{minlp} reduces to the one in \cite{Vizarreta2017}. Different} from that in \cite{Vizarreta2017} where a single path routing strategy is used for the traffic flow of each service (between \rev{the} two cloud nodes hosting \rev{the} two adjacent functions of a service), our proposed formulation allows to transmit the traffic flow of each service on (possibly) multiple paths and hence fully exploits the flexibility of traffic routing\footnote{\rev{Indeed, formulation  \eqref{minlp} with $P=1$ can be seen as a restriction of formulation  \eqref{minlp} with $P>1$, obtained by setting variables $\{  r_{ij}^{k,s,p}\}_{p \geq 2}$ and $\{r^{k,s,p}\}_{p \geq 2}$ to zero, removing variables $\{z^{k,s,p}\}_{p \geq 2}$ and related constraints from formulation  \eqref{minlp} with $P>1$.
		Due to the larger feasible region, solving formulation  \eqref{minlp} with $P>1$ can return a better feasible solution than solving formulation  \eqref{minlp} with $P=1$.}}. 
In sharp contrast to that in \cite{Chen2023}, our proposed formulation can guarantee the E2E reliability of all services.
\section{A Novel \MILP Formulation}
\label{sect:MILP}

In this section, we transform the problem \eqref{minlp} into a novel \MILP reformulation for the \NS problem which is mathematically equivalent to \eqref{minlp} but much more computationally solvable, thereby significantly facilitating the algorithmic design. 
%
\subsection{A New \MILP Formulation}
\label{subsect:MILP}
\subsubsection{New Linear Flow Conservation Constraints}
Recall that in the previous section, we use \eqref{SFC1}, \eqref{relalambdaandx11}, and \emph{nonlinear} constraint \eqref{nonlinearcons} to ensure the flow conservation for the data rates of the $p$-th path of flow $(k,s)$.
However, the intrinsic nonlinearity in \eqref{nonlinearcons} leads to an inefficient solution of formulation \eqref{minlp}.
To overcome this difficulty, we reformulate \eqref{SFC1}--\eqref{nonlinearcons} as  
\begin{align}
& \sum_{j: (i,j) \in \mathcal{{L}}} z_{ij}^{k, s,  p}\leq 1, ~\forall~i \in \mathcal{I},~k \in \mathcal{K},~s \in \mathcal{F}^k\cup \{0\},~ p \in \mathcal{P}, \label{SFC0}\\
& r_{ij}^{k, s, p} \leq z_{ij}^{k, s,p },~\forall~(i,j) \in {\mathcal{L}}, ~k \in \mathcal{K}, ~s \in \mathcal{F}^k\cup \{0\},~p \in \mathcal{P}, \label{linearcons}\\
& \sum_{p \in \mathcal{P}}\sum_{j: (j,i) \in \mathcal{{L}}} r_{ji}^{k, s, p} - \sum_{p \in \mathcal{P}}\sum_{j: (i,j) \in \mathcal{{L}}} r_{ij}^{k, s,  p}\nonumber\\
& \qquad \qquad\qquad\quad= b_{i}^{k,s}(\boldsymbol{x}),~ \forall~k \in \mathcal{K}, ~(s,i) \in \mathcal{SI}^k, \label{SFC2}
\end{align}
and \eqref{SFC3}--\eqref{SFC5} on top of next page.
\begin{figure*}[t]
	\begin{numcases}{\sum_{j: (j,i) \in \mathcal{{L}}} r_{ji}^{k, s, p} -\sum_{j: (i,j) \in \mathcal{{L}}} r_{ij}^{k, s,  p}}
	=0,&$\forall~i\in \mathcal{I}\backslash\mathcal{V},~k \in \mathcal{K},~s\in \mathcal{F}^k\cup \{0\},~p \in \mathcal{P}$;\label{SFC3}\\[3pt]
	\leq x_{i}^{k,s+1},&$\forall~i\in \mathcal{V},~k \in \mathcal{K},~s\in \mathcal{F}^k\cup \{0\}\backslash\{\ell_k\},~p \in \mathcal{P}$;\label{SFC4}\\[3pt]
	\geq -x_{i}^{k,s},&$\forall~i\in \mathcal{V},~k \in \mathcal{K},~s\in \mathcal{F}^k,~p \in \mathcal{P}$\label{SFC5}.
	\end{numcases}
	\hrulefill
\end{figure*}
Here,
\begin{multline*}
\mathcal{SI}^k = \left \{ (0,i)  :  i \in \mathcal{V}\cup \{S^k\} \right \}  \mathsmaller{\bigcup} 
\\ \left\{ (\ell_k,i)  :  i \in \mathcal{V}\cup \{D^k\} \right\}   \mathsmaller{\bigcup}  
\\ \left\{ (s,i) :  s\in \{1,2,\ldots, \ell_k-1\},~i\in \mathcal{V} \right\}.
\end{multline*}
\noindent Below we elucidate the above constraints \rev{\eqref{SFC0}--\eqref{SFC5}} one by one.   
Constraint \eqref{SFC0} ensures that there exists at most one link leaving node $i$ for the $p$-th path of flow $(k,s)$.
Constraint \eqref{linearcons} requires that, if $r_{ij}^{k,s,p}> 0$, $z_{ij}^{k,s,p}=1$ must hold.
Constraint \eqref{SFC2} ensures the flow conservation of data rates at the source and destination of flow $(k,s)$.
More specifically, it guarantees that, for flow $(k,s)$ with different source and destination, (i) the total fraction of data rate $\lambda_{s}^k$ leaving the source node $S^k$ (i.e., $s=0$) and the node hosting function $f_s^k$ (i.e., $s\in \{1,2,\ldots, \ell_k \}$) are all equal to $1$, and (ii) the total fraction of data rate $\lambda_{s}^k$ coming into the node hosting function $f_{s+1}^k$ (i.e., $s\in \{0, 1,\ldots,\ell_k -1\}$) and the destination node $D^k$ (i.e., $s=\ell_k$) are all equal to $1$.
Finally, constraints \eqref{SFC3}--\eqref{SFC5} ensure the flow conservation of the data rate at each intermediate node of the $p$-th path of flow $(k,s)$. 
In particular, when $x_{i}^{k,s}=x_{i}^{k,s+1}=0$ for some $i\in \mathcal{V}$, cloud node $i$ is an intermediate node of flow $(k,s)$, and in this case, combining \eqref{SFC4} and \eqref{SFC5}, we have 
$$ \sum_{j: (j,i) \in \mathcal{{L}}} r_{ji}^{k, s, p} -\sum_{j: (i,j) \in \mathcal{{L}}} r_{ij}^{k, s,  p}=0. $$
\subsubsection{Valid Inequalities}
To further improve the computational efficiency of the problem formulation, here we introduce two families of valid inequalities.

First, by $r^{k,s,p}\geq 0$, \eqref{nonlinearcons}, and \eqref{zzrelcons}, we have 
\begin{equation*}
	\sum_{p \in \mathcal{P}} r_{ij}^{k,s,p} = \sum_{p \in \mathcal{P}} r^{k,s,p} z_{ij}^{k,s,p} \leq z_{ij}^k\sum_{p \in \mathcal{P}} r^{k,s,p} 
\end{equation*}
which, together with \eqref{relalambdaandx11}, implies that the following inequalities 
\begin{multline}
\sum_{p \in \mathcal{P}} r_{ij}^{k,s,p} \leq z_{ij}^k, ~
\forall~(i,j)\in \mathcal{L},~k \in \mathcal{K}, ~s \in \mathcal{F}^k \cup \{0\}\label{validineq1}
\end{multline}
are valid for formulation \eqref{minlp} in the sense that \eqref{validineq1} holds at all feasible solutions $(\x, \y,\r,\z, \btheta)$ of \eqref{minlp}.

Second, combining $r^{k,s,p} \geq 0$, \eqref{nonlinearcons}, and \eqref{consdelay2funs1}, we have
\begin{equation*}
	r^{k,s,p}\theta^{k,s} \geq  	r^{k,s,p}\sum_{(i,j) \in \mathcal{{L}}}  d_{ij}  z_{ij}^{k, s, p} =\sum_{(i,j) \in \mathcal{{L}}}  d_{ij}  r_{ij}^{k, s, p}.
\end{equation*}
Summing the above inequalities for all $p \in \CP$ and using \eqref{relalambdaandx11}, we obtain the second family of valid inequalities for formulation \eqref{minlp}:
\begin{multline}
\theta^{k,s} \geq  \sum_{(i,j) \in \mathcal{L}} d_{ij} \sum_{p\in \mathcal{P}}r_{ij}^{k,s,p},~
\forall~k \in \mathcal{K}, ~s \in \mathcal{F}^k \cup \{0\}. \label{validineq2}
\end{multline}
\subsubsection{New MILP Formulation}
We are now ready to present the new formulation for the \NS problem:
\begin{align}
	\min_{\substack{\boldsymbol{x},\,\boldsymbol{y},\\\boldsymbol{r},\,\boldsymbol{z},\,\boldsymbol{\theta}}}~&  \sum_{v \in \mathcal{V}}y_v + \sigma \sum_{(i,j)\in \mathcal{L}} \sum_{k\in \mathcal{K}} \sum_{s\in \mathcal{F}^k\cup \{0\}} \sum_{p\in \mathcal{P}} \lambda_s^k r_{ij}^{k,s,p} \nonumber \\
	~~{\text{s.\,t.~~}}   & \eqref{onlyonenode}\text{--}\eqref{nodecapcons},\rev{~\eqref{linkcapcons1}\text{--}\eqref{E2Ereliability2},~\eqref{delayconstraint}\text{--}\eqref{rzbounds}, ~\eqref{consdelay2funs1}\text{--}\eqref{validineq2}}. 
	\label{milp}
	\tag{\text{MILP}}
\end{align}
Note that in the above new formulation, we do not need variables $\{r^{k,s,p}\}$. 
Moreover, the new formulation is an \MILP problem as all constraints  are \emph{linear}, which is in sharp contrast to the nonlinear constraint \eqref{nonlinearcons} in \eqref{minlp}.

\subsection{Theoretical Analysis}
\label{subsect:theo}

In this subsection, we show that formulations \eqref{minlp} and \eqref{milp} are indeed equivalent in the sense that they either are infeasible or share the same optimal solution. 
Moreover, we show that their continuous relaxations are also equivalent.

\begin{theorem}
	\label{eqformulations}
	Either problems \eqref{minlp} and \eqref{milp} are infeasible, or they are equivalent in the sense of sharing the same optimal solution.
\end{theorem}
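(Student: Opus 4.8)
The plan is to prove that \eqref{minlp} and \eqref{milp} share the same optimal value (and a matching optimal solution) by transferring feasible points in both directions. Since the two formulations have \emph{identical} objective functions and \eqref{milp} merely drops the auxiliary variables $\{r^{k,s,p}\}$, it suffices to compare feasible points through the variables common to both, namely $\x$, $\y$, $\{z_{ij}^{k,s,p}\}$, $\{z_{ij}^{k}\}$, $\{r_{ij}^{k,s,p}\}$, and $\btheta$. Because feasibility will be shown to transfer in both directions, the infeasibility dichotomy follows simultaneously: if either problem is feasible, so is the other.

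First I would establish $\mathrm{opt}(\text{MILP})\le\mathrm{opt}(\text{MINLP})$. Take any feasible point of \eqref{minlp} and clean up each binary flow: for fixed $(k,s,p)$, constraint \eqref{SFC1} makes $z^{k,s,p}$ a unit flow from the source to the sink of flow $(k,s)$, and any redundant edges (disjoint cycles, or splittings creating out-degree exceeding one) can be deleted, simultaneously zeroing the corresponding $r_{ij}^{k,s,p}=r^{k,s,p}z_{ij}^{k,s,p}$; this preserves \eqref{SFC1} and \eqref{nonlinearcons} and never increases the objective. The cleaned $z^{k,s,p}$ is then a simple source--sink path, so \eqref{SFC0} holds; \eqref{linearcons} follows from \eqref{nonlinearcons} together with $r^{k,s,p}\le 1$; \eqref{consdelay2funs1} follows from \eqref{maxdelay1}; the valid inequalities \eqref{validineq1}--\eqref{validineq2} hold by the derivations already given in the text; and the new conservation constraints \eqref{SFC2}--\eqref{SFC5} are obtained by multiplying \eqref{SFC1} by $r^{k,s,p}$, substituting \eqref{nonlinearcons}, and then either summing over $p$ with \eqref{relalambdaandx11} (for the aggregate equalities \eqref{SFC2}) or bounding termwise using $r^{k,s,p}\le 1$ (for the one-sided inequalities \eqref{SFC4}--\eqref{SFC5}). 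Hence the cleaned point is feasible for \eqref{milp} with the same objective value.

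The harder direction is $\mathrm{opt}(\text{MINLP})\le\mathrm{opt}(\text{MILP})$. Given a feasible point of \eqref{milp}, I must reconstruct $\{r^{k,s,p}\}$ and verify the nonlinear constraints. Fixing $(k,s,p)$, the out-degree bound \eqref{SFC0} forces at most one outgoing $z$-edge per node, while \eqref{linearcons} confines the $r$-flow to those edges; specializing \eqref{SFC3} at transit nodes and \eqref{SFC4}--\eqref{SFC5} at intermediate cloud nodes (where $x_i^{k,s}=x_i^{k,s+1}=0$) yields per-path conservation $\sum_{j}r_{ji}^{k,s,p}=\sum_{j}r_{ij}^{k,s,p}$ there. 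I would then define $r^{k,s,p}$ as the net $r$-flow leaving the source of $(k,s)$ on index $p$ and argue, by following the unique outgoing edge from the source onward, that index $p$ traces a single simple path along which conservation forces the \emph{constant} rate $r^{k,s,p}$, giving $r_{ij}^{k,s,p}=r^{k,s,p}z_{ij}^{k,s,p}$, i.e.\ \eqref{nonlinearcons}. After zeroing every $z_{ij}^{k,s,p}$ on an edge carrying no $r$-flow, the support of $z^{k,s,p}$ is exactly this path, which recovers \eqref{SFC1}; the aggregate equalities \eqref{SFC2} at the endpoints give $\sum_p r^{k,s,p}=1$, i.e.\ \eqref{relalambdaandx11}; and since $\btheta$ is absent from the objective, I set each $\theta^{k,s}$ to the maximum of its path delays to satisfy \eqref{maxdelay1}, with \eqref{delayconstraint} inherited from the original point. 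This yields a point feasible for \eqref{minlp} with the same objective.

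I expect the main obstacle to be the constant-rate claim in the second direction: rigorously ruling out that the flow of a single index $p$ splits or circulates around a cycle before reaching the sink. The crux is that \eqref{SFC0}, rather than full per-path conservation (which \eqref{milp} imposes only in aggregate via \eqref{SFC2}, and as one-sided bounds \eqref{SFC4}--\eqref{SFC5} at the endpoints), is precisely what collapses each index $p$ to a single simple route on which conservation pins the rate to a constant. Degenerate cases---a source node that also receives inflow, a node hosting two consecutive functions so that source $=$ sink, and spurious zero-flow cycles---must each be disposed of by the cleanup step before \eqref{nonlinearcons} and \eqref{SFC1} can be asserted.
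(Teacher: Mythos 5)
Your proposal follows essentially the same route as the paper's proof: a cycle-elimination/flow-decomposition cleanup lemma, followed by a two-way transfer of solutions in which constraint \eqref{SFC0} together with \eqref{linearcons} and the per-node conservation relations \eqref{SFC3}--\eqref{SFC5} pins each path index $p$ to a simple route carrying a constant rate, so that \eqref{nonlinearcons} can be reconstructed. One detail your cleanup must still absorb (the paper handles it explicitly): when flow $(k,s)$ has distinct endpoints but index $p$ carries zero $r$-flow, you cannot zero out $z^{k,s,p}$ entirely without violating \eqref{SFC1} --- you must copy the route of a positive-flow index $p'$ --- and when source equals sink you must set $r^{k,s,p}=1/P$ (or similar) so that \eqref{relalambdaandx11} still holds.
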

\begin{proof}
	\ifthenelse{\longpaper = 1}{The proof can be found in Appendix \ref{appendix1}.}{The proof can be found in Appendix A of the full version of the paper \cite{Chen2024}.}
\end{proof}

\cref{eqformulations} implies that to find an optimal solution to the \NS problem, we can solve either formulation \eqref{minlp} or formulation \eqref{milp}.
Next, we further compare the solution efficiency of solving the two formulations by comparing their continuous relaxations:
\begin{align}
	\min_{\substack{\x,\,\y,\\\r,\,\z,\,\btheta}}~   & \sum_{v \in \mathcal{V}}y_v + \sigma \sum_{(i,j)\in \mathcal{L}} \sum_{k\in \mathcal{K}} \sum_{s\in \mathcal{F}^k\cup \{0\}} \sum_{p\in \mathcal{P}} \lambda_s^k r_{ij}^{k,s,p} \nonumber \\
	~{\text{s.\,t.~~}}                                                                    &   \rev{\eqref{onlyonenode}\text{--}\eqref{E2Ereliability2},~\eqref{delayconstraint}},~\eqref{rbounds},~\rev{\eqref{consdelay2funs1}},~\eqref{xyboundsR}\text{--}\eqref{rzboundsR},
	\label{nlp}
	\tag{\text{NLP}}
\end{align}
and 
\begin{align}
	\min_{\substack{\x,\,\y,\\\r,\,\z,\,\btheta}}~&  \sum_{v \in \mathcal{V}}y_v + \sigma \sum_{(i,j)\in \mathcal{L}} \sum_{k\in \mathcal{K}} \sum_{s\in \mathcal{F}^k\cup \{0\}} \sum_{p\in \mathcal{P}} \lambda_s^k r_{ij}^{k,s,p} \nonumber \\
	~~{\text{s.\,t.~~}}   & \eqref{onlyonenode}\text{--}\eqref{nodecapcons},~
	\rev{\eqref{linkcapcons1}\text{--}\eqref{E2Ereliability2},~\eqref{delayconstraint}, ~\eqref{consdelay2funs1}\text{--}\eqref{rzboundsR}},
	\label{lp}
	\tag{\text{LP-I}}
\end{align}
where
\begin{align}
	& x_{v}^{k,s},~x_v^k,~y_v\in[0,1],~\forall~v\in\mathcal{{V}}, ~k\in\mathcal{K}, ~s\in \mathcal{F}^k,\label{xyboundsR}\\
	& r_{ij}^{k, s, p}\geq 0,~z_{ij}^{k, s, p},~z_{ij}^k\in [0,1],~\theta^{k,s} \geq 0, \nonumber \\
	& \qquad \qquad ~\forall~(i,j)\in \mathcal{L},~k\in \mathcal{K},~s\in \mathcal{F}^k\cup \{0\},~p \in \mathcal{P}. \label{rzboundsR}
\end{align}
The second key result in this subsection is the equivalence between \eqref{nlp} and \eqref{lp}. 
Showing this equivalence turns out to be a highly nontrivial task. 
In particular, the proof contains two main steps: the first step is to show that problem \eqref{nlp} is equivalent to the following \LP problem \eqref{lp2}:
\begin{equation}
	\label{lp2}
	\tag{\rm{LP-II}}
	\begin{aligned}
		\min_{\substack{\x,\,\y,\\\r,\,\z,\,\btheta}} & ~~   \sum_{v \in \mathcal{V}}y_v  + \sigma \sum_{(i,j)\in \mathcal{L}} \sum_{k\in \mathcal{K}} \sum_{s\in \mathcal{F}^k\cup \{0\}} \lambda_s^k r_{ij}^{k,s} \nonumber                       \\
		{\text{s.t.~}}                                                             &~ {\eqref{onlyonenode}\text{--}\eqref{nodecapcons}},~\eqref{mediacons2-2},~ {\eqref{linkcapcons2},~\eqref{zzrelcons2},~\eqref{E2Ereliability2}},\nonumber \\
		&~ \eqref{delayconstraint},~\eqref{consdelay2funs2},~\eqref{xyboundsR},~\eqref{NLPrzbounds},~
	\end{aligned}
\end{equation}
where
\begin{align}
	&\sum_{j: (j,i) \in \mathcal{{L}}} r_{ji}^{k, s} - \sum_{j: (i,j) \in \mathcal{{L}}} r_{ij}^{k, s}=b_{i}^{k,s}(\bx),~\nonumber\\
	& \qquad \qquad\qquad\qquad\forall~i \in \I,~k \in \mathcal{K},~s \in \mathcal{F}^k\cup \{0\}, \label{mediacons2-2}\tag{\rm{5}'}\\
	& \sum_{k \in \mathcal{K}} \sum_{s\in \mathcal{F}^k \cup \{0\}} \lambda_{s}^k r_{ij}^{k, s} \leq C_{ij}, ~  \forall~(i,j) \in \mathcal{L},	\label{linkcapcons2}
	\tag{\rm{8}'}\\
	& r_{ij}^{k,s}\leq z_{ij}^k, ~\forall~(i,j) \in \CL,~k \in \mathcal{K}, ~ \forall~s \in \mathcal{F}^k \cup \{0\},\label{zzrelcons2}\tag{\rm{9}'}\\
	& \theta^{k,s} \geq \sum_{(i,j) \in \mathcal{{L}}}  d_{ij}  r_{ij}^{k, s},~\forall~k \in \mathcal{K}, ~s \in \mathcal{F}^k \cup \{0\} \label{consdelay2funs2},\tag{\rm{16}'}\\
	& r_{ij}^{k, s},~z_{ij}^{k}\in [0,1], ~\theta^{k,s}\geq 0,\nonumber\\
	& \qquad \qquad\qquad\forall~(i,j)\in \mathcal{{L}}, ~k\in \mathcal{K},~s\in \mathcal{F}^k \cup \{0\};	\tag{26'}\label{NLPrzbounds}
\end{align}
the second step is then to show the equivalence between \eqref{lp2} and \eqref{lp}. 
It is worthwhile mentioning that \eqref{lp2} plays a crucial role not only in proving the equivalence but also in the development of the \cCG algorithm in the next section.

\begin{theorem}
	\label{eqrelaxations}
 	Either problems \eqref{nlp}, \eqref{lp}, and \eqref{lp2} are infeasible, or they are equivalent in the sense of sharing the same optimal value.
\end{theorem}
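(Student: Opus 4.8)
The plan is to prove the two equivalences flagged in the text separately—first $\eqref{nlp}\equiv\eqref{lp2}$, then $\eqref{lp2}\equiv\eqref{lp}$—and to chain them. For each equivalence I would construct, in both directions, a map sending a feasible point of one problem to a feasible point of the other. Since every objective appearing here is $\sum_{v\in\mathcal V}y_v$ plus a $\sigma$-weighted total link rate, and every map I use keeps $\x,\y$ and the per-link aggregate rate $\sum_{p\in\mathcal P}r_{ij}^{k,s,p}=r_{ij}^{k,s}$ fixed, transferring feasibility in both directions immediately yields the infeasibility dichotomy, while comparing objective values along the maps yields the equality of optimal values.

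For $\eqref{nlp}\equiv\eqref{lp2}$ the forward map is path aggregation, $r_{ij}^{k,s}:=\sum_{p\in\mathcal P}r_{ij}^{k,s,p}$. The only delicate constraint to recover is \eqref{mediacons2-2}: I would multiply the per-path conservation \eqref{SFC1} by $r^{k,s,p}\ge0$, use the bilinear identity \eqref{nonlinearcons} to replace $r^{k,s,p}z_{ij}^{k,s,p}$ by $r_{ij}^{k,s,p}$, and sum over $p$ with $\sum_p r^{k,s,p}=1$ from \eqref{relalambdaandx11}; the right-hand side $b_i^{k,s}(\x)$ is untouched as it is $p$-independent. The two remaining LP-II constraints \eqref{zzrelcons2} and \eqref{consdelay2funs2} are precisely the valid inequalities \eqref{validineq1} and \eqref{validineq2}, already known to hold on every feasible point of \eqref{minlp} and hence of its relaxation \eqref{nlp}, and \eqref{linkcapcons2} is \eqref{linkcapcons1} summed over $p$. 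The backward map is replication: set $z_{ij}^{k,s,p}:=r_{ij}^{k,s}$ and $r^{k,s,p}:=1/P$ for every $p$, so $r_{ij}^{k,s,p}=r_{ij}^{k,s}/P$. Because \eqref{nlp} enforces \eqref{SFC1} as an equality and imposes no out-degree restriction, and $r^{k,s}$ already obeys \eqref{mediacons2-2}, each of the $P$ copies satisfies \eqref{SFC1}; \eqref{nonlinearcons} holds by construction, and $z_{ij}^{k,s,p}=r_{ij}^{k,s}\le z_{ij}^k\le 1$ gives \eqref{zzrelcons} and the box bounds.

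For $\eqref{lp2}\equiv\eqref{lp}$ the forward map is again aggregation $r_{ij}^{k,s}:=\sum_p r_{ij}^{k,s,p}$: summing \eqref{SFC2} over $p$ produces \eqref{mediacons2-2} at the nodes of $\mathcal{SI}^k$, \eqref{SFC3} supplies it (with zero right-hand side) at the remaining non-cloud nodes, and \eqref{zzrelcons2}, \eqref{consdelay2funs2} are once more the valid inequalities \eqref{validineq1}, \eqref{validineq2}, which are part of \eqref{lp}. The backward map is where I expect the main obstacle. Putting all flow on a single path, $z_{ij}^{k,s,1}:=r_{ij}^{k,s}$ with $r_{ij}^{k,s,p}=z_{ij}^{k,s,p}=0$ for $p\ge2$, satisfies every LP-I constraint except possibly the out-degree bound \eqref{SFC0}, which demands $\sum_{j:(i,j)\in\mathcal L}r_{ij}^{k,s}\le1$ at every node; this can fail when the LP-II flow carries circulations. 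To repair it I would first note that, for each commodity $(k,s)$, cancelling a cycle in $\{r_{ij}^{k,s}\}$ preserves \eqref{mediacons2-2}, \eqref{linkcapcons2}, \eqref{zzrelcons2} and \eqref{consdelay2funs2} and does not increase the objective, so one may assume each flow is acyclic. Using $\sum_{v}x_v^{k,s}=1$ from \eqref{onlyonenode}, the value of commodity $(k,s)$ equals $\tfrac12\sum_i|b_i^{k,s}(\x)|\le1$, and a path decomposition of an acyclic flow bounds the out-flow at every node by this value, hence by $1$, establishing \eqref{SFC0}. The residual constraints \eqref{linearcons}, \eqref{zzrelcons}, \eqref{consdelay2funs1}, \eqref{validineq1}, \eqref{validineq2} and \eqref{linkcapcons1} then collapse to their LP-II forms.

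Chaining the two equivalences gives the common optimal value $\mathrm{opt}\eqref{nlp}=\mathrm{opt}\eqref{lp2}=\mathrm{opt}\eqref{lp}$ together with the joint infeasibility alternative; note that the backward map of the second step is only objective-non-increasing (because of cycle cancellation), but combined with the objective-preserving forward map it still pins the two optimal values together. The technical heart of the argument—and the reason the equivalence is \emph{highly nontrivial}—is this cycle-cancellation and flow-value estimate forced by the out-degree constraint \eqref{SFC0}; everything else is bookkeeping that repeatedly leans on the two valid inequalities \eqref{validineq1} and \eqref{validineq2} to keep the aggregated points feasible.
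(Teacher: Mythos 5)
Your proposal is correct and follows essentially the same route as the paper's proof: the same two-step decomposition (\eqref{nlp} $\equiv$ \eqref{lp2}, then \eqref{lp2} $\equiv$ \eqref{lp}), the same trick of multiplying \eqref{SFC1} by $r^{k,s,p}$ and summing to recover \eqref{mediacons2-2}, the same reliance on the valid inequalities \eqref{validineq1}--\eqref{validineq2} to supply \eqref{zzrelcons2} and \eqref{consdelay2funs2}, and the same cycle-cancellation-plus-flow-value argument to handle the out-degree constraint \eqref{SFC0}. The only cosmetic differences are that the paper normalizes every feasible point so that all flow rides on path $p=1$ (after which the path-indexed problem is literally \eqref{lp2}) instead of writing explicit two-way maps, performs the cycle cancellation on the \eqref{lp} side (its \cref{lppropery1}) rather than on the \eqref{lp2} side, and sets $r^{k,s,1}=1$ where you use the $1/P$ replication.
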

\begin{proof}
	\ifthenelse{\longpaper = 1}{The proof can be found in Appendix \ref{appendix4}.}{The proof can be found in Appendix B of the full version of the paper \cite{Chen2024}.}
\end{proof}

\cref{eqrelaxations} shows the advantage of formulation \eqref{milp} over formulation \eqref{minlp}.
More specifically, in sharp contrast to the continuous \NLP relaxation of \eqref{minlp}, the continuous \LP relaxation of  \eqref{milp} is polynomial-time solvable.
In addition, when compared with the LP relaxation of the linearization version of \eqref{minlp} (i.e., \rev{replacing \eqref{nonlinearcons} and \eqref{maxdelay1} by \eqref{bilinearcons} and \eqref{consdelay2funs1}, respectively}), the LP relaxation of \eqref{milp} offers a much stronger  bound (which is as strong as that of the NLP relaxation of \eqref{minlp}).
Therefore, solving \eqref{milp} by off-the-shelf \MILP solvers should be much more computationally efficient, compared against solving \eqref{minlp}, as a strong relaxation bound plays a crucial role in the efficiency of \MILP solvers \cite{Liu2024}. 
In Section \ref{sect:numres}, we shall present computational results to further demonstrate this.

We remark that formulation \eqref{milp} is also a crucial step towards developing efficient customized algorithms for solving the \NS problem due to the following two reasons.
First, globally solving the problem provides an important benchmark for evaluating the performance of heuristic algorithms for the same problem.
More importantly, some decomposition algorithms require solving  several small subproblems of the \NS problem (e.g., problem with a single service), and employing \eqref{milp} to solve these subproblems is significantly more computationally efficient (as compared with \eqref{minlp}), thereby significantly enhancing the overall performance of the corresponding algorithm.
In the next section, we propose an efficient decomposition algorithm based on the proposed formulation \eqref{milp} for solving large-scale \NS problems.

\section{An Efficient Column Generation Algorithm}
\label{sect:CG}

Column generation (\CG) is an algorithmic framework for solving \MILP problems with a large number of variables \cite{Wolsey2021}.
It is a two-stage decomposition-based algorithm.
In the first stage, it identifies a small subset of variables (with which the corresponding \LP relaxation of the \MILP problem is solved) by decomposing the \LP relaxation of the \MILP problem into the so-called master problem and multiple subproblems,  and iteratively solving the master problem and subproblems until an optimal solution of the \LP relaxation is found.
In the second stage, it finds a high-quality solution of the original problem by  solving a restricted \MILP problem defined over the variables found in the first stage. 
In this section, we develop a \cCG algorithm for solving the \NS problem by carefully exploiting the problem's special structure. 

\subsection{A Service Pattern-based Formulation}

To facilitate the development of the \cCG algorithm, we need the pattern-based formulation  for the \NS problem.
\begin{definition}[Pattern and Pattern Set of Service $k$]
	 A pattern of service $k$ refers to a pair of the placement strategy of all functions $\left\{f_s^k\right\}_{s\in \F^k}$ and the traffic routing strategy of all flows $\{(k,s)\}_{s \in \F^k\cup \{0\}}$ in the physical network that satisfies all {E2E} QoS constraints associated with the corresponding service.  
	 The set of all patterns of service $k$ is called the pattern set of the corresponding service, denoted as $\C^k$.
\end{definition}
For each pattern $c \in \C^k$ of service $k$, let $\rev{\chi_{v}^{k,c}}\in \{0,1\}$ denote whether cloud node $v$ hosts some function(s) in the SFC of service $k$, $R^{k,c}_v$  denote the total data rate of the functions in the SFC of service $k$ processed at node $v$, 
and $R^{k,c}_{ij}$ denote the total data rate of flow $(k,s)$, $s \in \F^k\cup \{0\}$, on link $(i,j)$. 
Let $(\x, \y, \r, \z, \btheta)$ be a vector that satisfies constraints \eqref{onlyonenode}--\eqref{nodecapcons},~\rev{\eqref{linkcapcons1}\text{--}\eqref{E2Ereliability2},~\eqref{delayconstraint}\text{--}\eqref{rzbounds}, and \eqref{consdelay2funs1}--\eqref{validineq2}}, and $(\x^k, \y^k,  \r^k, \z^k, \btheta^k)$ be a subvector restricted to service $k$. 
Note that each pattern $c \in \C^k$, {that satisfies the capacity constraints of all cloud nodes and links and the E2E QoS constraints of service $k$}, corresponds to a solution $(\x^k, \y^k,  \r^k, \z^k, \btheta^k)$ and 
$\C^k$ is the set of all such feasible solutions:
\begin{equation}\label{Ckdef}
	\begin{aligned}
	& \C^k :=\{ (\bm{x}^k, \bm{y}^k, \bm{r}^k, \bm{z}^k, \bm{\theta}^k) \mid\eqref{onlyonenode}\text{--}\eqref{nodecapcons},\\
	& \qquad\qquad\qquad\rev{\eqref{linkcapcons1}\text{--}\eqref{E2Ereliability2},~\eqref{delayconstraint}\text{--}\eqref{rzbounds},~  \eqref{consdelay2funs1}\text{--}\eqref{validineq2}} \}.
	\end{aligned}
\end{equation}
By definition, the related parameters for the pattern corresponding to $(\bm{x}^k, \bm{y}^k, \bm{r}^k, \bm{z}^k, \bm{\theta}^k)$ can be computed by  
\begin{align}
	 \rev{\chi_{v}^{k,c}} & = x_{v}^{k},~\forall~v \in \V,\label{eq13}\\
	 R_v^{k,c} & = \sum_{s \in \F^k} \lambda_s^k x_{v}^{k,s}, ~\forall~v\in \V,\label{eq14}\\
	 R_{ij}^{k,c}& = \sum_{s \in \F^k\cup \{0\}} \lambda_s^k \sum_{p \in \CP} r_{ij}^{k,s,p},~\forall~(i,j) \in \CL.\label{eq12}
\end{align}
Once $(\bm{x}^k, \bm{y}^k, \bm{r}^k, \bm{z}^k, \bm{\theta}^k)$ is given, the parameters given in \eqref{eq13}--\eqref{eq12} are constants.


We use variable $t^{k,c}\in \{0,1\}$ to denote whether or not pattern $c$ is used for service $k$.
By definition, each service $k$ must choose one pattern  in $\C^k$:
\begin{equation}
	\sum_{c \in \mathcal{C}^k}t^{k,c} =1, ~ \forall~k \in \mathcal{K}. \qquad \label{atmostone}        
\end{equation}
If pattern $c$ is chosen and cloud node $v$ hosts some function(s) of service $k$, then cloud node $v$ must be activated:
\begin{equation}
	 \sum_{c \in  \C^k}\rev{\chi_{v}^{k,c}}t^{k,c}  \leq y_v,~\forall~v \in \mathcal{V}, ~ k \in \mathcal{K}, \label{cover}
\end{equation}
where we recall that $y_v \in \{0,1\}$ denotes whether or not cloud node $v$ is activated.  
To enforce the node and link capacity constraints, we need 
\begin{align}
	& \sum_{k \in \mathcal{K}} \sum_{c \in  \C^k} R_{v}^{k,c}  t^{k,c}   \leq \mu_v y_v ,  ~\forall~v \in \mathcal{V} ,~
	\label{decision}\\
	& \sum_{k \in \mathcal{K}} \sum_{c \in  \C^k} R^{k,c}_{ij}t^{k,c} \leq C_{ij} , ~ \forall ~(i,j) \in \mathcal{L}.~\quad   \label{resources} 
\end{align}

We are ready to present the service pattern-based formulation for the \NS problem:
\begin{equation}	\label{flow_form} \tag{P-MILP}
	\begin{aligned}
		&\min_{\boldsymbol{y},\,\boldsymbol{t}}
		& &   \sum_{v \in \mathcal{V}} y_v + \sigma  \sum_{k \in \mathcal{K}} \sum_{c \in \C^k} \left(\sum_{(i,j)\in \CL} R^{k,c}_{ij}\right)t^{k,c}              \\
		&~~ \text{s.t.}              &   &     \eqref{atmostone}\text{--}\eqref{resources}, \\
		&&&	y_v \in \{0,1\}^n,~\forall~v \in \V,\\
		& &&  t^{k,c} \in \{0,1\}, ~ \forall~k\in \mathcal{K},~c \in  \C^k.
	\end{aligned}
\end{equation}

\subsection{Proposed Algorithm}

Although \eqref{flow_form} is an \MILP problem, its huge number of  variables $\{t^{k,c}\}$ (as the number of feasible solutions in $\C^k$ could even be infinite) makes it unrealistic to be directly solved.
In order to avoid solving problem \eqref{flow_form} with a huge number of patterns, we propose a \cCG algorithm for solving problem \eqref{flow_form}. 
The \cCG algorithm is a two-stage algorithm where the  first step  applies a \CG procedure to identify an effective but usually small subset of patterns $\tilde{\C}^k \subseteq {\C}^k$ for each $k \in \K$, and the second stage solves  problem \eqref{flow_form} with $\C^k$ replaced by $\tilde{\C}^k$ for all $k \in \K$ to construct a high-quality solution for problem \eqref{flow_form}. 


\subsubsection{Algorithmic Framework}
Let us first consider the \LP relaxation of problem \eqref{flow_form}:
\begin{equation}	\label{pLP} \tag{P-LP}
	\begin{aligned}
		&\min_{\boldsymbol{y},\,\boldsymbol{t}}
		& &   \sum_{v \in \mathcal{V}} y_v + \sigma  \sum_{k \in \mathcal{K}} \sum_{c \in \C^k} \left(\sum_{(i,j)\in \CL} R^{k,c}_{ij}\right)t^{k,c}              \\
		&~~ \text{s.t.}              &   &     \eqref{atmostone}\text{--}\eqref{resources}, \\
		&&&	y_v \in [0,1]^n,~\forall~v \in \V,\\
		& &&  t^{k,c} \geq 0, ~ \forall~k\in \mathcal{K},~c \in  \C^k.
	\end{aligned}
\end{equation}
Although problem \eqref{pLP} still has a huge number of patterns (or variables $\{t^{k,c}\}$), it can be solved by the \CG algorithm in which only a small subset of patterns is initially considered, and additional patterns are gradually added (when needed) until an optimal solution of the original problem is found. 

Let $\tilde{\C}^k$ be a subset of $\C^k$, $k \in \K$, and consider the following \LP problem: 
\begin{equation}
	\label{flow_lp}\tag{RLP}
\begin{aligned}
	&\min_{\boldsymbol{y},\,\boldsymbol{t}}
	& &   \sum_{v \in \mathcal{V}} y_v + \sigma  \sum_{k \in \mathcal{K}} \sum_{c \in \tilde{\C}_k} \left(\sum_{(i,j)\in \CL} R^{k,c}_{ij}\right)t^{k,c}       \\
	&~~ \text{s.t.}              &   &        {\eqref{atmostone}\text{--}\eqref{resources}},\\
	&&& y_v \leq 1, ~\forall~v \in \V,  \\
	& &&  t^{k,c} \geq0, ~ \forall~k\in \mathcal{K},~c \in  \tilde{\C}^k,\\
	& & &y_v\geq 0, ~\forall~v \in \V.
\end{aligned}
\end{equation}
{Problem \eqref{flow_lp} is a restriction of problem \eqref{pLP} (called \emph{master problem}) obtained by setting $t^{k,c}=0$ for all $k \in \K$ and $c \in \C^k \backslash \tilde{\C}^k$ in probelm \eqref{pLP}.}
Let $(\boldsymbol{\alpha},\boldsymbol{\beta},\boldsymbol{\pi},\boldsymbol{\eta},\boldsymbol{\zeta})$ be the dual variables corresponding to constraints \eqref{atmostone}, \eqref{cover}, \eqref{decision}, \eqref{resources}, and $y_v \leq 1$ for $v \in \V$, respectively.
Then we can get the dual of problem \eqref{flow_lp} as follows:
\begin{equation}
	\label{dlp}\tag{D-RLP}
	\begin{aligned}
		&\max_{\substack{\boldsymbol{\alpha},\,\boldsymbol{\pi},\, \boldsymbol{\eta},\\\, \boldsymbol{\beta},\, \boldsymbol{\zeta}}}
		& &   \sum_{k \in \K} \alpha_k + \sum_{(i,j)\in \CL} C_{ij}\beta_{ij} +\sum_{v \in \V}\zeta_v      \\
		&~~~ \text{s.t.}              &   &        \alpha_k+\sum_{v \in \mathcal{V}} \rev{\chi_{v}^{k,c}} \pi^{k}_v  +  \sum_{v \in \mathcal{V}} R_v^{k,c}\eta_v +\sum_{(i,j)\in \mathcal{L}}  R^{k,c}_{ij}\beta_{ij}, \\
		& &&\qquad \qquad \leq \sigma \sum_{(i,j)\in \CL} R_{ij}^{k,c},~ ~\forall~k \in \K,~c \in \tilde{\C}^k,\\
		&                          &       &   -\sum_{k \in \K} \pi_v^k - \mu_v \eta_v + \zeta_v \leq 1,~\forall~ v\in \V,\\
		& & & \pi_v^k, ~\eta_v,~\zeta_v \leq 0,~\forall~v \in \V,~k \in \K, \\
		& & & \beta_{ij}\leq  0, ~\forall~ (i,j)\in \CL. 
	\end{aligned}
\end{equation}
We can obtain a primal solution $(\bar{\bm{y}}, \bar{\bm{t}})$ (if the problem is feasible) and a dual solution $(\bar{\bm{\alpha}},\bar{\bm{\beta}},\bar{\bm{\pi}},\bar{\bm{\eta}},\bar{\bm{\zeta}})$ by employing the primal or dual simplex algorithm to solve problem \eqref{flow_lp} \cite[Chapter 3]{Dantzig1997}.
As the all-zero vector is feasible, one of the following two cases must happen: 
(i) $(\bar{\bm{\alpha}},\bar{\bm{\beta}},\bar{\bm{\pi}},\bar{\bm{\eta}},\bar{\bm{\zeta}})$ is an optimal solution of problem \eqref{dlp} and problem \eqref{flow_lp} has an optimal solution  $(\bar{\bm{y}}, \bar{\bm{t}})$;
(ii) $(\bar{\bm{\alpha}}, \bar{\bm{\beta}}, \bar{\bm{\pi}}, \bar{\bm{\eta}}, \bar{\bm{\zeta}})$  is an unbounded ray of problem \eqref{dlp} for which $$\sum_{k \in \K} \bar{\alpha}_k + \sum_{(i,j)\in \CL} C_{ij} \bar{\beta}_{ij} +\sum_{v \in \V}\bar{\zeta}_v >0,$$ and problem \eqref{flow_lp} is infeasible.
Notice that in both cases, the primal or dual simplex algorithm will return this dual solution $(\bar{\bm{\alpha}},\bar{\bm{\beta}},\bar{\bm{\pi}},\bar{\bm{\eta}},\bar{\bm{\zeta}})$; see \cite[Chapter 3]{Dantzig1997}.

Leveraging the dual solution $(\bar{\bm{\alpha}},\bar{\bm{\beta}},\bar{\bm{\pi}},\bar{\bm{\eta}},\bar{\bm{\zeta}})$, we can determine whether or not the restriction \eqref{flow_lp} is equivalent to the original problem \eqref{pLP}. 
Specifically, (i) if
\begin{equation}\label{optcond1}
	\begin{aligned}
	& \bar{\alpha}_k+\sum_{v \in \mathcal{V}} \bar{\pi}^{k}_v\rev{\chi_{v}^{k,c}}  +  \sum_{v \in \mathcal{V}} \bar{\eta}_v R_v^{k,c} \\
	& \qquad \qquad\qquad\qquad +\sum_{(i,j)\in \mathcal{L}} \bar{\beta}_{ij} R^{k,c}_{ij} \leq \sigma \sum_{(i,j)\in \CL} R_{ij}^{k,c} 
	\end{aligned}
\end{equation}
holds for all $c \in  \C^k$ (or equivalently, for all $c \in \C^k \backslash \tilde{\C}^k$) and $k \in \K$, then $(\bar{\bm{\alpha}},\bar{\bm{\beta}},\bar{\bm{\pi}},\bar{\bm{\eta}},\bar{\bm{\zeta}})$ is also a dual solution of problem \eqref{pLP}.
Therefore, problems \eqref{flow_lp} and \eqref{pLP} are equivalent and $\tilde{\C}^k$ for all $k \in \K$ are the desired subsets.
(ii) Otherwise, there must exist some $k \in \K$ and $c \in \C^k$ for which \eqref{optcond1} is violated. 
In this case, we can add these patterns into subsets $\tilde{\C}^k$ to obtain a new enhanced \LP problem. 
We then solve problem \eqref{pLP} again to obtain new primal and dual  solutions. 
This process is repeated until case (i) happens.
The overall \cCG algorithm for solving the \NS problem
is summarized in \cref{alg1}.

\begin{algorithm}[!t]
	\caption{The \cCG algorithm for solving the \NS problem}
	\label{alg1}
	\begin{algorithmic}[1]
		\renewcommand{\algorithmicrequire}{\textbf{Input:}}
		\renewcommand{\algorithmicensure}{\textbf{Output:}}
		\renewcommand{\algorithmiccomment}[1]{//\,\texttt{#1}}
		\STATE Set $\tau \leftarrow 0$ and $\text{IterMax}\geq 1$;
		\\[5pt]
		{\COMMENT{Stage 1: solving the \LP problem \eqref{pLP}}}\\[5pt]
		\FOR{$k \in \K$}
		\STATE Solve the subproblem of \eqref{milp} with $\K$ being $\{k\}$; 
		\STATE If the subproblem has a feasible solution $({\x}^k, {\y}^k, {\r}^k,{\z}^k, {\btheta}^k)$, initialize $\tilde{\C}^k\leftarrow\{c\}$, where $c$ denotes the corresponding pattern; otherwise, {\bf stop} and claim that the \NS problem is infeasible; 
		\ENDFOR
		\WHILE {$\tau \leq\text{IterMax}$} 
		\STATE Solve the LP problem \eqref{flow_lp} to obtain a dual solution $(\bar{\boldsymbol{\alpha}},\bar{\boldsymbol{\beta}},\bar{\boldsymbol{\pi}},\bar{\boldsymbol{\eta}},\bar{\boldsymbol{\zeta}})$;
		\FOR{$k \in \K$}
		\STATE Call \cref{alg0} to find a new pattern or declare that no new pattern exists for service $k$. 
		If a new pattern $c$ is found, set $\tilde{\C}^k \leftarrow \tilde{\C}^k\cup \{c\}$;
		\ENDFOR
		\IF{no new pattern is found for any service $k \in \K$}
		\IF{$(\bar{\boldsymbol{\alpha}},\bar{\boldsymbol{\beta}},\bar{\boldsymbol{\pi}},\bar{\boldsymbol{\eta}},\bar{\boldsymbol{\zeta}})$ is an unbounded ray}
		\STATE {\bf Stop} and claim that problems \eqref{pLP} and \eqref{flow_form} are infeasible;
		\ELSE
		\STATE Go to the second stage in step 20;
		\ENDIF
		\ENDIF
		\STATE Set $\tau \leftarrow \tau+1$;
		\ENDWHILE
		\\[5pt]
		{\COMMENT{Stage 2: solving the restriction problem of \eqref{flow_form}}}	\\[5pt]
		\STATE Solve the restriction of problem \eqref{flow_form} with $\C^k$ replaced by $\tilde{\C}^k$ for all $k \in \K$ in  \eqref{flow_form}. 
	\end{algorithmic} 
\end{algorithm}

\subsubsection{Finding a New Pattern}
In this part, we attempt to find a new pattern $c \in \C^k$ for which  \eqref{optcond1} does not hold (if such a pattern exists); see step 9 of \cref{alg1}.
Recall that a pattern $c$ for service $k\in\K$ is a feasible solution $(\bm{x}^k, \bm{y}^k, \bm{r}^k, \bm{z}^k, \bm{\theta}^k)$ of $\C^k$ defined in \eqref{Ckdef}.
Therefore, for each $k \in \K$, to determine whether \eqref{optcond1} holds for all $c \in \C^k$, one can solve problem 
\begin{align}
	&\max_{\substack{\boldsymbol{x}^k,\,\boldsymbol{y}^k,\\\boldsymbol{r}^k,\,\boldsymbol{z}^k,\,\boldsymbol{\theta}^k}}
	&                          &                    \bar{\alpha}_k+ \sum_{v \in \mathcal{V}} (\bar{\pi}^{k}_v \rev{\chi_{v}^{k,c}}+ \bar{\eta}_v  R_v^{k,c})+ \sum_{(i,j)\in \mathcal{L}} (\bar{\beta}_{ij} -\sigma)R^{k,c}_{ij}\nonumber \\
	& ~~~~{\text{s.t.~}}                                                                    &  & 	
	\eqref{onlyonenode}\text{--}\eqref{nodecapcons},~\rev{\eqref{linkcapcons1}\text{--}\eqref{E2Ereliability2},~\eqref{delayconstraint}\text{--}\eqref{rzbounds}, ~ \eqref{consdelay2funs1}\text{--}\eqref{validineq2}}
	\tag{$\text{SP}^k$}\label{subP}
\end{align}
where $\rev{\chi_{v}^{k,c}}$, $R_{v}^{k,c}$, and $R_{ij}^{k,c}$ are defined in \eqref{eq13}--\eqref{eq12}, respectively.
If the optimal value $\nu^k$ of problem \eqref{subP} is less than or equal to zero, then \eqref{optcond1} holds for all $c \in \C^k$; otherwise, \eqref{optcond1} is violated by the pattern corresponding to the optimal solution $(\bm{x}^k, \bm{y}^k, \bm{r}^k, \bm{z}^k, \bm{\theta}^k)$ of problem \eqref{subP}.
\rev{Problem \eqref{subP} is an \MILP problem that can be solved to global optimality using the branch-and-bound algorithm of state-of-the-art \MILP solvers like Gurobi and CPLEX.}

To further reduce the computational cost of solving the \MILP problem \eqref{subP}, we can first solve its \LP relaxation with the optimal value $\nu_\LP^k$:
if $\nu_\LP^k \leq 0$, then by $\nu^k \leq \nu_\LP^k$, $\nu^k \leq 0$ must hold, and thus \eqref{optcond1} holds for all $c \in \C^k$;
otherwise, we solve the \MILP problem \eqref{subP} to check whether there exists a new pattern $c \in \C^k$ by which \eqref{optcond1} is violated.
In particular, using the same techniques in deriving problem \eqref{lp2}, we can show that the \LP relaxation of problem \eqref{subP} is equivalent to
\begin{align}	\label{splp2}
	\tag{${\text{SP}}_{\LP}^k$}
	\begin{aligned}
	\max_{\substack{\boldsymbol{x}^k,\,\boldsymbol{y}^k,\\\boldsymbol{r}^k,\,\boldsymbol{z}^k,\,\boldsymbol{\theta}^k}} &    \alpha_k+ \sum_{(i,j)\in \mathcal{L}} (\beta_{ij} -\sigma)\sum_{s \in \F^k\cup \{0\}} \lambda_s^k r_{ij}^{k,s}\nonumber\\
	& \qquad \qquad+ \sum_{v \in \mathcal{V}} \left(\pi^{k}_v  x_v^{k}+ \eta_v  \sum_{s \in \F^k} \lambda_s^k x_{v}^{k,s}
	\right) \nonumber                       \\
	{\text{s.t.~~~}}                                                             & {\eqref{onlyonenode}\text{--}\eqref{nodecapcons}},~\eqref{mediacons2-2},~ {\eqref{linkcapcons2},~\eqref{zzrelcons2},~\eqref{E2Ereliability2},} \nonumber \\
	&                                                                                \eqref{delayconstraint},~\eqref{consdelay2funs2},~\eqref{xyboundsR},~\eqref{NLPrzbounds}.
	\end{aligned}
\end{align}
\rev{Problem \eqref{splp2} is an \LP problem that can be solved to global optimality using the simplex or interior-point algorithm of off-the-shelf \LP solvers like Gurobi and CPLEX.}

\begin{algorithm}[!t]
	\caption{The algorithm for finding a new pattern for service $k$}
	\label{alg0}
	\begin{algorithmic}[1]
		\renewcommand{\algorithmicrequire}{\textbf{Input:}}
		\renewcommand{\algorithmicensure}{\textbf{Output:}}
		\renewcommand{\algorithmiccomment}[1]{//\,\texttt{#1}}
		\STATE Solve problem \eqref{splp2} to obtain an optimal solution $(\bar{\x}^k, \bar{\y}^k, \bar{\r}^k, \bar{\z}^k, \bar{\btheta}^k)$ and the optimal value ${\nu}_\LP^k$;
		\IF{$\nu_\LP^k > 0$}
		\ifthenelse{\longpaper = 1}{\STATE Use \eqref{tmeq6-0}, \eqref{tmeq6-1}, and \eqref{tmeq8-1} to recover an optimal solution $({\x}^k, {\y}^k, {\r}^k,{\z}^k, {\btheta}^k)$ of  the (natural) \LP relaxation of problem \eqref{subP};}{\STATE Use Eqs. \rev{(45), (46), and (48)} in the full version of the paper \cite{Chen2024} to recover an optimal solution $({\x}^k, {\y}^k, {\r}^k,{\z}^k, {\btheta}^k)$ of  the (natural) \LP relaxation of problem \eqref{subP};}
		\IF{all constraints in problem \eqref{subP} are satified at $({\x}^k, {\y}^k, {\r}^k,{\z}^k, {\btheta}^k)$}
		\STATE \textbf{Stop} and return the corresponding new pattern $c$ for service $k$;
		\ELSE
		\STATE Solve subproblem \eqref{subP} to obtain a solution $({\x}^k, {\y}^k, {\r}^k,{\z}^k, {\btheta}^k)$ and the optimal value $\nu^k$;
		\IF{$\nu^k > 0$}
		\STATE \textbf{Stop} and return the corresponding new pattern $c$ for service $k$;
		\ELSE
			\STATE \textbf{Stop} and claim that no new pattern for service $k$ exists;
		\ENDIF
		\ENDIF
		\ENDIF
	\end{algorithmic} 
\end{algorithm}

Two important remarks on problem \eqref{splp2} are in order. 
First, compared with the (natural) \LP relaxation of problem \eqref{subP}, problem \eqref{splp2} has a much smaller number of variables and constraints, and thus is more computationally solvable.  
\ifthenelse{\longpaper = 1}{Second, for an optimal solution $(\bar{\x}^k, \bar{\y}^k, \bar{\r}^k, \bar{\z}^k, \bar{\btheta}^k)$ of problem \eqref{splp2}, we can recover an optimal solution $({\x}^k, {\y}^k, {\r}^k,{\z}^k, {\btheta}^k)$  of  the (natural) \LP relaxation of problem \eqref{subP} using \eqref{tmeq6-0}, \eqref{tmeq6-1}, and \eqref{tmeq8-1}; see  \cref{lpproperty2} of Appendix \ref{appendix4}.}{Second, for an optimal solution $(\bar{\x}^k, \bar{\y}^k, \bar{\r}^k, \bar{\z}^k, \bar{\btheta}^k)$ of problem \eqref{splp2}, we can recover an optimal solution $({\x}^k, {\y}^k, {\r}^k,{\z}^k, {\btheta}^k)$  of  the (natural) \LP relaxation of problem \eqref{subP} using Eqs. \rev{(45), (46), and (48)} in the full version of the paper \cite{Chen2024}.}
If $({\x}^k, {\y}^k, {\r}^k,{\z}^k, {\btheta}^k)$ is integral (i.e., it satisfies all constraints in \eqref{subP}), then it must also be an optimal solution of problem \eqref{subP}.
In this case, we do not need to solve MILP problem \eqref{subP}, which saves a lot of computational efforts.

The algorithmic details on finding a new pattern $c$ for service $k$ is summarized in \cref{alg0}.

\subsection{Analysis Results and Remarks}
\label{subsec:remark}

In this subsection, we provide some analysis results and remarks on the proposed \cCG algorithm. 

First, although the number of feasible solutions in $\C^k$ could be infinite, to solve \eqref{pLP} in the first stage, only the extreme points of the convex hull of $\C^k$,
denoted as $\conv(\C^k)$, are required \cite[Chapter 11]{Wolsey2021}\footnote{\noindent The point $(\x^k,\y^k,\r^k, \z^k, \btheta^k)$ is an extreme point of $\conv(\C^k)$ if it is not a proper convex combination of two distinct points in $\conv(\C^k)$.}. 
The number of extreme points of a polyhedron such as $\conv(\C^k)$ is finite \cite[Theorem 2.2]{Wolsey2021}. 
From this and the fact that solving problem \eqref{subP} by the \LP-based branch-and-bound algorithm (with the simplex method being used to solve the corresponding LPs) can return an extreme point of $\conv(\C^k)$ or declare the \rev{infeasibility} of \eqref{subP}, we know that steps 6--19 of \cref{alg1} with a sufficiently large $\text{IterMax}$ will either {identify} an optimal solution of problem \eqref{pLP} (if the problem is feasible) or declare its {infeasibility} (if not).
Although the worst-case iteration complexity of  steps 6--19 grows exponentially fast with the number of variables, our simulation results show that it usually finds an optimal solution of problem \eqref{pLP} within a small number of iterations (and thus the cardinalities of sets $\{\tilde{\C}^k\}$ are small); see \cref{sect:numres} further ahead.

Second, in the first stage of \cref{alg1}, each iteration needs to solve at most $|\mathcal{\K}|$ problems of the form \eqref{subP} and one problem of the form \eqref{flow_lp}.
These problems are much  easier to solve than the original problem \eqref{milp}.
To be specific, although problem \eqref{subP} is still an \MILP problem, {both of its numbers of variables and constraints are}  $\mathcal{O}((|\I|+|\CL|)|\CP| \ell_k)$.
This is significantly smaller than those of problem \eqref{milp},
which are $\mathcal{O}((|\I|+|\CL|)|\CP| \sum_{k \in \K}\ell_k)$, especially when the number of services is large.
Moreover, problem \eqref{flow_lp} is an LP problem, which is polynomial-time solvable. 
In particular, the worst-case complexity of solving problem \eqref{flow_lp} can be upper bounded by  \cite[Chapter 6.6.1]{Ben-Tal2001}:
$$\mathcal{O}\left({\left(\sum_{k \in \K}|\tilde{\C}^k|+|\V||\K|+ |\CL|\right)}^{3.5}\right).$$
Notice that the restriction problem of \eqref{flow_form} that needs to be solved in the second stage of \cref{alg1} is also a small-scale \MILP problem (as the cardinalities of sets $\{\tilde{\C}^k\}$ are often small), which makes it much easier to solve than the original problem \eqref{milp}.
Therefore, the proposed \cCG algorithm is particularly {suitable to solve} large-scale \NS problems.

Finally, the proposed \cCG algorithm can be interpreted as an \LP relaxation rounding algorithm, where the \LP relaxation problem \eqref{pLP} is solved and the optimal rounding strategy in \cite{Berthold2013} is applied to its solution (i.e., by solving problem \eqref{flow_form} with $\C^k$ replaced by $\tilde{\C}^k$).	
Moreover, using the results \cite[Chapter 11.2]{Wolsey2021}, we can show that the \LP relaxation problem \eqref{pLP} is equivalent to 
\begin{align}
	\min_{\substack{\x,\,\y,\\\r,\,\z,\,\btheta}}~&  \sum_{v \in \mathcal{V}}y_v + \sigma \sum_{(i,j)\in \mathcal{L}} \sum_{k\in \mathcal{K}} \sum_{s\in \mathcal{F}^k\cup \{0\}} \sum_{p\in \mathcal{P}} \lambda_s^k r_{ij}^{k,s,p} \nonumber \\
	~~{\text{s.\,t.~~}} &  \eqref{nodecapcons},~\eqref{linkcapcons1},~(\x^k, \y^k, \r^k, \z^k, \btheta^k) \in \conv(\C^k),~\forall~k \in \K.
 	\label{convR}
\end{align}
Theoretically, 
problem \eqref{convR} can provide a relaxation bound that is even stronger  than that of the \LP relaxation \eqref{lp}, which can be rewritten as  
\begin{align}
	\min_{\substack{\x,\,\y,\\\r,\,\z,\,\btheta}}~&  \sum_{v \in \mathcal{V}}y_v + \sigma \sum_{(i,j)\in \mathcal{L}} \sum_{k\in \mathcal{K}} \sum_{s\in \mathcal{F}^k\cup \{0\}} \sum_{p\in \mathcal{P}} \lambda_s^k r_{ij}^{k,s,p} \nonumber \\
	~~{\text{s.\,t.~~}} &  \eqref{nodecapcons},~\eqref{linkcapcons1},~(\x^k, \y^k, \r^k, \z^k, \btheta^k) \in \C_{\text{L}}^k,~\forall~k \in \K.
	\label{convL}
\end{align}
Here, 
\begin{equation}
	\begin{aligned}
	& \C_{\text{L}}^k :=\{ (\bm{x}^k, \bm{y}^k, \bm{r}^k, \bm{z}^k, \bm{\theta}^k) \mid \\
	& \qquad  \rev{\eqref{onlyonenode}\text{--}\eqref{nodecapcons},~\eqref{linkcapcons1}\text{--}\eqref{E2Ereliability2},~\eqref{delayconstraint}\text{--}\eqref{rzbounds},~  \eqref{consdelay2funs1}\text{--}\eqref{rzboundsR}}\}
	\end{aligned}
\end{equation}
is the linearization of $\C^k$
satisfying $ \conv(\C^k) \subseteq \C_{\text{L}}^k$.
This theoretical interpretation sheds important insights into why the proposed cCG algorithm is
 able to return a high-quality feasible solution for the \NS problem, as will be further demonstrated in our experiments in \cref{sect:numres}.

\section{Numerical Simulation}
\label{sect:numres}

In this section, we present numerical results to demonstrate the effectiveness of the proposed formulations and the efficiency of the proposed \cCG algorithm.
More specifically, we first present numerical results to compare the computational efficiency of the proposed formulations \eqref{milp} and \eqref{minlp} in \cref{subsec:MILPvsMINLP}.
Then, in \cref{subsec:MILPvsEformulations}, we perform numerical experiments to illustrate the effectiveness of the proposed formulation \eqref{milp} for the \NS problem over the state-of-the-art  formulations in \cite{Chen2023} and  \cite{Vizarreta2017}. 
Finally, in \cref{subsec:CGalg}, we present numerical results to illustrate the efficiency of the proposed \cCG algorithm for solving the \NS problem.

All experiments were {conducted} on a server with 2 Intel Xeon E5-2630 processors and 98 GB of RAM, {running} the Ubuntu GNU/Linux Server 20.04 x86-64 operating system.
\rev{In formulations \eqref{minlp}, \eqref{milp}, and \eqref{flow_form}, unless otherwise
stated, we choose $\sigma = 0.0005$ and $P = 2$. }
In \cref{alg1}, we set $\text{IterMax} = 100$.
We use CPLEX 20.1.0 to solve all \LP and \MILP problems (including the linearized version of \eqref{minlp}). 
When solving the \MILP problems, the time limit was set to 1800 seconds.

{The fish network topology in \cite{Zhang2017}, which contains 112 nodes and 440 links, including 6 cloud nodes, is employed in our experiments.
The capacities of cloud nodes and links are randomly generated within $ [50,100] $ and $ [7,77] $, respectively;
the NFV and communication delays on the cloud nodes and links are randomly selected from  $\{3,4,5,6\}$ and $\{1,2\}$, respectively; 
the reliabilities of cloud nodes and links are randomly chosen in $[0.991,0.995]$ and $[0.995,0.999]$, respectively.}
For each service $k$, node $S^k$ is randomly chosen from the available nodes and node $D^k$ is set to be the common destination node; SFC $ \F^k$ contains 3 functions randomly generated from $ \{f^1,f^2, \ldots, f^5\} $; $ \lambda^k_s $'s are the data rates which are all set to be the same integer value, randomly chosen from $ [1,11] $; $ \Theta^k $ and $\Gamma^k$ are set to $ 20+(3*\text{dist}_k+\alpha) $ and $ 0.99^2{(\text{dist}'_k)}^4 $, where $ \text{dist}_k $ and $ \text{dist}'_k $ are the shortest paths between nodes $ S^k $ and $ D^k $ (in terms of delay and reliability, respectively) and $ \alpha $ is randomly chosen in $[0,5]$.
The parameters mentioned above are carefully selected to ensure that the constraints in the \NS problem are neither too tight nor too loose.
For each fixed number of services, we randomly generate 100 problem instances and the results reported in this section are averages over the 100 instances.

%

\subsection{Comparison of Proposed Formulations \eqref{milp} and \eqref{minlp}}
\label{subsec:MILPvsMINLP}

\begin{figure}[t]
	\centering
	\includegraphics[height=\figuresize]{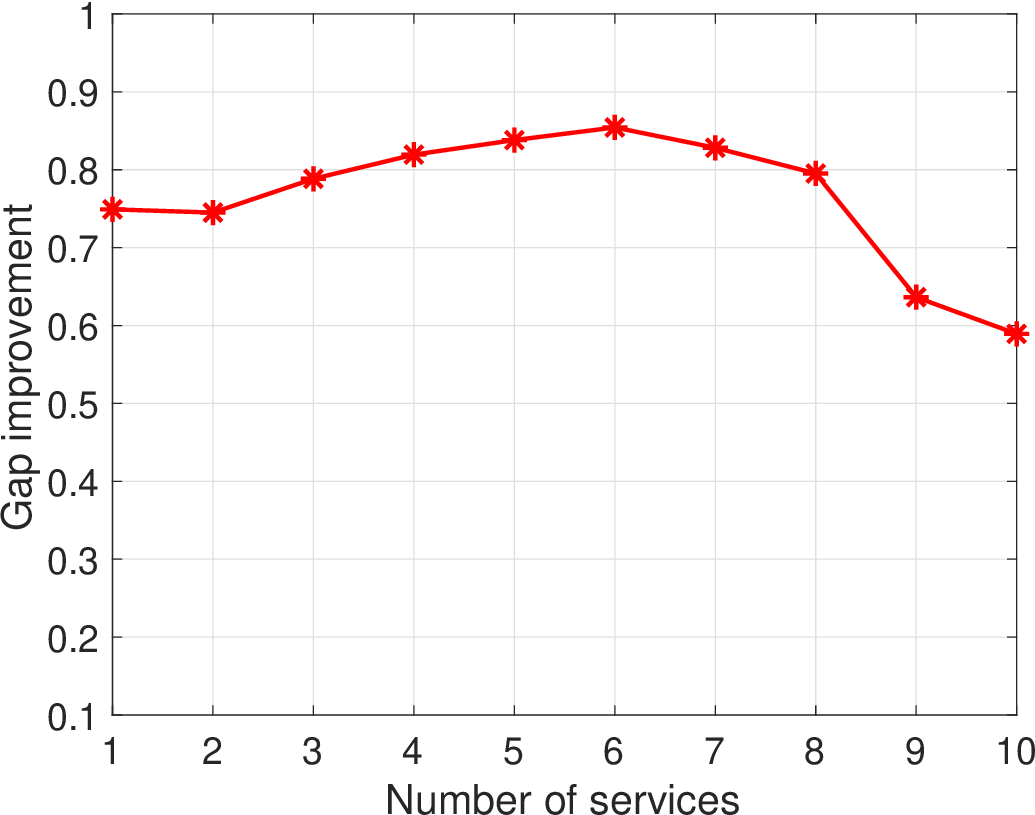}
	\caption{Average relative optimality gap improvement of the \LP relaxation of formulation \eqref{milp} over the \LP relaxation of (the linearized version of) formulation \eqref{minlp}.}
	\label{gap}
\end{figure}

\begin{figure}[t]
	\centering
	\includegraphics[height=\figuresize]{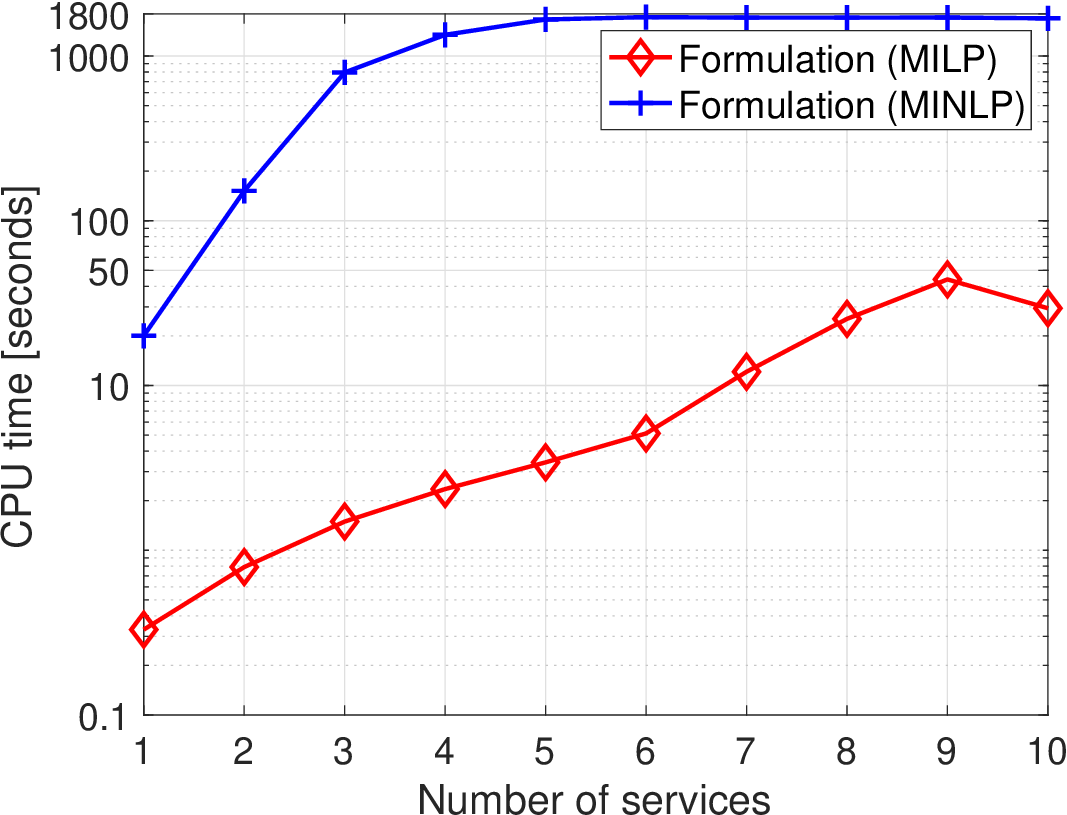}
	\caption{Average CPU time taken by solving formulations \eqref{milp} and \eqref{minlp}.}
	\label{timeF}
\end{figure}

In this subsection, we compare the performance of solving formulation \eqref{milp} and the linearized version of formulation \eqref{minlp} (i.e.,  \rev{replacing \eqref{nonlinearcons} and \eqref{maxdelay1} by \eqref{bilinearcons} and \eqref{consdelay2funs1}, respectively}).
To do this, we first compare the tightness of their continuous relaxations, i.e., problem \eqref{lp} and the linearized version of \eqref{nlp} (denoted as (NLP-L)).
We compare the relative optimality gap improvement, defined by
\begin{equation}
	\label{eq1}
	\frac{\nu(\text{LP-I})-\nu(\text{NLP-L})}{\nu(\text{MILP})-\nu(\text{NLP-L})},
\end{equation}
where $\nu(\text{LP-I})$, $\nu(\text{NLP-L})$, and $\nu(\text{MILP})$ are the optimal values of problems \eqref{lp}, (NLP-L), and \eqref{milp}, respectively.
The gap improvement in \eqref{eq1}, 
a widely used performance
measure in the integer programming community \cite{Vielma2010,Fukasawa2011},
quantifies the improved tightness of relaxation \eqref{lp} over that of relaxation (NLP-L).
The larger the gap improvement, the stronger the relaxation \eqref{lp} (as compared with relaxation (NLP-L)).
As can be observed from \cref{gap}, in all cases, the gap improvement is larger than $0.5$. 
Indeed, in most cases, the gap improvement is larger than $0.7$, which shows that the continuous relaxation of \eqref{milp} is indeed much stronger than the continuous relaxation of the linearized version of \eqref{minlp} in terms of providing a much better relaxation bound.

Next, we compare the computational efficiency of solving formulations \eqref{milp} and \eqref{minlp}.
Fig. \ref{timeF} plots the average CPU time taken by solving formulations \eqref{milp} and \eqref{minlp} versus the number of services.
From the figure, it can be clearly seen that it is much more efficient to solve \eqref{milp} than \eqref{minlp}.
In particular, in all cases, the CPU times of solving \eqref{milp} are  within $50$ seconds, while that of solving \eqref{minlp} are larger than $1000$ seconds when $|\mathcal{K}|\geq 4$.
We remark that when $|\mathcal{K}|\geq 5$, CPLEX  failed to solve \eqref{minlp}  within $1800$ seconds in almost all cases.

The above results clearly demonstrate the advantage of our new way of formulating the flow conservation constraints \eqref{SFC0}--\eqref{SFC5} for the data rates over those in \cite{Promwongsa2020,Chen2023} and the proposed valid inequalities \eqref{validineq1}--\eqref{validineq2}, i.e., it enables to provide an \MILP formulation with a much stronger \LP relaxation and thus can effectively make the \NS problem much more computationally solvable.
Due to this, in the following, we only use and discuss formulation \eqref{milp}.
%

\subsection{Comparison of Proposed Formulation \eqref{milp} and Those in  \cite{Chen2023} and \cite{Vizarreta2017}}
\label{subsec:MILPvsEformulations}

\begin{figure}[t]
	\centering
	\includegraphics[height=\figuresize]{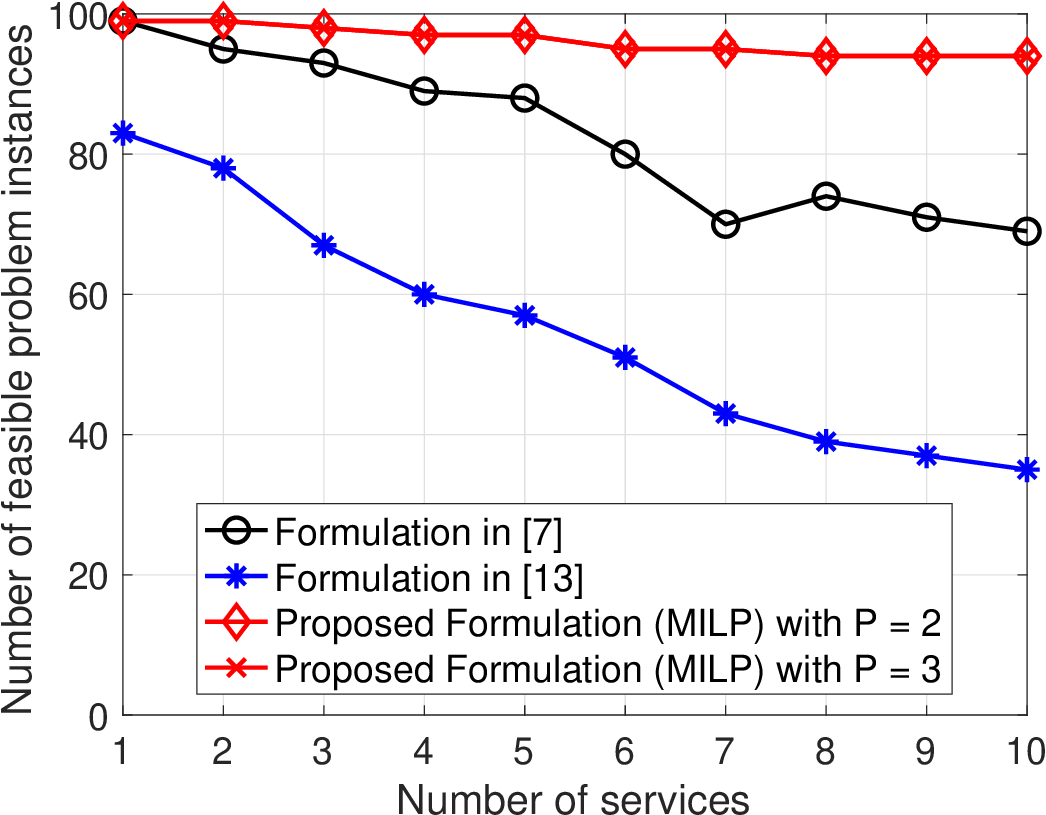}
	\caption{The number of feasible problem instances by solving \rev{formulations \eqref{milp} with $P=2, 3$} and those in  \cite{Chen2023} and \cite{Vizarreta2017}.}	
	\label{nfeasF}
\end{figure}

In this subsection, we demonstrate the effectiveness of our proposed formulation \eqref{milp} by comparing it with the two existing state-of-the-art  formulations in \cite{Chen2023} and \cite{Vizarreta2017}.
\cref{nfeasF} plots the number of feasible problem instances returned by solving \rev{the proposed formulations \eqref{milp} with $P=2$ and $P=3$ and the two formulations in \cite{Chen2023} and \cite{Vizarreta2017}.} 
For the formulation in \cite{Vizarreta2017} and the proposed formulation \eqref{milp}, since the E2E reliability constraints \eqref{zzrelcons}--\eqref{E2Ereliability2} are explicitly enforced, we solve the corresponding formulation by CPLEX and count the corresponding problem instance feasible if CPLEX can return a feasible solution; otherwise it is declared as infeasible.
As the E2E reliability constraints \eqref{zzrelcons}--\eqref{E2Ereliability2}
are not explicitly considered in the formulation in \cite{Chen2023}, the corresponding curve in Fig. \ref{nfeasF} is obtained as follows. 
We solve the formulation in \cite{Chen2023} and then substitute the obtained solution into constraints \eqref{zzrelcons}--\eqref{E2Ereliability2}: if the solution satisfies constraints \eqref{zzrelcons}--\eqref{E2Ereliability2} of all services, we count the corresponding problem instance feasible; otherwise it is infeasible. 

\rev{We first compare the performance of the proposed formulation \eqref{milp} with $P=2$ and $P=3$. 
	Intuitively, since there is more flexibility of traffic routing in  \eqref{milp}  with $P=3$, solving  \eqref{milp} with $P=3$ is likely to return a feasible solution for more problem instances than solving  \eqref{milp}  with $P=2$. However, as illustrated in \cref{nfeasF}, solving formulations  \eqref{milp}  with $P=2$ and $P=3$ provides the same number of feasible problem instances.
	These results are consistent with the previous results in \cite{Chen2021a} and highlight a useful insight that there already exists a sufficiently large flexibility of traffic routing in formulation  \eqref{milp}  with $P=2$.  
}

\rev{Next, we compare the performance of the proposed formulation \eqref{milp} with the formulations in \cite{Chen2023} and \cite{Vizarreta2017}.} 
As can be observed from \cref{nfeasF}, the proposed formulation \eqref{milp} can solve a much larger number of problem instances than that solved by the formulation in \cite{Vizarreta2017} \rev{(which is equivalent to the proposed formulation   \eqref{milp} with $P=1$)}, especially in the case where the number of services is large.
This shows the advantage of the flexibility of traffic routing in \eqref{milp}.
In addition, compared against that of solving the  formulation in \cite{Chen2023}, the number of feasible problem instances of solving the proposed formulation \eqref{milp} is also much larger, which clearly shows the advantage of our proposed formulation over that in \cite{Chen2023}, i.e., it has a guaranteed E2E reliability of the services.
In short summary, the results in Fig. \ref{nfeasF} demonstrate that, compared with those in \cite{Chen2023} and \cite{Vizarreta2017}, our proposed formulation \eqref{milp} can give a ``better'' solution.

\subsection{Efficiency of the Proposed \cCG algorithm}
\label{subsec:CGalg}

In this subsection, we compare the performance of the proposed \cCG algorithm with the exact approach using standard MILP solvers to solve formulation \eqref{milp} (called \EXACT), 
\rev{and two state-of-the-art heuristic  algorithms in the literature: the LP relaxation one-shot rounding (\LPoR) algorithm (adapted from \cite{Chowdhury2012}) and the LP relaxation dynamic rounding (\LPdR) algorithm (adapted from \cite{Chen2023}).}
To demonstrate the advantage of using the LP acceleration technique in the proposed \cCG algorithm (i.e, avoiding solving  the \MILP problem \eqref{subP} too many times by solving the \LP  problem \eqref{splp2}),
we also compare the \cCG algorithm with \vcCG where steps 1--6 and 13--14 of  \cref{alg0} are not applied. 


\begin{figure}[t]
	\centering
	\includegraphics[height=\figuresize]{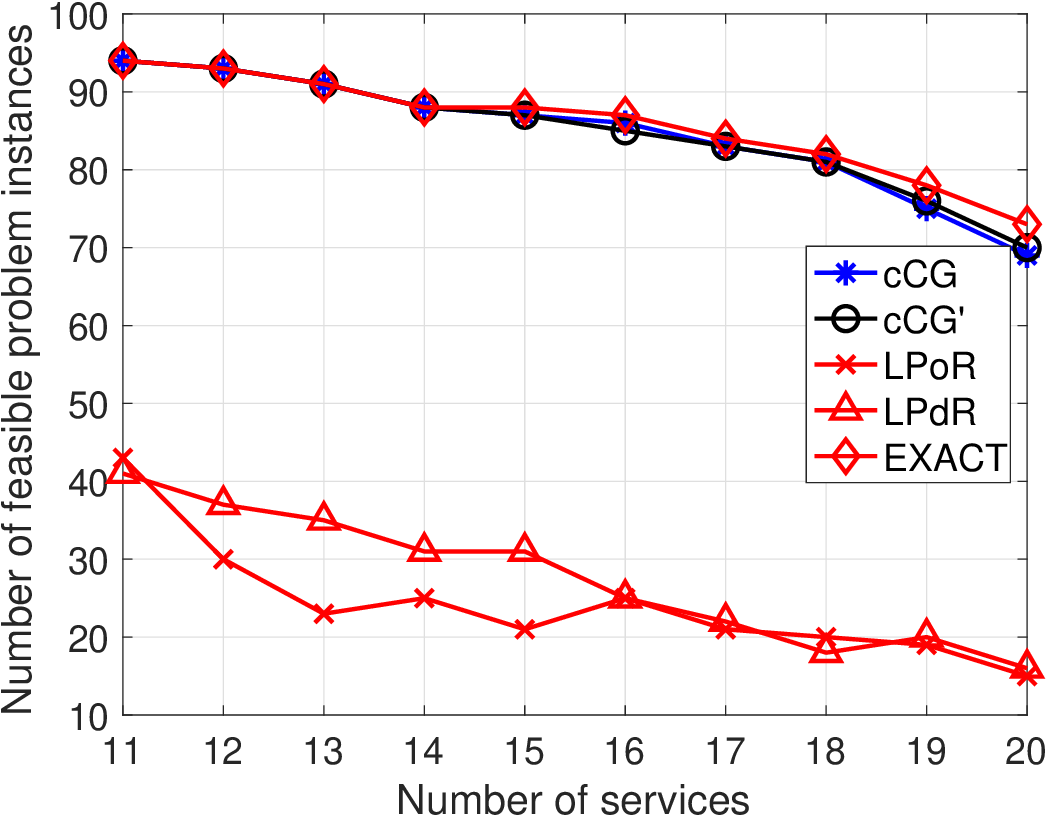}
	\caption{The number of feasible problem instances returned by \cCG, \cCG', \rev{\LPoR, \LPdR,} and \EXACT.}	
	\label{nfeasA}
\end{figure}

\begin{figure}[t]
	\centering
	\includegraphics[height=\figuresize]{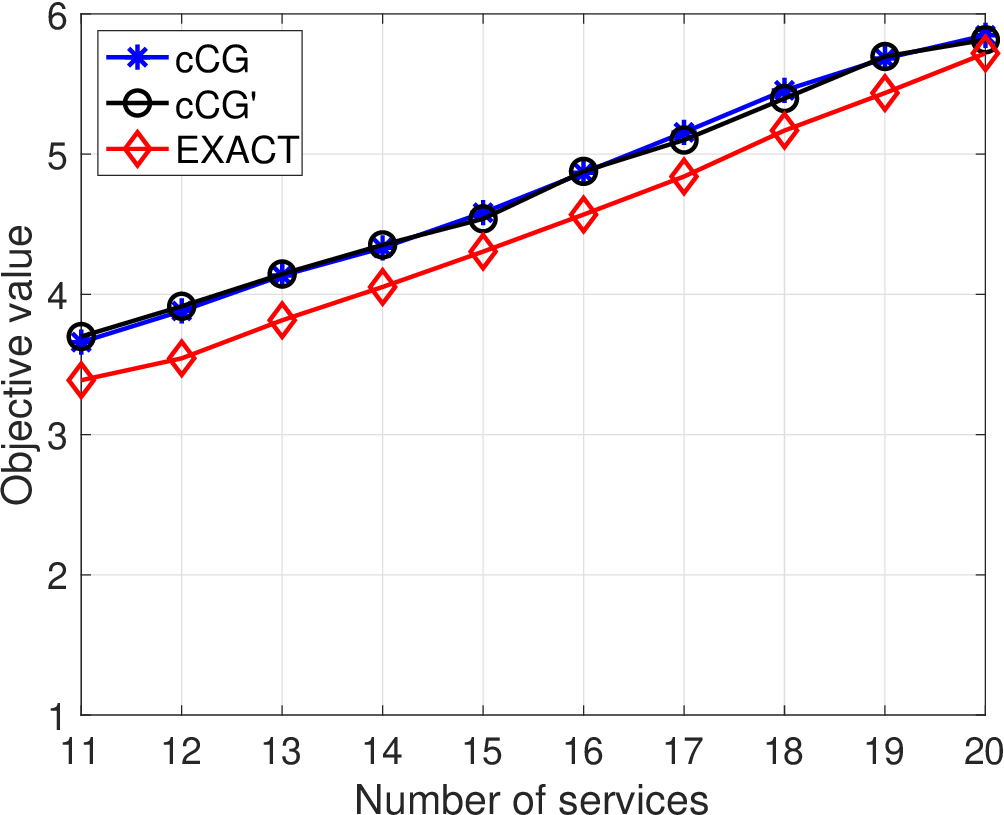}
	\caption{Average objective values returned by \cCG, \cCG', and \EXACT.}	
	\label{obj}
\end{figure}

\begin{figure}[t]
	\centering
	\includegraphics[height=\figuresize]{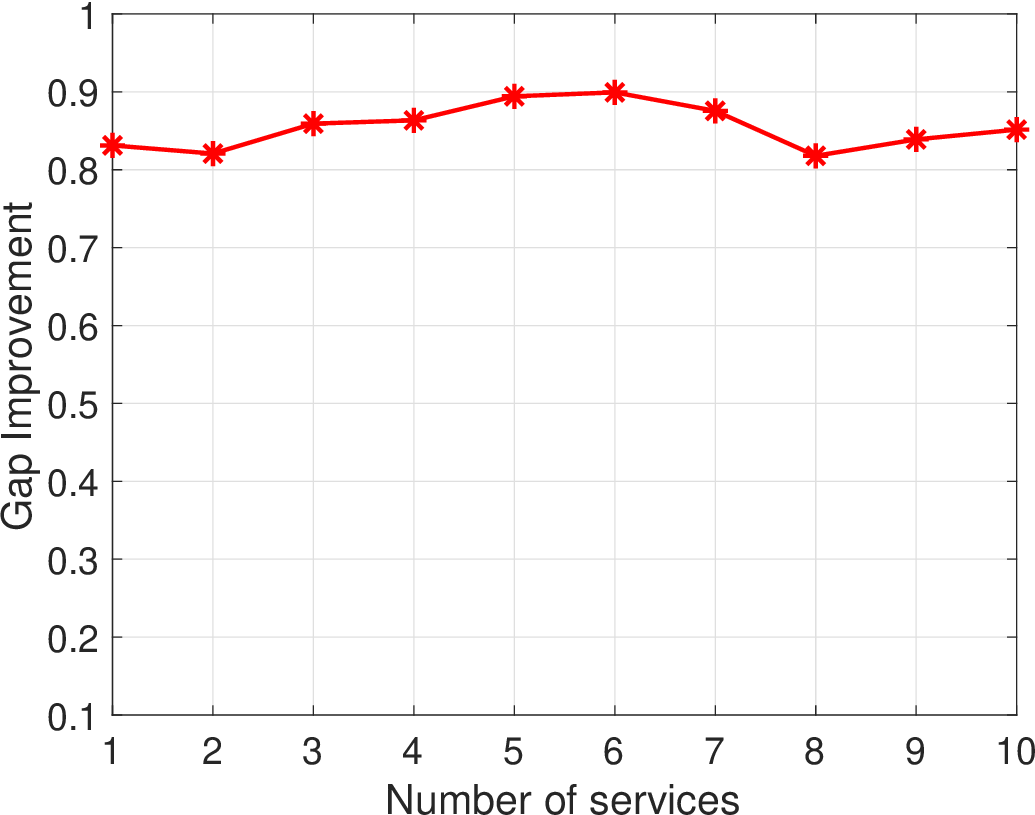}
	\caption{Average optimality gap improvement of the \LP relaxation of \eqref{flow_form} over the \LP relaxation of \eqref{milp}.}	
	\label{LPgap}
\end{figure}

\rev{\cref{nfeasA} plots the number of feasible problem instances  returned by the proposed \cCG and \vcCG, \LPoR, \LPdR, and \EXACT.}
First, from \rev{\cref{nfeasA}}, 
we observe that the blue-star curve corresponding to \cCG is almost identical to the black-circle curve corresponding to \vcCG.
This is reasonable as the \LP acceleration technique in \cCG is designed to speed up the solution procedure by avoiding solving too many \MILP problems of the form \eqref{subP} and generally does not affect the quality of the returned solution\footnote{The slight difference between the curves of \cCG and \vcCG is due to the fact that problem \eqref{subP} may have different optimal solutions and solving \eqref{splp2} may return an optimal solution of \eqref{subP}  that is different from the solution returned by directly solving \eqref{subP}; see steps 3--5 of \cref{alg0}.}.
\rev{Second,  the proposed  \cCG and \vcCG significantly outperform \LPoR and \LPdR in terms of finding feasible solutions for more problem instances.
	Indeed, when $|\K|=20$, both \cCG and \vcCG can find feasible solutions for about $70$ problem instances while \LPoR and \LPdR can only find feasible solutions for less than $20$ problem instances.
	This is due to the reason that \cCG and \vcCG explicitly take the (complex) E2E reliability and delay constraints of the services into consideration (by solving subproblem \eqref{subP}) while  \LPoR and \LPdR only consider relaxations of these constraints (obtained by relaxing binary variables into continuous variables), thereby leading to ineffective solutions.}
\rev{Finally,}
we can observe from \cref{nfeasA} that \cCG~\rev{is} capable of finding feasible solutions for almost all (truly) feasible problem instances, as \EXACT is able to {find} feasible solutions for all (truly) feasible problem instances and the {gap in} the number of feasible problem instances solved by \EXACT and \cCG is small.
\rev{The effectiveness of the proposed \cCG and \vcCG is further justified in  \cref{obj}, where we plot the average objective values returned by \cCG, \vcCG, and \EXACT.}
\rev{From \cref{obj}, we observe that \cCG and \vcCG can return  a high-quality} solution for the \NS problem, as \EXACT is able to return an optimal solution for the \NS problem and the difference of the objective values returned by \rev{\EXACT, \cCG, and \vcCG} is small.


{To gain more insight into the super performance of the proposed \cCG, we evaluate the tightness of the \LP relaxation \eqref{pLP} by comparing the relative optimality gap improvement defined by}
\begin{equation*}
	\frac{\nu(\text{RLP})-\nu(\text{LP-I})}{\nu(\text{MILP})-\nu(\text{LP-I})},
\end{equation*}
where $\nu(\text{RLP})$, $\nu(\text{LP-I})$, and $\nu(\text{MILP})$ are the optimal values of problems \eqref{pLP}, \eqref{lp}, and \eqref{milp}, respectively.
Recall that the proposed \cCG is essentially  an \LP relaxation rounding algorithm equipped the \LP relaxation \eqref{pLP} and an optimal rounding strategy; see \cref{subsec:remark}.
The results in \cref{LPgap} show that in all cases, the relative optimality gap improvement is larger than $0.8$, 
confirming that relaxation \eqref{pLP} is indeed a very tight \LP relaxation.
This is the reason why the proposed \cCG is able to find a high-quality feasible solution for the \NS problem, as can be seen in \cref{nfeasA,obj}.


\begin{figure}[t]
	\centering
	\includegraphics[height=\figuresize]{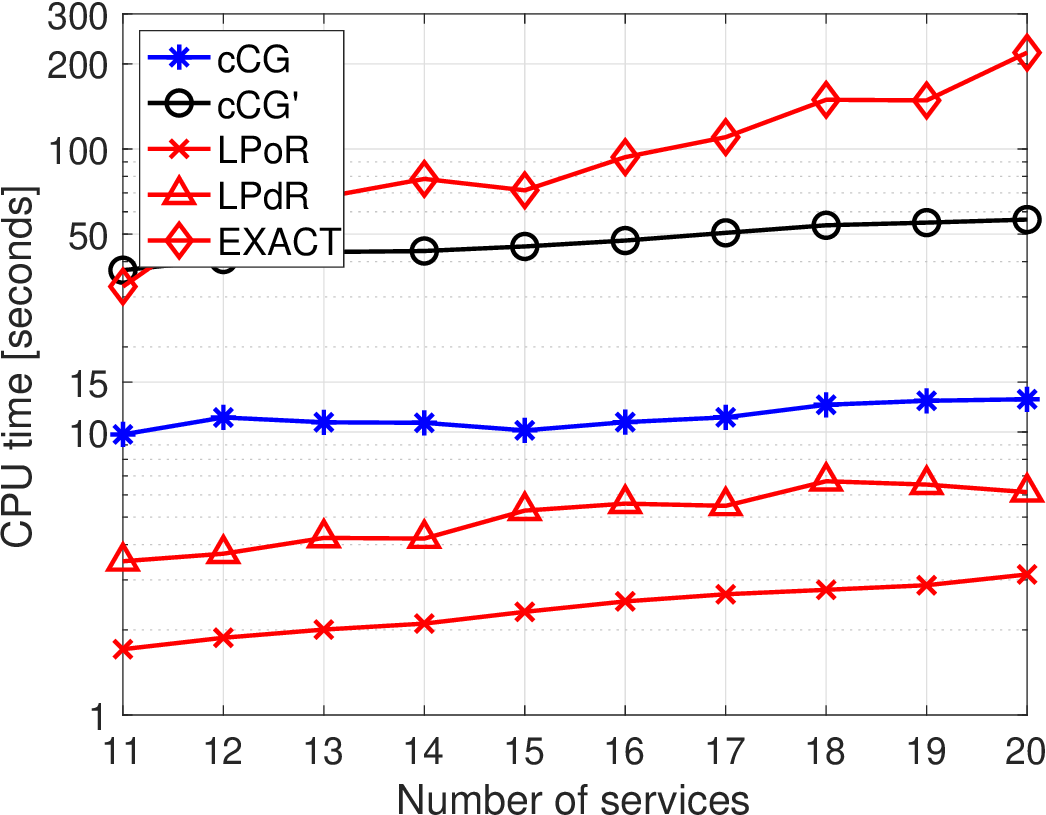}
	\caption{The average CPU time taken by \cCG, \cCG', \rev{\LPoR, \LPdR,} and \EXACT.}	
	\label{timeA}
\end{figure}

The solution efficiency comparison of the proposed \cCG and \vcCG, \rev{\LPoR, \LPdR,} and \EXACT is plotted in \cref{timeA}.
First, as expected, \cCG is much more efficient than \vcCG, which demonstrates that the \LP acceleration technique indeed can improve the performance of the proposed \cCG algorithm. 
Second, compared with \EXACT, \cCG  is significantly more efficient. 
In particular, in all cases, the average CPU times taken by \cCG are less than $15$ seconds, while the average CPU times taken by \EXACT can even be more than $200$ seconds (when $|\K|=20$).
In addition, with the increasing of the number of services, the CPU time taken by \EXACT generally grows much more rapidly than that taken by the proposed \cCG, as illustrated in \cref{timeA}.
\rev{Finally, we observe from \cref{timeA} that the difference of CPU times taken by \LPoR, \LPdR, and \vcCG is not significant; in all cases, the average CPU time is larger than $1$ second and less than $15$ seconds.
	Overall,  \LPoR and \LPdR perform better than \vcCG in terms of solution efficiency, but this comes at the cost of significantly inferior solution quality, as illustrated in \cref{nfeasA}.}

To gain more insight into the computational efficiency of the proposed algorithms \cCG and \vcCG, in \rev{\cref{nMILP,SPMILP,iter,allColumn}}, we further plot the average number of solved \MILP{s} in the form \eqref{subP}, 
\rev{the average CPU time spent in solving \eqref{subP}},
the number of iterations needed for the convergence of the first stage of \cref{alg1},
and the number of used variables $\{t^{k,c}\}$ when the first stage of \cref{alg1} is terminated, respectively.
From \cref{nMILP,SPMILP}, we can observe that compared with \vcCG, \cCG solves a much smaller number of \MILP{s}, \rev{and takes a much smaller CPU time to solve the \MILP{s}.
	This} is the main reason why \cCG is much faster than \vcCG.
From \cref{iter}, the number of iterations needed for the convergence of the first stage of \cref{alg1} is very small (e.g., smaller than $10$ in all cases).
As a result, the number of used variables $\{t^{k,c}\}$ (when the first stage of \cref{alg1} is terminated) is also very small (e.g., smaller than $90$ in all cases), as demonstrated in \cref{allColumn}.
Notice that this is also the reason why the CPU time taken by \cCG does not grow significantly with the increasing number of service $|\K|$.  

Based on the above results, we can conclude that the proposed \cCG algorithm can find a high-quality feasible solution for the \NS problem (as it is essentially an \LP~\rev{relaxation} rounding algorithm equipped with a very strong \LP relaxation and an optimal rounding strategy) while still enjoying high computational efficiency (due to the decomposition nature).


\begin{figure}[t]
	\centering
	\includegraphics[height=\figuresize]{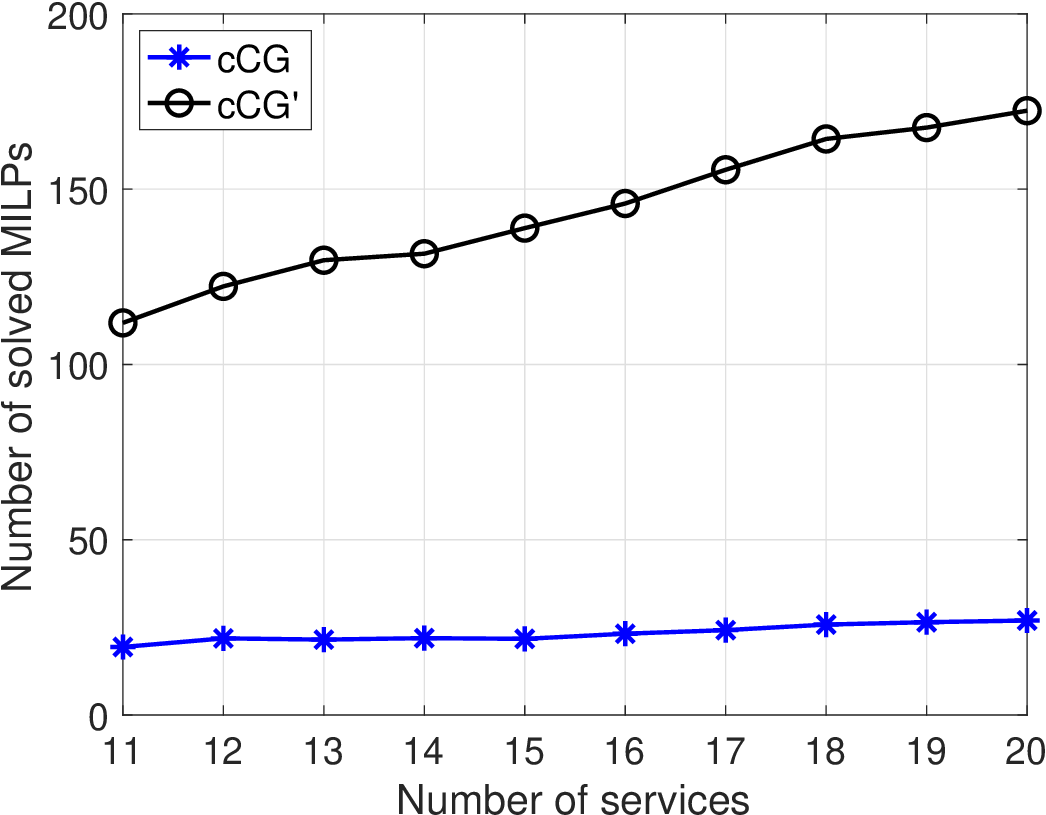}
	\caption{Average number of solved \MILP{s} by \cCG and \cCG'.}	
	\label{nMILP}
\end{figure}

\begin{figure}[t]
	\centering
	\includegraphics[height=\figuresize]{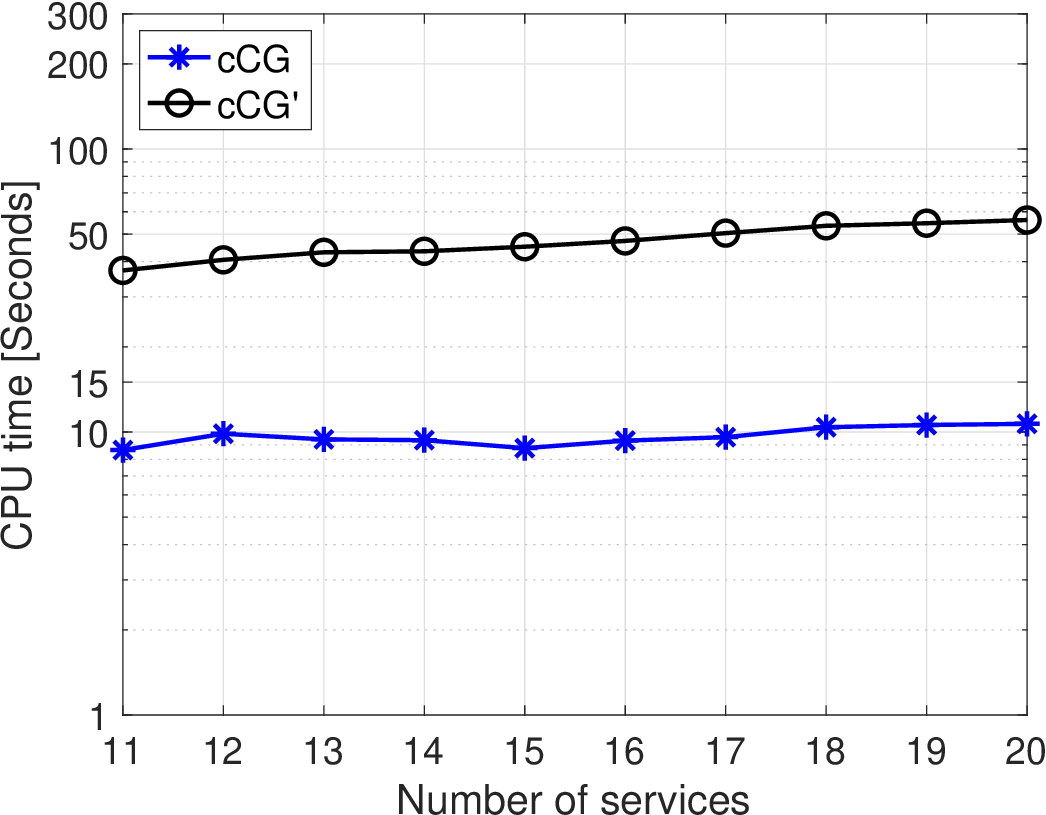}
	\caption{\rev{The average CPU time spent in solving \MILP{s} by \cCG and \vcCG.}}	
	\label{SPMILP}
\end{figure}


\begin{figure}[t]
	\centering
	\includegraphics[height=\figuresize]{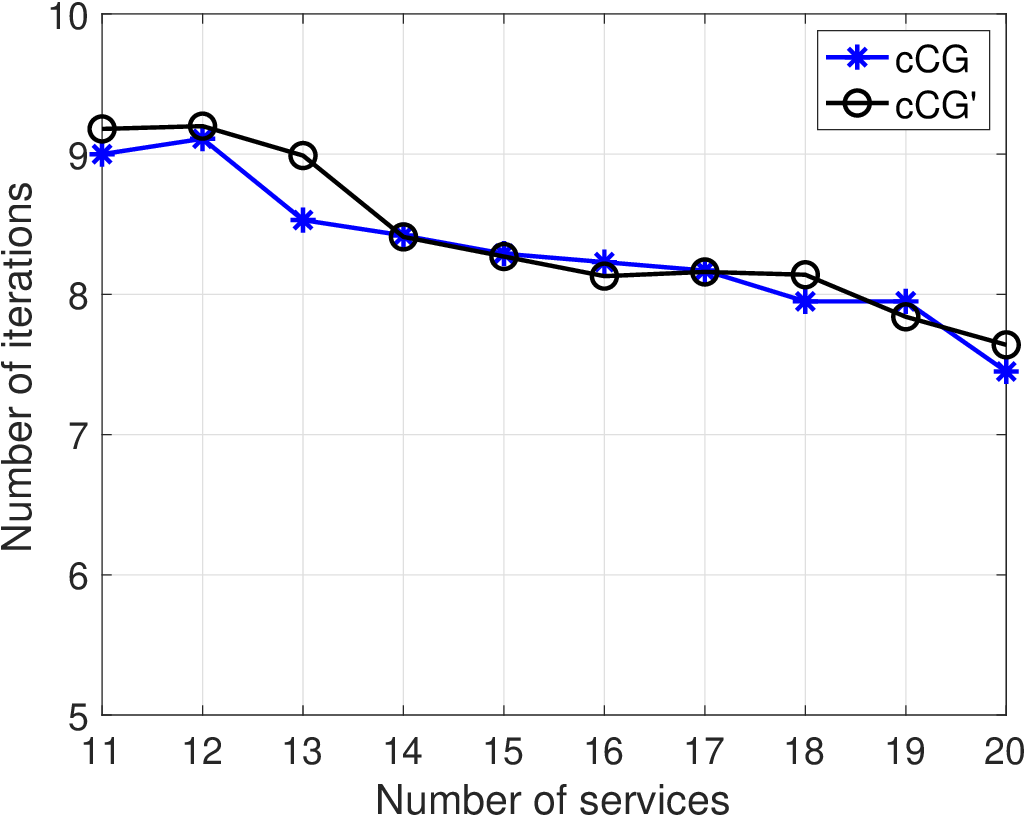}
	\caption{Average number of iterations needed by \cCG and \cCG' (for convergence).}	
	\label{iter}
\end{figure}

\begin{figure}[t]
	\centering
	\includegraphics[height=\figuresize]{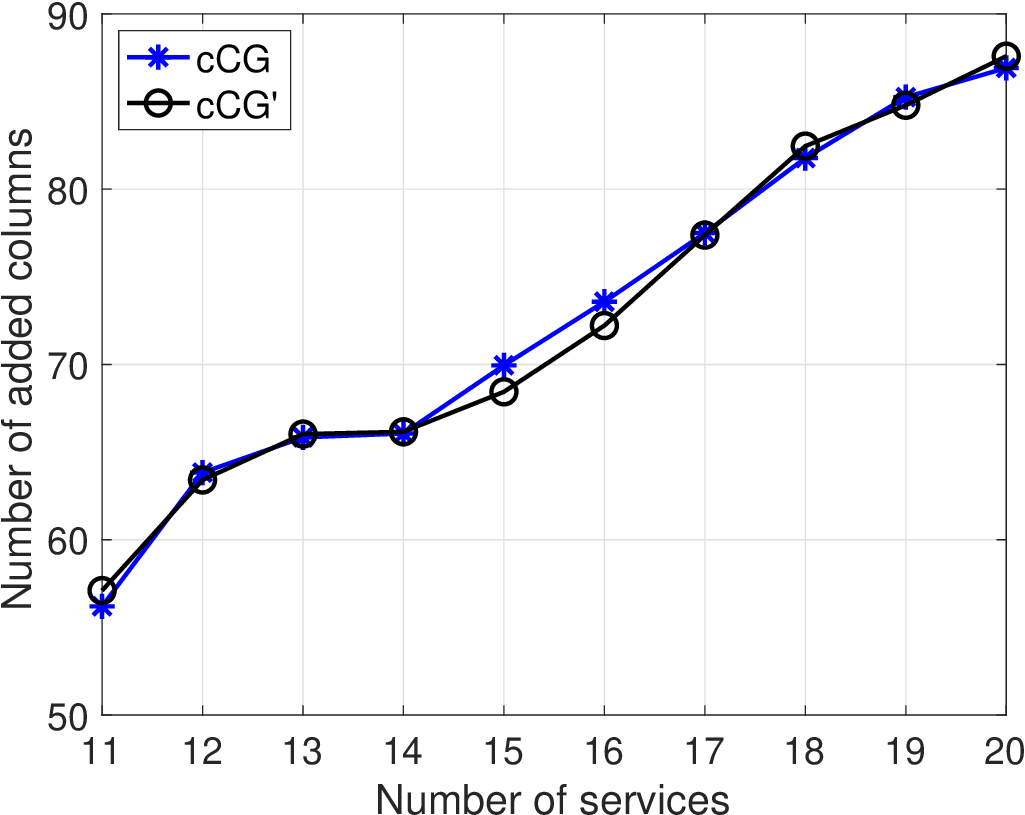}
	\caption{Average number of used columns by \cCG and \cCG'.}	
	\label{allColumn}
\end{figure}

\section{Conclusions}
\label{sect:conclusion}
In this paper, we have investigated the \NS problem that plays a crucial role in 5G and next-generation 6G communication networks.
We first developed new \MINLP and \MILP formulations for the \NS problem which minimizes a weighted sum of total power consumption and total link capacity consumption of the whole network subject to the E2E delay and reliability constraints of all services and the capacity constraints of all cloud nodes and links.
Compared with existing formulations in \cite{Chen2023,Vizarreta2017}, our proposed formulations are much more advantageous in terms of fully exploiting the flexibility of traffic routing and guaranteeing the E2E reliability of all services.
We further proposed a \cCG algorithm for solving large-scale \NS problems.
Two key features of the proposed \cCG algorithm, which makes it particularly suitable for solving \emph{large-scale} \NS problem and for finding \emph{high-quality} solutions, include: (i) it is a decomposition-based algorithm which only requires solving small-scale \LP and \MILP subproblems; 
(ii) it is essentially an \LP~\rev{relaxation} rounding algorithm equipped with a very strong \LP relaxation and an optimal rounding strategy.
Extensive computational results demonstrate the effectiveness of the proposed formulations and the efficiency of the proposed \cCG algorithm.

\rev{As observed in \cref{nMILP}, the proposed \LP acceleration technique effectively avoids solving too many \MILP subproblems  \eqref{subP}. However, as shown in \cref{timeA,SPMILP}, the main computational effort of the proposed \cCG algorithm remains concentrated on solving these \MILP subproblems.
In the future work, we shall develop efficient customized algorithm for solving these \MILP subproblems to further improve the efficiency of the proposed \cCG algorithm.}
 


\ifthenelse{\longpaper = 1}{\appendices

\section{Proof of \cref{eqformulations}}
\label{appendix1}
To prove \cref{eqformulations}, we need the follow lemma. 
\begin{lemma}\label{basiclemma}
	Suppose that \eqref{minlp} or \eqref{milp} has a feasible solution.
	Then, for \eqref{minlp} and \eqref{milp}, there  exists an optimal solution $(\bx, \by, \br, \bz, \btheta)$ such that for each $k \in \K$, $s \in \F^k \cup \{0\}$, and $p \in \CP$, 
	\begin{itemize}
		\item [(i)] if $b_{i}^{k,s}(\x) = 0 $ holds for all $i \in \I$ (i.e., the source and destination of flow $(k,s)$ are identical), then $\{(i,j) \in \CL \mid z_{ij}^{k,s,p} = 1\}$ is an empty set;
		\item [(ii)] otherwise, the links in $\{(i,j) \in \CL \mid z_{ij}^{k,s,p} = 1\}$ form an  elementary path from the source to the destination of flow $(k,s)$, namely, $i_0\rightarrow i_1 \rightarrow \cdots 
		\rightarrow i_t$ (where $t \in \mathbb{Z}_+$ with $t \geq 1$).
		Moreover, we have
		\begin{align}\label{singlepath}
			& r_{i_0i_1}^{k,s,p} = \cdots = r_{i_{t-1}i_t}^{k,s,p} =: r^{k,s,p}~\text{and} \\
			& \sum_{p \in \CP} r^{k,s,p} = 1.\label{onedatarate}
		\end{align}
	\end{itemize}
\end{lemma}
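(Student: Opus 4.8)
The plan is to start from an arbitrary optimal solution of each formulation and, treating every flow index $(k,s,p)$ independently, to strip away all superfluous edges so that the edges carrying $z_{ij}^{k,s,p}=1$ form exactly one elementary source-to-destination path (or nothing, when the source and destination coincide). The engine driving this clean-up is the objective: its second term equals $\sigma$ times the total link-rate usage, so \emph{deleting} any edge not needed for flow conservation can only decrease the objective and therefore preserves optimality.

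First I would identify the source $u$ and destination $w$ of flow $(k,s)$ (the nodes hosting $f_s^k$ and $f_{s+1}^k$, with the conventions $u=S^k$ when $s=0$ and $w=D^k$ when $s=\ell_k$). By \eqref{onlyonenode} each function is placed at a unique cloud node, so $b_i^{k,s}(\bx)$ equals $-1$ at $u$, $+1$ at $w$, and $0$ elsewhere when $u\neq w$ (case (ii)), and is identically $0$ when $u=w$ (case (i)). For \eqref{minlp}, the binary vector $\{z_{ij}^{k,s,p}\}$ satisfies \eqref{SFC1} with this right-hand side. In case (i) the edge set $\{(i,j):z_{ij}^{k,s,p}=1\}$ is balanced at every node, hence decomposes into cycles, and I set all these $z$'s (and, through \eqref{nonlinearcons}, the corresponding $r_{ij}^{k,s,p}$) to zero. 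In case (ii) the same edge set carries one unit from $u$ to $w$, so deleting the cycles from any $u$--$w$ walk yields an elementary $u$--$w$ path $P$ inside the support; I keep $z_{ij}^{k,s,p}=1$ exactly on $P$ and reset the rest to zero. Since \eqref{nonlinearcons} forces $r_{ij}^{k,s,p}=r^{k,s,p}z_{ij}^{k,s,p}$, the surviving rates are automatically equal to the common value $r^{k,s,p}$ along $P$, which is \eqref{singlepath}, while \eqref{relalambdaandx11} supplies \eqref{onedatarate}.

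For \eqref{milp} the argument is parallel, but the constant-rate property comes from a different source: constraint \eqref{SFC0} caps the out-degree of every node at one for each $(k,s,p)$, so the positive-$r$ support of a single index cannot split. Combined with the per-path conservation imposed by \eqref{SFC3} at non-cloud nodes and with \eqref{SFC4}--\eqref{SFC5} collapsing to equalities at every intermediate cloud node (such a node satisfies $x_i^{k,s}=x_i^{k,s+1}=0$ by \eqref{onlyonenode}), the flow of index $(k,s,p)$ is a single elementary $u$--$w$ path carrying a constant rate, plus disjoint cycles that I cancel. Defining $r^{k,s,p}$ to be this constant gives \eqref{singlepath}, and summing the source-node balance \eqref{SFC2} over $p\in\mathcal{P}$ yields $\sum_{p}r^{k,s,p}=1$, that is, \eqref{onedatarate}.

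In both cases I must finally check that the surgery preserves feasibility, and this verification is the part demanding the most care. Since every modification only deletes edges, the total rate on each link does not increase, so \eqref{linkcapcons1} and the valid inequalities \eqref{validineq1} still hold; each path delay $\sum_{(i,j)}d_{ij}z_{ij}^{k,s,p}$ only shrinks, so \eqref{maxdelay1}, \eqref{consdelay2funs1}, and the budget \eqref{delayconstraint} remain satisfied; and removing edges can only relax \eqref{zzrelcons} and the reliability constraint \eqref{E2Ereliability2}, whose link coefficients $\log\gamma_{ij}$ are nonpositive. The main obstacle is precisely this bookkeeping across the two different constraint systems: confirming that the single-path/constant-rate structure is genuinely \emph{forced} (by \eqref{nonlinearcons} for \eqref{minlp}, and by the out-degree cap \eqref{SFC0} together with the collapse of \eqref{SFC4}--\eqref{SFC5} to equalities for \eqref{milp}), and that no QoS constraint is broken once the edges are deleted. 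Because the objective is non-increasing at every step and we began at an optimum, the resulting structured solution is again optimal, which completes the argument.
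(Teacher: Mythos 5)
Your overall strategy---per-index flow decomposition, cycle removal justified by the monotone objective, \eqref{nonlinearcons} forcing the constant rate in \eqref{minlp}, and the out-degree cap \eqref{SFC0} combined with the per-$p$ conservation \eqref{SFC3}--\eqref{SFC5} forcing it in \eqref{milp}---is exactly the paper's argument, and the \eqref{minlp} half is complete. However, the \eqref{milp} half has a genuine gap: the lemma's conclusion is about the support of $\bz$, while your surgery only produces the desired structure for the support of $\br$. In \eqref{milp} there is no analogue of \eqref{SFC1}, so after you cancel the cycles nothing forces $\{(i,j): z_{ij}^{k,s,p}=1\}$ to coincide with the positive-$r$ path; it can contain arbitrary extra links with $r_{ij}^{k,s,p}=0$ (constraint \eqref{linearcons} only gives $r\leq z$), and you never say that you reset $z_{ij}^{k,s,p}$ to the indicator of the cleaned $r$-support.

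The more serious instance of this omission is an index $p$ whose positive-$r$ support is empty after the clean-up (all of the rate of flow $(k,s)$ travels on the other paths, so $r^{k,s,p}=0$). Case (ii) of the lemma still requires the $z$-support of \emph{that} $p$ to be an elementary source-to-destination path with $t\geq 1$, and setting all its $z$'s to zero violates this. The paper handles it by assigning $z_{ij}^{k,s,p}:=z_{ij}^{k,s,p'}$ for some index $p'$ whose positive-$r$ support is a genuine source-to-destination path (such a $p'$ exists because the aggregate flow \eqref{SFC2} has value one), after checking that this copy preserves \eqref{SFC0}, \eqref{linearcons}, \eqref{zzrelcons}, and \eqref{consdelay2funs1}. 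This step is not cosmetic: the lemma is invoked in the proof of \cref{eqformulations} precisely to verify that the constructed point satisfies the equality constraints \eqref{SFC1} for \emph{every} $p\in\CP$, which fails if a zero-rate index has an empty or non-path $z$-support. Adding the explicit reset of $\bz$ on positive-rate indices and the copying device on zero-rate indices would close the gap.
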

\begin{proof}
	We first prove the result for formulation \eqref{minlp}.
	Let $(\bx, \by, \br, \bz, \btheta)$ be an optimal solution of \eqref{minlp}, and let $k \in \K$, $s \in \F^k\cup \{0\}$, and $p \in \CP$.
	If $b_{i}^{k,s}(\x) = 0$ holds for all $i \in \I$,
	from \eqref{SFC1} and the classic flow decomposition result \cite[Theorem 3.5]{Ahuja1993}, the links in $\{(i,j) \in \CL \mid z_{ij}^{k,s,p} = 1\}$ form (possibly) multiple directed cycles. 
	Otherwise, from \eqref{onlyonenode}, \eqref{SFC1}, and the traditional flow decomposition result,
	the links in $\{(i,j) \in \CL \mid z_{ij}^{k,s,p} = 1\}$  form a path from the source to the destination of flow $(k,s)$ and (possibly) multiple directed cycles.
	In both cases, setting $r_{ij}^{k,s,p} = z_{ij}^{k,s,p}=0$ for the links $(i,j)$ on the circles will yield another feasible solution of \eqref{minlp} with an objective value not greater than that of  $(\bx, \by, \br, \bz, \btheta)$.
	Therefore, the newly constructed feasible solution must also be optimal.
	By recursively  applying the above argument, we get the desired result for formulation \eqref{minlp}.
	
	Next, we prove the result for formulation \eqref{milp}.
	Let $(\bx, \by, \br, \bz, \btheta)$ be an optimal solution of \eqref{milp}, and let $k \in \K$ and $s \in \F^k\cup \{0\}$.
	If $b_{i}^{k,s}(\x)=0$ holds for all $i \in \I$, from \eqref{SFC2}--\eqref{SFC5} and  the classic flow decomposition result, the flows in $\left\{ \sum_{p \in \CP} r_{ij}^{k,s,p} \right\}_{(i,j) \in \CL}$ can be decomposed into flows with positive data rates on (possibly) multiple direct cycles.
	By setting $r_{ij}^{k,s,p}= z_{ij}^{k,s,p} =0$ for  all $(i,j)\in \CL$ and $p \in \CP$, we will obtain another optimal solution of \eqref{milp} that fulfills the statement in case (i).
	Otherwise, from \eqref{SFC2}--\eqref{SFC5} and  the classic flow decomposition result,  the flows in $\left\{ \sum_{p \in \CP} r_{ij}^{k,s,p}\right\}_{(i,j) \in \CL}$ can be decomposed into  flows with positive data rates on at least one path from the source to the destination of flow $(k,s)$ and flows with positive data rates on (possibly) multiple direct cycles.
	Similarly, setting $r_{ij}^{k,s,p} =z_{ij}^{k,s,p}= 0$ for all $p \in \CP$ and all links $(i,j)$ on the circles, we will obtain another optimal solution of \eqref{milp} where $\left\{ \sum_{p \in \CP} r_{ij}^{k,s,p} \right\}_{(i,j) \in \CL}$ can be decomposed into flows with positive data rates on at least one path from the source to the destination of flow $(k,s)$.
	
	Now, for $p \in \CP$, from \eqref{SFC0} and  \eqref{SFC3}--\eqref{SFC5}, either the links in $\left\{  (i,j) \in \CL \mid r_{ij}^{k,s,p} >0 \right\}$ form a path $i_0\rightarrow i_1 \rightarrow \cdots 
	\rightarrow i_t$ from the source to the destination of flow $(k,s)$ such that \eqref{singlepath} and $r^{k,s,p} > 0$ hold, or $\left\{  (i,j) \in \CL \mid r_{ij}^{k,s,p} >0 \right\}$ is an empty set.
	For the first case, we set 
	\begin{equation*}
		\begin{aligned}
			z_{ij}^{k,s,p}=\left\{\begin{array}{ll}
				1,&\text{if~}r_{ij}^{k,s,p}>0;\\[5pt]
				0,&\text{otherwise},
			\end{array}
			\right.
		\end{aligned}
		\forall~(i,j) \in \CL, 
	\end{equation*}
	while for the second case, we set $z_{ij}^{k,s,p}= z_{ij}^{k,s,p'}$ where $p'\in \CP$ such that $\left\{  (i,j) \in \CL  \mid r_{ij}^{k,s,p'} >0 \right\}$ forms a path with positive data rates from the source to the destination of flow $(k, s)$.
	Notice that the above  $p'$ exists since  $\left\{ \sum_{p \in \CP} r_{ij}^{k,s,p}  \right\}_{(i,j) \in \CL}$ can be decomposed into flows with positive data rates on at least one path from the source to the destination of flow $(k,s)$.
	Observe that by \eqref{SFC2}, the total fraction of data rates among all paths from the source to the destination of flow $(k,s)$ should be equal to $1$, and thus \eqref{onedatarate} holds.
	Therefore, the resultant optimal solution fulfills the statement in case (ii).
\end{proof}

	\noindent {\bf Proof of \cref{eqformulations}}.
	In this proof, in order to differentiate the feasible points of the two problems, we use $ (\boldsymbol{X},\boldsymbol{Y},\boldsymbol{R},\boldsymbol{Z},\boldsymbol{\Theta}) $ and 
	$(\boldsymbol{x},\boldsymbol{y},\boldsymbol{r},\boldsymbol{z},\boldsymbol{\theta})$
	to denote the optimal solutions of problems \eqref{minlp} and \eqref{milp}, respectively.
	We prove the theorem by showing that, 
	given an optimal solution $ (\boldsymbol{X},\boldsymbol{Y},\boldsymbol{R},\boldsymbol{Z},\boldsymbol{\Theta}) $ of problem \eqref{minlp} satisfying the conditions in  \cref{basiclemma}, we can construct a feasible solution  $(\boldsymbol{x},\boldsymbol{y},\boldsymbol{r},\boldsymbol{z},\boldsymbol{\theta})$  of problem \eqref{milp} such that the two problems have the same objective value at the corresponding solutions and vice versa.
	
	Given an optimal solution $ (\boldsymbol{X},\boldsymbol{Y},\boldsymbol{R},\boldsymbol{Z},\boldsymbol{\Theta}) $ of problem \eqref{minlp} that satisfies the conditions in \cref{basiclemma},
	we construct a point  $ (\boldsymbol{x},\boldsymbol{y},\boldsymbol{r},\boldsymbol{z},\boldsymbol{\theta}) $  by setting $ \boldsymbol{x}=\boldsymbol{X}, ~\boldsymbol{y}=\boldsymbol{Y},~\boldsymbol{z}=\boldsymbol{Z},~\boldsymbol{\theta}=\boldsymbol{\Theta}$, and
	\begin{align}
		& r_{ij}^{k,s,p}= R_{ij}^{k,s,p},\nonumber                                                                                                             \\
		& \qquad\quad \forall ~(i,j) \in \mathcal{L},~k \in \mathcal{K},~s \in \mathcal{F}^k\cup \{0\}, ~p \in \mathcal{P}. \label{tmeq1}
	\end{align}
	Clearly, problems \eqref{minlp} and \eqref{milp}  have the same objective values at points $ (\boldsymbol{X},\boldsymbol{Y},\boldsymbol{R},\boldsymbol{Z},\boldsymbol{\Theta}) $ and  $ (\boldsymbol{x},\boldsymbol{y},\boldsymbol{r},\boldsymbol{z},\boldsymbol{\theta}) $, respectively, 
	and constraints \rev{\eqref{onlyonenode}--\eqref{nodecapcons}, \eqref{linkcapcons1}--\eqref{E2Ereliability2},  \eqref{delayconstraint}--\eqref{rzbounds}, and \eqref{consdelay2funs1}} hold at $ (\boldsymbol{x},\boldsymbol{y},\boldsymbol{r},\boldsymbol{z},\boldsymbol{\theta}) $.
	Since \eqref{validineq1} and \eqref{validineq2} are valid inequalities for problem \eqref{minlp}, by construction, \eqref{validineq1} and \eqref{validineq2} also hold at $ (\boldsymbol{x},\boldsymbol{y},\boldsymbol{r},\boldsymbol{z},\boldsymbol{\theta}) $.
	Therefore, we only need to show that constraints \eqref{SFC0}--\eqref{SFC5} hold at point $ (\boldsymbol{x},\boldsymbol{y},\boldsymbol{r},\boldsymbol{z},\boldsymbol{\theta}) $.
	By construction and \cref{basiclemma},  either $\{(i,j) \in \CL\mid z_{ij}^{k,s,p} = 1\}$ is an empty set, or 
	the links in $\{(i,j) \in \CL \mid z_{ij}^{k,s,p} = 1\}$ form an  elementary path from the source to the destination of flow $(k,s)$. 
	As such, \eqref{SFC0} holds at $ (\boldsymbol{x},\boldsymbol{y},\boldsymbol{r},\boldsymbol{z},\boldsymbol{\theta}) $.
	From \eqref{nonlinearcons} and $R^{k,s,p}\in [0,1]$, \eqref{linearcons} is satisfied.
	Observe that 
	\begingroup
	\allowdisplaybreaks
	\begin{align*}
		& \sum_{j: (j,i) \in \mathcal{{L}}} r_{ji}^{k, s, p} - \sum_{j: (i,j) \in \mathcal{{L}}} r_{ij}^{k, s,  p}                       \\
		&\qquad \stackrel{(a)}{=}\sum_{j: (j,i) \in \mathcal{{L}}} R^{k,s,p}Z_{ji}^{k, s,p} -  \sum_{j: (i,j) \in \mathcal{{L}}} R^{k,s,p}Z_{ij}^{k, s,p}\\
		&\qquad = R^{k,s,p} \left ( \sum_{j: (j,i) \in \mathcal{{L}}} Z_{ji}^{k, s,p} - \sum_{j: (i,j) \in \mathcal{{L}}}Z_{ij}^{k, s,p} \right ) \\
		&\qquad \stackrel{(b)}{=} R^{k,s,p} b_{i}^{k,s} (\boldsymbol{X})= R^{k,s,p} b_{i}^{k,s} (\boldsymbol{x}),
	\end{align*}
	\endgroup
	where (a) follows from \eqref{nonlinearcons} and \eqref{tmeq1} and (b) follows from \eqref{SFC1}.
	Therefore, in the last case of \eqref{SFC1} (i.e., $b_{i}^{k,s} (\boldsymbol{x})=0$), \eqref{SFC3} holds true; and in the second, third, and fourth cases of \eqref{SFC1} (i.e., $b_{i}^{k,s} (\boldsymbol{x})=x_{i}^{k,s+1},~x_{i}^{k,s+1}-x_{i}^{k,s},~-x_{i}^{k,s}$) and $0 \leq R^{k,s,p} \leq 1 $, \eqref{SFC4} and \eqref{SFC5} hold true.
	Finally, summing up the above equations over $p \in \mathcal{P}$ and using \eqref{relalambdaandx11}, we obtain
	\begingroup
	\allowdisplaybreaks
	\begin{align*}
		&   &   & \sum_{p \in \mathcal{P}}\sum_{j: (j,i) \in \mathcal{{L}}} r_{ji}^{k, s, p} - \sum_{p \in \mathcal{P}}\sum_{j: (i,j) \in \mathcal{{L}}} r_{ij}^{k, s,  p} = b_{i}^{k,s} (\boldsymbol{x}),
	\end{align*}
	\endgroup
	showing that \eqref{SFC2} also holds true at $ (\boldsymbol{x},\boldsymbol{y},\boldsymbol{r},\boldsymbol{z},\boldsymbol{\theta}) $.
	
	Next, given an optimal solution $ (\boldsymbol{x},\boldsymbol{y},\boldsymbol{r},\boldsymbol{z},\boldsymbol{\theta}) $ of  \eqref{milp} that satisfies the conditions in \cref{basiclemma},
	we construct a point $ (\boldsymbol{X},\boldsymbol{Y},\boldsymbol{R},\boldsymbol{Z},\boldsymbol{\Theta}) $ by setting $\boldsymbol{X} = \boldsymbol{x}, ~\boldsymbol{Y}=\boldsymbol{y},~\boldsymbol{Z}=\boldsymbol{z},~\boldsymbol{\Theta}=\boldsymbol{\theta}$, and
	\begin{align}
		& R_{ij}^{k,s,p}= r_{ij}^{k,s,p},\nonumber                                                                                                             \\
		& \qquad\quad \forall ~(i,j) \in \mathcal{L},~k \in \mathcal{K},~s \in \mathcal{F}^k\cup \{0\}, ~p \in \mathcal{P}. \label{tmeq2}
	\end{align}
	In addition, for $k \in \K$ and $s \in \F^k\cup \{0\}$, if case (i) of \cref{basiclemma}  happens, we set $R^{k,s,p}= 1/P$ for $p \in \CP$; otherwise, we set $R^{k,s,p} = r^{k,s,p}$ for $p \in \CP$ where $r^{k,s,p}$ is defined in  \eqref{singlepath}.
	Then it is simple to see that problems \eqref{minlp} and \eqref{milp}  have the same objective values at points $ (\boldsymbol{X},\boldsymbol{Y},\boldsymbol{R},\boldsymbol{Z},\boldsymbol{\Theta}) $ and  $ (\boldsymbol{x},\boldsymbol{y},\boldsymbol{r},\boldsymbol{z},\boldsymbol{\theta}) $, respectively, 
	and constraints \rev{\eqref{onlyonenode}--\eqref{nodecapcons}, \eqref{linkcapcons1}--\eqref{E2Ereliability2}, and \eqref{delayconstraint}--\eqref{consdelay2funs1}} hold at $ (\boldsymbol{X},\boldsymbol{Y},\boldsymbol{R},\boldsymbol{Z},\boldsymbol{\Theta}) $.
	From \cref{basiclemma}, \eqref{SFC1} also holds at  $ (\boldsymbol{X},\boldsymbol{Y},\boldsymbol{R},\boldsymbol{Z},\boldsymbol{\Theta}) $.
	We complete the proof by showing that, 
	for $k \in \K$, $s \in \F^k \cup \{0\}$, and $p \in \CP$, \eqref{relalambdaandx11} and \eqref{nonlinearcons} also hold at $ (\boldsymbol{X},\boldsymbol{Y},\boldsymbol{R},\boldsymbol{Z},\boldsymbol{\Theta}) $.
	There are two cases.
	\begin{itemize}
		\item If case (i) of \cref{basiclemma} happens, by  $R^{k,s,p}= 1/P$ for $p \in \CP$, \eqref{relalambdaandx11} holds. 
		In addition, $R_{ij}^{k,s,p}=Z_{ij}^{k,s,p}=0$ must hold for all $(i,j) \in \CL$, and thus \eqref{nonlinearcons} also holds true.
		\item Otherwise, case (ii) of \cref{basiclemma} happens. 
		Then from \eqref{singlepath} and \eqref{onedatarate}, \eqref{relalambdaandx11} and \eqref{nonlinearcons} also hold at $ (\boldsymbol{X},\boldsymbol{Y},\boldsymbol{R},\boldsymbol{Z},\boldsymbol{\Theta}) $. 
	\end{itemize}

\section{Proof of \cref{eqrelaxations}}
\label{appendix4}
The proof of Theorem 2 is divided into two main steps:
 (i) the equivalence of problems \eqref{nlp} and \eqref{lp2} and (ii) the equivalence of problems \eqref{lp2} and \eqref{lp}.
 Next, we show these two equivalences one by one.
\subsection{Equivalence of Problems \eqref{nlp} and \eqref{lp2}}
In order to prove the equivalence of problems \eqref{nlp} and \eqref{lp2}, we need the following lemma which characterizes a property of the feasible solutions of problem \eqref{nlp}.

\begin{lemma}\label{nlpproperty}
	Let $ (\bar{\bx},\bar{\by},\bar{\br},\bar{\bz},\bar{\btheta}) $ be a feasible solution of problem \eqref{nlp}.
	Then 
	\begin{equation}\label{tmeq9-1}
		\bar{r}_{ij}^{k,s} = \sum_{p \in \CP}\bar{r}_{ij}^{k,s,p}\leq 1,~  \forall ~(i,j) \in \mathcal{L},~k\in\mathcal{K},~s \in \mathcal{F}^k\cup \{0\}
	\end{equation}
	and 
	$({\bx},{\by},{\br},{\bz},{\btheta}) $ defined by  
	\begin{align}
		& {\bx} = \bar{\bx}, ~{\by}=\bar{\by},~{\btheta}=\bar{\btheta},\label{tmeq6-0} \\
		& {r}_{ij}^{k,s,p}=\left\{\begin{array}{ll}
			\bar{r}_{ij}^{k,s}, & \text{if~}p =1;              \\
			0,                                    & \text{otherwise},\label{tmeq6-1}
		\end{array}
		\right.\\
		& {r}^{k,s,p}=\left\{\begin{array}{ll}
			1, & \text{if~}p =1;              \\
			0, & \text{otherwise},\label{tmeq7-1}
		\end{array}
		\right.\\
		& {z}_{ij}^{k,s,p}=\bar{r}_{ij}^{k,s}~\text{and}~  {z}_{ij}^k=\bar{z}_{ij}^k,\nonumber\\
		&~~~~ \forall~(i,j) \in \mathcal{L}, ~k\in\mathcal{K},~s \in \mathcal{F}^k\cup \{0\}, ~p \in \mathcal{P}, \label{tmeq8-1}
	\end{align}
	is also a feasible solution of \eqref{nlp} with 
	the same objective value as that of  $ (\bar{\bx},\bar{\by},\bar{\br},\bar{\bz},\bar{\btheta}) $.
\end{lemma}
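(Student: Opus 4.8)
The plan is to establish both claims by exploiting the defining nonlinear relation \eqref{nonlinearcons}, namely $\bar r_{ij}^{k,s,p} = \bar r^{k,s,p}\,\bar z_{ij}^{k,s,p}$, together with the normalization \eqref{relalambdaandx11}, $\sum_{p\in\CP}\bar r^{k,s,p}=1$. These two facts allow me to treat the scalars $\{\bar r^{k,s,p}\}_{p\in\CP}$ as convex weights, so that every aggregate over $p$ becomes a convex combination of the corresponding path-indexed quantities; this single observation drives the entire argument.

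First I would prove the bound \eqref{tmeq9-1}. Since $\bar z_{ij}^{k,s,p}\le 1$ by \eqref{rzboundsR} and $\bar r^{k,s,p}\ge 0$ by \eqref{rbounds}, relation \eqref{nonlinearcons} gives $\bar r_{ij}^{k,s,p}=\bar r^{k,s,p}\bar z_{ij}^{k,s,p}\le \bar r^{k,s,p}$; summing over $p$ and using \eqref{relalambdaandx11} yields $\bar r_{ij}^{k,s}=\sum_{p}\bar r_{ij}^{k,s,p}\le\sum_{p}\bar r^{k,s,p}=1$. In particular $\bar r_{ij}^{k,s}\in[0,1]$, which is precisely what makes the assignments $z_{ij}^{k,s,p}=\bar r_{ij}^{k,s}$ and $r_{ij}^{k,s,1}=\bar r_{ij}^{k,s}$ in \eqref{tmeq6-1}--\eqref{tmeq8-1} lie in the admissible ranges.

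The technical heart of the proof is to show that the aggregated flow $\bar r_{ij}^{k,s}$ itself obeys a flow-conservation identity. Multiplying constraint \eqref{SFC1} (evaluated at $\bar{\bz}$) by $\bar r^{k,s,p}$, summing over $p\in\CP$, and substituting \eqref{nonlinearcons} and \eqref{relalambdaandx11}, I would obtain
\begin{equation*}
\sum_{j:(j,i)\in\CL}\bar r_{ji}^{k,s}-\sum_{j:(i,j)\in\CL}\bar r_{ij}^{k,s}=b_i^{k,s}(\bar{\bx}),\qquad\forall\, i\in\I.
\end{equation*}
The very same weighting trick then delivers the two remaining nontrivial inequalities: weighting \eqref{zzrelcons} by $\bar r^{k,s,p}$ and summing gives $\bar r_{ij}^{k,s}\le\bar z_{ij}^k$, while weighting \eqref{consdelay2funs1} and summing gives $\bar\theta^{k,s}\ge\sum_{(i,j)\in\CL}d_{ij}\bar r_{ij}^{k,s}$. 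I expect this uniform use of $\{\bar r^{k,s,p}\}$ as convex multipliers, together with the substitution of the nonlinear identity \eqref{nonlinearcons}, to be the one place requiring genuine care; everything else is bookkeeping.

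Finally I would verify feasibility of $(\bx,\by,\br,\bz,\btheta)$ constraint by constraint. Constraints \eqref{onlyonenode}--\eqref{nodecapcons}, \eqref{E2Ereliability2}, and \eqref{delayconstraint}, together with the bounds \eqref{xyboundsR}, involve only $\bar{\bx}$, $\bar{\by}$, $\{\bar z_{ij}^k\}$, and $\bar{\btheta}$, all copied unchanged, so they hold automatically; the normalization \eqref{relalambdaandx11} and nonnegativity \eqref{rbounds} are immediate from \eqref{tmeq6-1}--\eqref{tmeq7-1}. The nonlinear relation \eqref{nonlinearcons} holds because $r^{k,s,1}=1$ forces $r_{ij}^{k,s,1}=z_{ij}^{k,s,1}=\bar r_{ij}^{k,s}$, while $r^{k,s,p}=0$ for $p\ge 2$ makes both sides vanish. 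Constraint \eqref{SFC1} follows from the aggregated flow-conservation identity above since $z_{ij}^{k,s,p}=\bar r_{ij}^{k,s}$, and \eqref{zzrelcons} and \eqref{consdelay2funs1} follow from the two inequalities just established. For the link capacity \eqref{linkcapcons1} and for the objective, the decisive observation is that $\sum_{p}r_{ij}^{k,s,p}=\bar r_{ij}^{k,s}=\sum_{p}\bar r_{ij}^{k,s,p}$, so every link load and the entire objective are preserved; hence the constructed point is feasible for \eqref{nlp} with the same objective value, which completes the proof.
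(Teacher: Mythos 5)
Your proposal is correct and follows essentially the same route as the paper's proof: the bound \eqref{tmeq9-1} is derived from \eqref{nonlinearcons}, $\bar z_{ij}^{k,s,p}\le 1$, and \eqref{relalambdaandx11}, and the nontrivial constraints \eqref{SFC1}, \eqref{zzrelcons}, and \eqref{consdelay2funs1} at the new point are verified by exactly the same weighting-by-$\bar r^{k,s,p}$-and-summing computation (the paper writes it as expand-and-regroup, but it is the identical algebra). The remaining constraint checks and the preservation of the objective match the paper's argument as well.
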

\begin{proof}
	For each $(i,j) \in \mathcal{L}$, $k\in \mathcal{K}$, and $s \in \mathcal{F}^k\cup \{0\}$, we have 
	\begin{equation*}
		\begin{aligned}
			\bar{r}_{ij}^{k,s} {=} \sum_{p \in \mathcal{P}} \bar{r}_{ij}^{k,s,p} & \stackrel{(a)}=\sum_{p \in \mathcal{P}} \bar{r}^{k,s,p} \bar{z}_{ij}^{k,s,p}  \stackrel{(b)}{\leq}\sum_{p \in \mathcal{P}} \bar{r}^{k,s,p}                \stackrel{(c)}{=} 1,
		\end{aligned}
	\end{equation*}
	where (a), (b), and (c) follow from \eqref{nonlinearcons}, $\bar{z}_{ij}^{k,s,p} \leq 1$, and \eqref{relalambdaandx11}, respectively.
	Hence, \eqref{tmeq9-1} holds.
	By the definition of $(\boldsymbol{x},\boldsymbol{y},\boldsymbol{r},\boldsymbol{z},\boldsymbol{\theta})$ in \eqref{tmeq6-0}--\eqref{tmeq8-1} and the feasibility of $(\bar{\bx},\bar{\by},\bar{\br},\bar{\bz},\bar{\btheta})$ for problem \eqref{nlp}, 
	\eqref{nlp} has the same objective values at points $ ({\bx},{\by},{\br},{\bz},{\btheta}) $ and  
	$ (\bar{\bx},\bar{\by},\bar{\br},\bar{\bz},\bar{\btheta}) $, and 
	constraints \eqref{onlyonenode}--\eqref{nodecapcons}, \eqref{relalambdaandx11}--\eqref{linkcapcons1}, \eqref{E2Ereliability2},  \eqref{delayconstraint}, \eqref{rbounds}, and \eqref{xyboundsR}--\eqref{rzboundsR} hold at $(\boldsymbol{x},\boldsymbol{y},\boldsymbol{r},\boldsymbol{z},\boldsymbol{\theta})$.
	Therefore, to prove the lemma, it suffices to show that \eqref{SFC1}, \eqref{zzrelcons}, and \eqref{consdelay2funs1} hold at point $ ({\bx},{\by},{\br},{\bz},{\btheta}) $.
	We prove them one by one below.
	
	Constraint \eqref{SFC1} holds at point  $ ({\bx},{\by},{\br},{\bz},{\btheta}) $, since
	\begingroup
	\allowdisplaybreaks
	\begin{align*}
		& \sum_{j: (j,i) \in \mathcal{{L}}} {z}_{ji}^{k, s, p} - \sum_{j: (i,j) \in \mathcal{{L}}} {z}_{ij}^{k, s,  p} \\
		& \qquad \stackrel{(a)}{=} \sum_{j: (j,i) \in \mathcal{{L}}} \sum_{p \in \mathcal{P}}\bar{r}_{ji}^{k, s,p}-\sum_{j: (i,j) \in \mathcal{{L}}} \sum_{p \in \mathcal{P}}\bar{r}_{ij}^{k, s,p} \\
		& \qquad \stackrel{(b)}{=}  \sum_{j: (j,i) \in \mathcal{{L}}} \sum_{p \in \mathcal{P}}\bar{r}^{k,s,p}\bar{z}_{ji}^{k, s,p} -                                                                     \sum_{j: (i,j) \in \mathcal{{L}}} \sum_{p \in \mathcal{P}}\bar{r}^{k,s,p}\bar{z}_{ij}^{k, s,p}  \\
		& \qquad = \sum_{p \in \mathcal{P}}\bar{r}^{k,s,p} \left ( \sum_{j: (j,i) \in \mathcal{{L}}} \bar{z}_{ji}^{k, s,p} - \sum_{j: (i,j) \in \mathcal{{L}}}\bar{z}_{ij}^{k, s,p} \right )   \\
		&\qquad  \stackrel{(c)}{=}  \sum_{p \in \mathcal{P}}\bar{r}^{k,s,p} b_{i}^{k,s} (\bar{\x})                                                              \\
		& \qquad \stackrel{(d)}{=} b_{i}^{k,s} (\bar{\x})=  b_{i}^{k,s} ({\x}),
	\end{align*}
	\endgroup
	where  (a), (b), (c), and (d) follow from \eqref{tmeq8-1}, \eqref{nonlinearcons}, \eqref{SFC1}, and \eqref{relalambdaandx11}, respectively.
	Constraint \eqref{zzrelcons} holds at point $ ({\bx},{\by},{\br},{\bz},{\btheta}) $ due to 
	\begingroup
	\allowdisplaybreaks
	\begin{align*}
		{z}_{ij}^{k,s,p}& \stackrel{(a)}{=}  \sum_{p \in \mathcal{P}}\bar{r}_{ij}^{k,s,p} \\
		& \stackrel{(b)}{=} \sum_{p \in \mathcal{P}}\bar{r}^{k,s,p}\bar{z}_{ij}^{k, s,p} \stackrel{(c)}{\leq}  \sum_{p \in \mathcal{P}}\bar{r}^{k,s,p} \bar{z}_{ij}^{k} \stackrel{(d)}{=} \bar{z}_{ij}^k=  {z}_{ij}^k,
	\end{align*}
	\endgroup
	where  (a), (b), (c), and (d) follow from \eqref{tmeq8-1}, \eqref{nonlinearcons}, \eqref{zzrelcons}, and \eqref{relalambdaandx11}, respectively.
	Finally, \eqref{consdelay2funs1} also holds, since 
	\begin{equation*}
		\begin{aligned}
			 \sum_{(i,j) \in \mathcal{{L}}}  d_{ij}  {z}_{ij}^{k, s,p}
			& \stackrel{(a)}{=}     \sum_{(i,j) \in \mathcal{{L}}}  d_{ij} \sum_{p \in \mathcal{P}}  \bar{r}_{ij}^{k, s,p}                 \\
			& \stackrel{(b)}{=}    \sum_{(i,j) \in \mathcal{{L}}}  d_{ij} \sum_{p \in \mathcal{P}}  \bar{r}^{k,s,p} \bar{z}_{ij}^{k, s,p} \\
			& =     \sum_{p \in \mathcal{P}}  \bar{r}^{k,s,p}  \sum_{(i,j) \in \mathcal{{L}}}  d_{ij} \bar{z}_{ij}^{k, s,p}                                      \\
			& \stackrel{(c)}{\leq}  \sum_{p \in \mathcal{P}}  \bar{r}^{k,s,p} \bar{\theta}^{k,s}                                      \stackrel{(d)}{=}  \bar{\theta}^{k,s}                                                                 =    {\theta}^{k,s},
		\end{aligned}
	\end{equation*}
	where (a), (b), (c), and (d) follow from \eqref{tmeq8-1}, \eqref{nonlinearcons}, \eqref{consdelay2funs1}, and \eqref{relalambdaandx11}, respectively.
	The proof is complete.
\end{proof}

\cref{nlpproperty} immediately implies that, if problem \eqref{nlp} is feasible,    
then it must have an optimal solution $(\boldsymbol{x},\boldsymbol{y},\boldsymbol{r},\boldsymbol{z},\boldsymbol{\theta})$ such that
\begin{align}
	& r^{k,s,1} = 1, ~\forall ~k\in\mathcal{K},~s \in \mathcal{F}^k\cup \{0\},\label{tmpeq1}\\
	& r^{k,s,p}=0,~ \forall~k\in\mathcal{K},~s \in \mathcal{F}^k\cup \{0\}, ~p \in \mathcal{P}\backslash\{1\},\\
	& r_{ij}^{k,s,p}=0,\nonumber \\
	& \quad \forall~(i,j) \in \mathcal{L},~k\in\mathcal{K},~s \in \mathcal{F}^k\cup \{0\}, ~p \in \mathcal{P}\backslash\{1\},\label{tmpeq2}\\
	& z_{ij}^{k,s,p}=r_{ij}^{k,s,1} \leq 1,\nonumber\\
	& {\qquad\quad \forall ~(i,j) \in \mathcal{L},~k\in\mathcal{K},~s \in \mathcal{F}^k\cup \{0\}, ~p \in \mathcal{P}.}\label{tmpeq3}
\end{align}
Therefore, we can introduce new variables $r_{ij}^{k,s}:=r_{ij}^{k,s,1}$, and  remove all the other variables $\{r^{k,s,p}\}$, $\{ z_{ij}^{k,s,p} \}$, and  $\{r_{ij}^{k,s,p}\}$ (using Eqs. \eqref{tmpeq1}--\eqref{tmpeq3}) from problem \eqref{nlp}.
In addition, constraints \eqref{relalambdaandx11} and \eqref{nonlinearcons} can be removed from the problem and  constraints \eqref{SFC1}, \eqref{linkcapcons1}, \eqref{zzrelcons},   \eqref{consdelay2funs1}, and \eqref{rzboundsR} can be replaced by \eqref{mediacons2-2}, \eqref{linkcapcons2}, \eqref{zzrelcons2}, \eqref{consdelay2funs2}, and \eqref{NLPrzbounds}, respectively. 
Therefore, problem \eqref{nlp} reduces to problem \eqref{lp2}.\\[5pt]
%
\subsection{Equivalence of Problems \eqref{lp} and \eqref{lp2}}
In this part, we prove that problems \eqref{lp} and \eqref{lp2} are equivalent.
To do this, we first show that constraint \eqref{SFC0} can  be equivalently removed from problem \eqref{lp}.
Let (LP-I') be the problem obtained by removing constraint \eqref{SFC0} from problem \eqref{lp}. 
Observe that, for a feasible solution $(\x, \y, \r, \z, \btheta)$ of problem \eqref{lp} or (LP-I'), if $z_{ij}^{k,s,p} > r_{ij}^{k,s,p}$ holds for some $(i,j) \in \CL$, $k \in \K$, $s \in \F^k\cup \{0\}$, and $p \in \CP$, then setting $z_{ij}^{k,s,p} := r_{ij}^{k,s,p}$ yields another feasible solution with the same objective value. 
This result is formally stated below.
\begin{lemma}\label{obser}
	Setting
	\begin{equation}\label{optcond}
		z_{ij}^{k,s,p} =r_{ij}^{k,s,p}, ~\forall~(i,j)\in \mathcal{L},~k\in \mathcal{K},~s\in \mathcal{F}^k\cup \{0\},~p \in \mathcal{P}
	\end{equation}
	in problem \eqref{lp} or {\rm (LP-I')} does not change its optimal value.
\end{lemma}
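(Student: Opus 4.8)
The plan is to exploit two structural features of the variables $\{z_{ij}^{k,s,p}\}$: they do not appear in the objective of \eqref{lp} (nor of (LP-I')), and in every constraint in which they do appear they enter monotonically. The argument is then a standard two-sided comparison of optimal values. Let $\nu$ denote the optimal value of \eqref{lp} (resp.\ (LP-I')) and let $\nu'$ denote the optimal value of the same problem with the additional equalities \eqref{optcond} imposed. Since imposing \eqref{optcond} only shrinks the feasible region and we are minimizing, I would immediately record $\nu'\ge\nu$. The substance lies in the reverse inequality $\nu'\le\nu$, which I would establish by showing that \emph{any} feasible solution can be modified to satisfy \eqref{optcond} without changing its objective value.

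To carry this out I would take an arbitrary feasible solution $(\bx,\by,\br,\bz,\btheta)$ and define a new point agreeing with it everywhere except that each $z_{ij}^{k,s,p}$ is lowered to $r_{ij}^{k,s,p}$, leaving the variables $z_{ij}^k$ untouched. Because $\{z_{ij}^{k,s,p}\}$ are absent from the objective, the objective value is unchanged, so only feasibility must be checked. The key observation I would make explicit is that the sole constraint forcing $z_{ij}^{k,s,p}$ to be large is \eqref{linearcons} ($r_{ij}^{k,s,p}\le z_{ij}^{k,s,p}$), which the new point satisfies with equality; every other constraint containing $z_{ij}^{k,s,p}$ becomes \emph{easier} to satisfy as $z_{ij}^{k,s,p}$ decreases. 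Using $r_{ij}^{k,s,p}\le z_{ij}^{k,s,p}$ from the original solution together with $d_{ij}\ge 0$, each of these follows from its counterpart at $(\bx,\by,\br,\bz,\btheta)$: the out-degree bound \eqref{SFC0} (present only in \eqref{lp}) follows from $\sum_{j}r_{ij}^{k,s,p}\le\sum_{j}z_{ij}^{k,s,p}\le 1$; the coupling \eqref{zzrelcons} from $r_{ij}^{k,s,p}\le z_{ij}^{k,s,p}\le z_{ij}^k$; the delay lower bound \eqref{consdelay2funs1} from $\sum_{(i,j)}d_{ij}r_{ij}^{k,s,p}\le\sum_{(i,j)}d_{ij}z_{ij}^{k,s,p}\le\theta^{k,s}$; and the box constraint in \eqref{rzboundsR} from $0\le r_{ij}^{k,s,p}\le z_{ij}^{k,s,p}\le 1$. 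All constraints not involving $\{z_{ij}^{k,s,p}\}$ — the placement, capacity, reliability, and data-rate flow-conservation constraints — are inherited verbatim.

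The only care the argument requires, which I would flag as the ``main obstacle'' even though it is bookkeeping rather than a genuine difficulty, is to enumerate \emph{all} constraints in which $z_{ij}^{k,s,p}$ occurs and to confirm the direction of monotonicity in each, so that nothing is overlooked; in particular one must note that \eqref{SFC0} is precisely the constraint dropped in passing from \eqref{lp} to (LP-I'), which is why the lemma can be stated for both problems at once. Once feasibility of the lowered point is established, applying this construction to an optimizer of the unrestricted problem yields $\nu'\le\nu$, and combining with $\nu'\ge\nu$ gives $\nu'=\nu$, as claimed.
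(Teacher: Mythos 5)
Your proposal is correct and is essentially the argument the paper itself gives (the paper presents this lemma as an immediate observation: since the $z_{ij}^{k,s,p}$ are absent from the objective and appear only in \eqref{SFC0}, \eqref{zzrelcons}, \eqref{linearcons}, \eqref{consdelay2funs1}, and the box constraints, lowering each one to $r_{ij}^{k,s,p}$ preserves feasibility and the objective value). Your two-sided comparison of optimal values and the explicit enumeration of the monotone constraints simply make that observation rigorous; nothing is missing.
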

\begin{lemma}
	\label{lppropery1}
	Problem \eqref{lp} and {\rm (LP-I')} are equivalent, that is,  there exists an optimal solution $(\boldsymbol{x},\boldsymbol{y},\boldsymbol{r},\boldsymbol{z},\boldsymbol{\theta})$ of problem (LP-I') such that 
	\begin{align}
		& \sum_{j: (i,j) \in \mathcal{{L}}} z_{ij}^{k, s,  p} \left(= \sum_{j: (i,j) \in \mathcal{{L}}} r_{ij}^{k, s,  p}\right) \leq 1, \nonumber\\
		& \qquad\qquad \forall~i \in \mathcal{I},~k \in \mathcal{K},~s \in \mathcal{F}^k\cup \{0\},~ p \in \mathcal{P}.\label{optcond2}
	\end{align}
\end{lemma}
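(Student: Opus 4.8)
The plan is to exploit that \eqref{lp} is obtained from {\rm(LP-I')} by appending the single constraint \eqref{SFC0}, so that the feasible region of \eqref{lp} is contained in that of {\rm(LP-I')} and hence $\nu(\text{LP-I})\ge\nu(\text{LP-I}')$. It therefore suffices to exhibit an optimal solution of {\rm(LP-I')} that already satisfies \eqref{SFC0}: such a point is feasible for \eqref{lp}, so its objective value bounds $\nu(\text{LP-I})$ from above, which forces $\nu(\text{LP-I})=\nu(\text{LP-I}')$ and makes the point optimal for both problems. Starting from an arbitrary optimal solution of {\rm(LP-I')}, I would first invoke \cref{obser} to assume without loss of generality that $z_{ij}^{k,s,p}=r_{ij}^{k,s,p}$ for all indices, so that the desired conclusion \eqref{optcond2} is exactly the statement that the per-path out-degree $\sum_{j:(i,j)\in\CL}r_{ij}^{k,s,p}\le 1$.

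The construction proceeds flow-by-flow. Fixing $k\in\K$ and $s\in\F^k\cup\{0\}$, I would form the aggregate $\bar r_{ij}^{k,s}:=\sum_{p\in\CP}r_{ij}^{k,s,p}$. Using \eqref{SFC2} directly at the cloud, source, and destination nodes (those listed in $\mathcal{SI}^k$) and summing \eqref{SFC3} over $p$ at the remaining non-cloud intermediate nodes (where $b_i^{k,s}(\x)=0$), one sees that $\bar{\r}^{k,s}$ is a nonnegative flow obeying the exact conservation law \eqref{mediacons2-2}. I would then delete any flow supported on directed cycles, producing an acyclic flow $\hat r^{k,s}\le\bar r^{k,s}$ with identical divergences, and finally consolidate everything onto the first path copy by setting $r_{ij}^{k,s,1}:=\hat r_{ij}^{k,s}$, $r_{ij}^{k,s,p}:=0$ for $p\ge 2$, and $z:=r$, while leaving $\y$, $\x$, $\z^k$, and $\btheta$ untouched.

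Verifying feasibility is then essentially bookkeeping, each constraint being checked against the original solution. The objective and \eqref{linkcapcons1} depend only on $\sum_p r_{ij}^{k,s,p}=\hat r_{ij}^{k,s}\le\bar r_{ij}^{k,s}$ and hence do not increase; \eqref{zzrelcons} and the bound $z_{ij}^{k,s,p}\le 1$ follow from the valid inequality \eqref{validineq1} together with $z_{ij}^k\le 1$; \eqref{consdelay2funs1} for $p=1$ follows from the valid inequality \eqref{validineq2} since $\hat r\le\bar r$ and $d_{ij}\ge 0$; and the per-path conservation \eqref{SFC3}--\eqref{SFC5} holds for $p=1$ because the net inflow of $\hat r^{k,s}$ at a cloud node equals $b_i^{k,s}(\x)=x_i^{k,s+1}-x_i^{k,s}\in[-x_i^{k,s},\,x_i^{k,s+1}]$ and vanishes elsewhere, while for $p\ge 2$ every constraint holds trivially. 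Because the objective does not increase, optimality of the starting point forces equality, so the constructed point is optimal for {\rm(LP-I')}.

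The crux, and the step I expect to require genuine care, is proving the out-degree bound \eqref{optcond2} for $p=1$, namely $\sum_{j:(i,j)\in\CL}\hat r_{ij}^{k,s}\le 1$ at every node $i$. The argument is that the total supply of $\hat r^{k,s}$ is at most $1$: using \eqref{onlyonenode}, for $s=0$ the unique source $S^k$ injects one unit, for $s=\ell_k$ the sinks absorb $\sum_{v\in\V}x_v^{k,\ell_k}=1$, and for $1\le s<\ell_k$ the total positive divergence is $\sum_{v\in\V}(x_v^{k,s}-x_v^{k,s+1})_+\le\sum_{v\in\V}x_v^{k,s}=1$. Since $\hat r^{k,s}$ is acyclic, the classical flow-decomposition result \cite[Theorem 3.5]{Ahuja1993} writes it as a sum of simple source-to-sink paths with nonnegative weights totalling this supply, with no cycle components remaining; as each simple path traverses a node at most once, the out-degree at any node is bounded by the total path weight, hence by $1$. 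This is precisely the point where both removing the cycles and bounding the flow value through the $\x$-variables are indispensable.
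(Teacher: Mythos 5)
Your proof is correct, but it takes a genuinely different route from the paper's. After the common first step (normalizing an optimal solution of (LP-I') via \cref{obser} so that $\boldsymbol{z}=\boldsymbol{r}$), the paper applies flow decomposition to each per-path flow $\{r_{ij}^{k,s,p}\}_{(i,j)\in\mathcal{L}}$ separately for every fixed $p$, deletes only the cycle components, and asserts \eqref{optcond2} for the resulting solution, thereby preserving the multi-path structure of the original point. You instead aggregate over $p$, cancel cycles in the aggregate flow, and consolidate the whole acyclic flow onto the first path copy --- which is essentially the normalization the paper defers to \cref{lpproperty2} --- so your argument establishes \cref{lppropery1} and much of \cref{lpproperty2} in one pass. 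What your route buys is an explicit justification of the crucial out-degree bound, which the paper's one-line proof leaves implicit: you bound the total supply of the (acyclic) aggregate flow by $1$ using \eqref{onlyonenode} and then invoke the simple-path decomposition, under which each node is traversed at most once per path; the analogous step for the paper's per-path flows would instead have to bound the per-path supply/demand through \eqref{SFC4}--\eqref{SFC5}. The price you pay is a more invasive modification of the solution and a longer feasibility audit, but you discharge it correctly via \eqref{validineq1} and \eqref{validineq2}, and both routes deliver an optimal solution of (LP-I') satisfying \eqref{optcond2}.
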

\begin{proof}
	Let $(\bx, \by, \br, \bz, \btheta)$ be an optimal solution of problem (LP-I') that satisfies \eqref{optcond}. 
	From \eqref{SFC2}--\eqref{SFC5} and the traditional flow decomposition result, 
	the traffic flow in $\{ r_{ij}^{k,s,p}\}_{(i,j) \in \CL}$ can be decomposed into flows with positive data rates on  (possibly) multiple paths from the source to the destination of flow $(k,s)$, and  flows with positive data rates on (possibly) multiple directed cycles.
	Setting $r_{ij}^{k,s,p} =z_{ij}^{k,s,p}= 0$ for the links $(i,j)$ on all circles yields another optimal solution that satisfies \eqref{optcond2}.
	The proof is complete.
\end{proof}

Similar to \cref{nlpproperty}, we provide the following lemma to  characterize a property of the feasible solutions of problem (LP-I').
\begin{lemma}
	\label{lpproperty2}
	Let $ (\bar{\bx},\bar{\by},\bar{\br},\bar{\bz},\bar{\btheta}) $ be a feasible solution of problem {\rm (LP-I')} satisfying \eqref{optcond}--\eqref{optcond2}.
	Then \eqref{tmeq9-1} holds and the point $({\bx},{\by},{\br},{\bz},{\btheta}) $ defined by  \eqref{tmeq6-0}, \eqref{tmeq6-1}, and \eqref{tmeq8-1}
	is also a feasible solution of (LP-I') with 
	the same objective value as that of  $ (\bar{\bx},\bar{\by},\bar{\br},\bar{\bz},\bar{\btheta}) $.
\end{lemma}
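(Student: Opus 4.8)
The plan is to mirror the proof of \cref{nlpproperty} almost line for line, with one essential substitution: the nonlinear coupling \eqref{nonlinearcons} and the normalization \eqref{relalambdaandx11}, which are both unavailable in (LP-I'), are replaced throughout by the two valid inequalities \eqref{validineq1} and \eqref{validineq2} together with the aggregate flow-conservation constraint \eqref{SFC2}. First I would establish \eqref{tmeq9-1}: since both \eqref{validineq1} and the bound $\bar z_{ij}^k\le 1$ belong to (LP-I'), chaining them gives $\bar r_{ij}^{k,s}=\sum_{p\in\CP}\bar r_{ij}^{k,s,p}\le \bar z_{ij}^k\le 1$. This simultaneously guarantees that the common value $\bar r_{ij}^{k,s}$ assigned to $z_{ij}^{k,s,p}$ in \eqref{tmeq8-1} is a legitimate entry of $[0,1]$, which will be needed when I check the bounds.

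Next I would verify feasibility of $({\bx},{\by},{\br},{\bz},{\btheta})$ constraint by constraint, grouped by how the construction acts. Because $\bx,\by,\bz^{k},\btheta$ are copied unchanged and the construction preserves the per-link aggregate $\sum_{p\in\CP}r_{ij}^{k,s,p}=\bar r_{ij}^{k,s}$, the objective value and every constraint that sees the path variables only through this aggregate, or not at all, is inherited immediately: these are \eqref{onlyonenode}--\eqref{nodecapcons}, \eqref{linkcapcons1}, \eqref{E2Ereliability2}, \eqref{delayconstraint}, the aggregate balance \eqref{SFC2}, and the valid inequalities \eqref{validineq1}--\eqref{validineq2}. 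Constraint \eqref{linearcons} holds since on $p=1$ it is the equality $r_{ij}^{k,s,1}=z_{ij}^{k,s,1}=\bar r_{ij}^{k,s}$, while for $p\neq1$ it reads $0\le\bar r_{ij}^{k,s}$. For the two couplings that tie a per-path $z$ to $\bz^{k}$ and to $\btheta$, I substitute $z_{ij}^{k,s,p}=\bar r_{ij}^{k,s}$ and observe that \eqref{zzrelcons} collapses to $\bar r_{ij}^{k,s}\le\bar z_{ij}^k$, valid by \eqref{validineq1}, and that \eqref{consdelay2funs1} collapses to $\bar\theta^{k,s}\ge\sum_{(i,j)\in\CL}d_{ij}\bar r_{ij}^{k,s}$, valid by \eqref{validineq2}. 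This is precisely the step where the valid inequalities are indispensable rather than merely relaxation-strengthening.

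The crux is the per-path conservation block \eqref{SFC3}--\eqref{SFC5} evaluated at $p=1$. For $p\neq1$ every $r_{ij}^{k,s,p}$ vanishes, so the net flow is $0$ and all three cases hold using only $\bx\ge0$. For $p=1$ the net flow at node $i$ equals the sum over $p$ of the original net flows, since $r_{ij}^{k,s,1}=\bar r_{ij}^{k,s}=\sum_{p\in\CP}\bar r_{ij}^{k,s,p}$. The naive move of summing the original per-path inequalities \eqref{SFC4}--\eqref{SFC5} would only yield bounds weaker by the factor $|\CP|$, so instead at a cloud node $i\in\V$ I invoke the aggregate balance \eqref{SFC2}, whose index set $\mathcal{SI}^k$ covers every $i\in\V$ and every $s\in\F^k\cup\{0\}$; this pins the $p=1$ net flow to exactly $b_i^{k,s}(\bar\bx)$, and a short case check confirms that $b_i^{k,s}(\bar\bx)$ respects the one-sided bounds $\le x_i^{k,s+1}$ of \eqref{SFC4} and $\ge -x_i^{k,s}$ of \eqref{SFC5} simply because $\bx\ge0$. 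At a non-cloud node $i\in\I\setminus\V$, summing the original equalities \eqref{SFC3} (each equal to $0$) leaves the net flow at $0$, so \eqref{SFC3} is preserved. I expect this final block — specifically the recognition that \eqref{SFC2} must replace a term-by-term summation — to be the main obstacle; the remaining checks are routine bookkeeping identical in spirit to \cref{nlpproperty}.
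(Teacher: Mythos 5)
Your proof is correct and follows essentially the same route as the paper: copy $\bx,\by,\bz^k,\btheta$, collapse the path flows onto $p=1$, and observe that the aggregated constraints are preserved while the per-path couplings \eqref{zzrelcons} and \eqref{consdelay2funs1} survive precisely because of the valid inequalities \eqref{validineq1} and \eqref{validineq2}. You are in fact more careful than the paper on two points --- deriving \eqref{tmeq9-1} from \eqref{validineq1} and $\bar z_{ij}^k\le 1$ rather than from \eqref{optcond2}, and spelling out the \eqref{SFC3}--\eqref{SFC5} check via the aggregate balance \eqref{SFC2} --- but these are refinements of, not departures from, the paper's argument.
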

\begin{proof}
	Eq. \eqref{tmeq9-1} directly follows from \eqref{optcond2}.
	From the feasibility of $(\bar{\bx},\bar{\by},\bar{\br},\bar{\bz},\bar{\btheta})$ for problem (LP-I') and the definition of $(\boldsymbol{x},\boldsymbol{y},\boldsymbol{r},\boldsymbol{z},\boldsymbol{\theta})$ in \eqref{tmeq6-0}, \eqref{tmeq6-1}, and \eqref{tmeq8-1},
	(LP-I') has the same objective values at points $ ({\bx},{\by},{\br},{\bz},{\btheta}) $ and  
	$ (\bar{\bx},\bar{\by},\bar{\br},\bar{\bz},\bar{\btheta}) $, and 
	constraints \eqref{onlyonenode}--\eqref{nodecapcons}, \eqref{linkcapcons1}, \eqref{E2Ereliability2}, \eqref{delayconstraint}, and  \eqref{linearcons}--\eqref{rzboundsR}  hold at $(\boldsymbol{x},\boldsymbol{y},\boldsymbol{r},\boldsymbol{z},\boldsymbol{\theta})$.
	From \eqref{validineq1}, \eqref{tmeq9-1}, and \eqref{tmeq8-1}, constraint \eqref{zzrelcons} holds at $ ({\bx},{\by},{\br},{\bz},{\btheta}) $; and from \eqref{validineq2}, \eqref{tmeq9-1}, and \eqref{tmeq8-1}, constraint \eqref{consdelay2funs1} also holds true.
	This proves that $(\boldsymbol{x},\boldsymbol{y},\boldsymbol{r},\boldsymbol{z},\boldsymbol{\theta})$ is  also a feasible solution of (LP-I') with the same objective value as that of $(\bar{\bx},\bar{\by},\bar{\br},\bar{\bz},\bar{\btheta})$.
\end{proof}
\cref{lpproperty2} implies that, if relaxation (LP-I') is feasible,    
then it must have an optimal solution $(\boldsymbol{x},\boldsymbol{y},\boldsymbol{r},\boldsymbol{z},\boldsymbol{\theta})$ such that \eqref{tmeq9-1}, \eqref{tmpeq2}, \eqref{tmpeq3}, and  \eqref{optcond} hold. 
Therefore, we can introduce new variables $r_{ij}^{k,s}:=r_{ij}^{k,s,1}$, and remove variables $\{r^{k,s,p}\}$, $\{ z_{ij}^{k,s,p} \}$, and  $\{r_{ij}^{k,s,p}\}$ (using Eqs. \eqref{tmpeq2}, \eqref{tmpeq3}, and  \eqref{optcond}) from problem (LP-I').
In addition, constraints \eqref{SFC2}--\eqref{SFC5}, \eqref{linkcapcons1}, \eqref{zzrelcons}, \eqref{consdelay2funs1}, and \eqref{rzboundsR} can be replaced by \eqref{mediacons2-2}, \eqref{linkcapcons2}, \eqref{zzrelcons2},  \eqref{consdelay2funs2}, and \eqref{NLPrzbounds}, respectively, and
constraints \eqref{linearcons} and \eqref{validineq1}--\eqref{validineq2} can be removed from the problem.  
Then, problem (LP-I') reduces to problem \eqref{lp2}. 
This, together with \cref{lppropery1}, implies that problem \eqref{lp} is equivalent to problem \eqref{lp2}.

}{}

\def\longreference{0}
\ifthenelse{\longreference = 1}{
	\newcommand{\JCST}{{IEEE communications surveys \& tutorials}\xspace}
	\newcommand{\JTNSM}{{IEEE Transactions on Network and Service Management}\xspace}
	\newcommand{\JTN}{{IEEE/ACM Transactions on Networking}\xspace}
	\newcommand{\JSAC}{IEEE Journal on Selected Areas in Communications\xspace}
	\newcommand{\JTSP}{IEEE Transactions on Signal Processing\xspace}
	\newcommand{\JCCR}{Computer Communication Review\xspace}
	\newcommand{\JCE}{Chinese Journal of Electronics\xspace}
	\newcommand{\JTWC}{IEEE Transactions on Wireless Communications\xspace}
	\newcommand{\JTMC}{IEEE Transactions on Mobile Computing\xspace}
	\newcommand{\JIJCS}{International Journal of Communication Systems\xspace}
	\newcommand{\JOR}{Operations Research\xspace}
	\newcommand{\JMP}{Mathematical Programming\xspace}
	\newcommand{\JICM}{IEEE Communications Magazine\xspace}
	\newcommand{\JICST}{IEEE Communications Surveys \& Tutorials}
	\newcommand{\JMPC}{Mathematical Programming Computation}
	\newcommand{\JTII}{IEEE Transactions on Industrial Informatics}
	\newcommand{\JACS}{ACM Computing Surveys}

	\newcommand{\PROC}{Proceedings of\xspace}
	
	\newcommand{\JAN}{{January}\xspace}
	\newcommand{\FEB}{Feburary\xspace}
	\newcommand{\MAR}{March\xspace}
	\newcommand{\APR}{April\xspace}
	\newcommand{\MAY}{May\xspace}
	\newcommand{\JUN}{June\xspace}
	\newcommand{\JUL}{July\xspace}
	\newcommand{\AUG}{August\xspace}
	\newcommand{\SEP}{September\xspace}
	\newcommand{\OCT}{October\xspace}
	\newcommand{\NOV}{November\xspace}
	\newcommand{\DEC}{December\xspace}
}
{
	\newcommand{\JCST}{{IEEE Commun. Surv. Tut.}\xspace}
	\newcommand{\JTNSM}{{IEEE Trans. Netw. Service Manag.}\xspace}
	\newcommand{\JTN}{{IEEE/ACM Trans. Netw.\xspace}}
	\newcommand{\JSAC}{IEEE J. Sel. Areas Commun.\xspace}
	\newcommand{\JTSP}{IEEE Trans. Signal Process.\xspace}
	\newcommand{\JCCR}{Comput. Commun. Rev.\xspace}
	\newcommand{\JCE}{Chin. J. Electron.\xspace}
	\newcommand{\JTWC}{IEEE Trans. Wirel. Commun.\xspace}
	\newcommand{\JTMC}{IEEE Trans. Mob. Comput.\xspace}
	\newcommand{\JIJCS}{Int. J. Commun. Syst.\xspace}
	\newcommand{\JOR}{Oper. Res.\xspace}
	\newcommand{\JMP}{Math. Program.\xspace}
	\newcommand{\JICM}{IEEE Commun. Mag.\xspace}
	\newcommand{\JICST}{IEEE Commun. Surv. Tutor.}
	\newcommand{\JMPC}{Math. Program. Comput.}
	\newcommand{\JTII}{IEEE Trans. Ind. Inform.}
	\newcommand{\JACS}{ACM Comput. Surv.}

	\newcommand{\PROC}{Proc.\xspace}
	
	\newcommand{\JAN}{Jan.\xspace}
	\newcommand{\FEB}{Feb.\xspace}
	\newcommand{\MAR}{Mar.\xspace}
	\newcommand{\APR}{Apr.\xspace}
	\newcommand{\MAY}{May\xspace}
	\newcommand{\JUN}{Jun.\xspace}
	\newcommand{\JUL}{Jul.\xspace}
	\newcommand{\AUG}{Aug.\xspace}
	\newcommand{\SEP}{Sep.\xspace}
	\newcommand{\OCT}{Oct.\xspace}
	\newcommand{\NOV}{Nov.\xspace}
	\newcommand{\DEC}{Dec.\xspace}
	
}

\bibliographystyle{IEEEtran}

\end{document}